\documentclass[11pt]{article}
\usepackage[hmargin=1in,vmargin=1in]{geometry}
\usepackage[dvipsnames,usenames,table]{xcolor}
\usepackage{hyperref}
\hypersetup{colorlinks=true, urlcolor=Blue, citecolor=Green, linkcolor=BrickRed, breaklinks, unicode}
\usepackage{amsmath,amssymb,amsfonts,amsthm}
\usepackage{graphicx}
\usepackage{bbm} 
\usepackage{xspace} 
\usepackage{bbm}
\usepackage{framed} 
\usepackage{thmtools}
\usepackage{thm-restate}
\usepackage{nicefrac} 
\usepackage[capitalise]{cleveref} 
\usepackage{mathtools}

\usepackage{setspace}
\newtheorem{theorem}{Theorem}[section]
\newtheorem{corollary}[theorem]{Corollary}
\newtheorem{lemma}[theorem]{Lemma}
\newtheorem{proposition}[theorem]{Proposition}
\newtheorem{definition}[theorem]{Definition}
\newtheorem{remark}[theorem]{Remark}
\newtheorem{hypothesis}[theorem]{Hypothesis}
\newtheorem{claim}[theorem]{Claim}

\newtheorem{fact}[theorem]{Fact}

\newtheorem{question}[theorem]{Question}

\newcommand{\R}{\mathbb R}

\newcommand{\eps}{\varepsilon}

\newcommand{\poly}{\mathrm{ poly}}
\newcommand{\cost}{\mathrm{cost}}
\newcommand{\polylog}{\mathrm{ polylog}}
\renewcommand{\tilde}{\widetilde}

\newcommand{\tG}{\tilde{G}}

\newcommand{\td}{\tilde{d}}

\newcommand{\tV}{\tilde{V}}

\newcommand{\E}{\mathbb{E}}

\newcommand{\calS}{\mathcal {S}}

\newcommand{\Oh}{{O}} 
\newcommand{\tOh}{\widetilde{\Oh}} 

\DeclareMathOperator{\dist}{dist}

\newcommand{\e}{\mathbf{e}}

\title{Near-Optimal Bounds for Parameterized Euclidean k-means}
 \author{Vincent Cohen-Addad\\ Google Research\\ \texttt{cohenaddad@google.com}\and Karthik C.\ S.\footnote{This work was supported by the National Science Foundation under Grants CCF-2313372 and CCF-2443697, a grant from the Simons Foundation, Grant Number 825876, Awardee Thu D. Nguyen, and partially funded by the Ministry of Education and Science of Bulgaria's support for INSAIT, Sofia University ``St. Kliment Ohridski'' as part of the Bulgarian National Roadmap for Research Infrastructure.}\\ Rutgers University\\\vspace{0.1in} \texttt{karthik.cs@rutgers.edu}\and David Saulpic \\ Université Paris Cité, CNRS\\ \texttt{david.saulpic@irif.fr}\and Chris Schwiegelshohn\footnote{This work was partially supported by the Independent Research Fund
 Denmark (DFF) under a Sapere Aude Research Leader grant No 1051-00106B and by a Google Research Award.}\\ Aarhus University\\ \texttt{schwiegelshohn@cs.au.dk} }
\date{}

\begin{document}

\maketitle
\begin{abstract}
The $k$-means problem is a classic objective for 
modeling clustering in a metric space. Given a set of points in a metric
space, the goal is to find $k$ representative points so as to minimize the sum of the squared distances from each point to its closest representative.
In this work, we study the approximability of $k$-means in Euclidean spaces parameterized by the number of clusters, $k$.\vspace{0.1cm}

In seminal works, de la Vega, Karpinski, Kenyon, and Rabani [STOC'03] and Kumar, Sabharwal, and Sen [JACM'10] showed how to obtain a $(1+\varepsilon)$-approximation for
high-dimensional Euclidean $k$-means in time $2^{(k/\varepsilon)^{O(1)}} \cdot dn^{O(1)}$.\vspace{0.1cm}

In this work, we introduce a new fine-grained hypothesis called \emph{Exponential Time for Expanders Hypothesis} (XXH) which roughly asserts that there are no non-trivial exponential time approximation algorithms for the vertex cover problem on near perfect vertex expanders. Assuming XXH, we close the above long line of work on approximating Euclidean $k$-means by showing that there is no $2^{(k/\varepsilon)^{1-o(1)}} \cdot n^{O(1)}$ time algorithm
achieving a $(1+\varepsilon)$-approximation for $k$-means in Euclidean space. 
This lower bound is tight as it matches the algorithm given by Feldman, Monemizadeh, and Sohler [SoCG'07] whose runtime is $2^{\tilde O (k/\varepsilon)} + O(ndk)$.  \vspace{0.1cm}

Furthermore, assuming XXH, we show that the seminal $\Oh(n^{kd+1})$ runtime exact algorithm of Inaba, Katoh, and Imai [SoCG'94] for $k$-means is optimal for small values of $k$.
\end{abstract}

\thispagestyle{empty}
\clearpage

\setcounter{page}{1}

\section{Introduction}

The $k$-clustering problem represents a fundamental task in data mining and machine learning, providing a model for grouping data points based on similarity. Given a set of points $P$ in a metric space $(X, \Delta)$, the objective is to select a set $C \subseteq X$ of $k$ points, referred to as \emph{centers}, and the goal is to minimize an objective function typically defined as the sum of the $z$-th powers of the distances from each point $p \in P$ to its nearest center in $C$. This general formulation encompasses several widely studied problems: $k$-median corresponds to $z=1$, $k$-means uses $z=2$ (minimizing the sum of squared distances), and $k$-center is when $z \rightarrow \infty$ (minimizing the maximum distance).

The algorithmic exploration of $k$-means, arguably the most popular variant, gained significant traction with the seminal work of Lloyd~\cite{lloyd1982least}. Since then, the problem has attracted substantial attention across diverse research communities, including operations research, machine learning, and theoretical computer science.

The computational complexity of $k$-means problem inherently depends on the structure of the underlying metric space $(X,d)$ and this paper focuses on the setting where points reside in Euclidean space $\mathbb{R}^d$. From a complexity theoretic perspective, Euclidean $k$-means is known to be NP-Hard even under seemingly restricted conditions, such as when the points lie in the Euclidean plane ($\mathbb{R}^2$) but $k$ is part of the input~\cite{megiddo1984complexity, MahajanNV12}, or when $k=2$ but the dimension $d$ is large~\cite{dasgupta2009random,AKP24}. On the algorithmic side, the best-known exact algorithm for Euclidean $k$-means was proposed by Inaba, Katoh, and Imai~\cite{InabaKI94}, running in time $\Oh(n^{kd+1})$. Remarkably, this has remained the state-of-the-art exact algorithm for over three decades.

Given the hardness of finding exact solutions, particularly in high dimensions, a significant line of research has focused on developing efficient \mbox{$(1+\eps)$-approximation} algorithms. Early breakthroughs by Fernandez de la Vega, Karpinski, Kenyon, and Rabani \cite{VegaKKR03} and Kumar, Sabharwal, and Sen \cite{KumarSS10} demonstrated that a $(1+\eps)$-approximation for high-dimensional Euclidean $k$-means could be achieved in time $2^{(k/\eps)^{\Oh(1)}} \cdot dn^{\Oh(1)}$. Subsequent improvements, leveraging techniques such as \textit{coresets} by Feldman, Monemizadeh, and Sohler \cite{FeldmanMS07}, refined the runtime to $\Oh(ndk) + 2^{\tOh(k/\eps)}$. Further work by Jaiswal, Kumar, and Sen \cite{Jaiswal0S14}, followed by Jaiswal, Kumar, and Yadav \cite{jaiswal2015improved}, utilized a simpler approach based on \emph{$D^2$-sampling} to achieve $\Oh(nd \cdot 2^{\tOh(k/\eps)})$ runtime (see also \cite{AbbasiBBCGKMSS23,bhore2025coresetstrikesbackimproved} for a generalized approach). 

All these different approaches seem to have hit a running-time barrier at $2^{k/\eps}$ (up to logarithmic factors in the exponent). This contrasts sharply with the related $k$-center problem, where Agarwal and Procopiuc \cite{agarwal2002exact} showed that a $(1+\eps)$-approximation can be obtained much faster, in time $n \log k + (k/\eps)^{\Oh(k^{1-1/d})}$. This runtime for $k$-center is known to be essentially optimal under the Exponential Time Hypothesis (ETH)~\cite{DBLP:conf/compgeom/ChitnisS22}. Furthermore, there exists a close relationship between $k$-means and the Partial Vertex Cover (PVC) problem; indeed, many known hard instances for $k$-means are derived from hard PVC instances \cite{AwasthiCKS15,LeeSW17,Cohen-AddadS19}. Interestingly, Manurangsi~\cite{manurangsi2018note} demonstrated that PVC admits a $(1+\eps)$-approximation in time $\eps^{-\Oh(k)} n^{\Oh(1)}$ (where $k$ is the solution size). This implies that the specific $k$-means instances derived from PVC can be approximated more efficiently than the runtime of the current best $(1+\eps)$-approximation algorithms  for  Euclidean $k$-means. 

This discrepancy motivates the central question of our work: When parameterized by the number of clusters $k$, is the current exponential dependency in $k/\varepsilon$ for approximating Euclidean $k$-means inherent, signifying a fundamental computational gap compared to $k$-center and PVC-related instances? Or, is it possible to devise a significantly faster approximation algorithm, potentially achieving a runtime closer to those known for related problems and offering a more unified algorithmic picture for $k$-clustering? More concretely:

\begin{question}\label{q:approx}
 Is it possible to design a $(1+\eps)$-approximation to $k$-means running in time $O(nd) + \eps^{-O(k)}$?
\end{question}

A negative answer to the above question would also imply progress toward understanding the hardness of exact algorithms. Currently, we are not aware of any fine-grained lower bound that matches the algorithm of Inaba et al.~\cite{InabaKI94}. The closest result we know of is by Cohen-Addad, de Mesmay, Rotenberg, and Roytman~\cite{CohenaddadSODA18}, who showed that if the centers must be selected from a prescribed set of ``candidate centers'', then no exact algorithm with a runtime of \(n^{o(k)}\) exists for \(k\)-median or \(k\)-means, even when the dimension is as low as four. It remains an open problem whether a similar result holds for the classic version where centers can be placed anywhere in \(\mathbb{R}^d\), and whether an \(n^{o(d)}\) lower bound also applies when \(k\) is constant.

A negative answer to \Cref{q:approx} would
answer those two questions: modern dimension-reduction \cite{MakarychevMR19} and coreset computation \cite{Cohen-AddadSS21} reduce the dimension and the number of distinct input points while preserving the clustering cost within a \((1 \pm \varepsilon)\) factor. Thus, as we show formally in \Cref{sec:exact}, any lower bound for approximation algorithms translates into a lower bound for exact algorithms, making progress toward demonstrating the optimality of the algorithm from~\cite{InabaKI94}.

Negative answers to such questions typically arise from \emph{fine-grained complexity} assumptions, such as the Exponential Time Hypothesis (ETH) \cite{IP01,IPZ01}, or the more modern Gap Exponential Time Hypothesis (Gap-ETH) \cite{MR16,D16}. Indeed these two assumptions have been very fruitful in explaining the intractability of various important geometric optimization problems \cite{Marx08,LokshtanovMS11,BergBKMZ20,Kisfaludi-BakNW21}. However, \Cref{q:approx} involves two parameters, \(k\) and \(\varepsilon\). Suppose we aim to rule out \(2^{(k/\varepsilon)^{1-o(1)}} \cdot \text{poly}(n, d)\)-time algorithms, then our parameter of interest is the quantity \(k/\varepsilon\), where \(k\) and \(\varepsilon\) are free variables constrained only by the fixed ratio \(k/\varepsilon\). To the best of our knowledge, such results are not known under ETH or Gap-ETH. In fact, proving such results entail several technical challenges which we discuss in \Cref{sec:introtechdiff}.


\subsection{Our Results}
In this paper we make substantial progress towards answering these questions.
 We introduce a hypothesis morally capturing a Gap Exponential Time Hypothesis for Vertex Cover on Near Perfect Vertex Expanders, and we call it \emph{E\textbf{\underline x}ponential Time for E\textbf{\underline x}panders \textbf{\underline H}ypothesis} (denoted {\bf XXH}). This hypothesis is formally stated in \cref{sec:hyp}, but an informal discussion about the statement of this hypothesis is given in \Cref{sec:introXXH}. Assuming XXH, we are able to answer \cref{q:approx} in the following way:

\begin{theorem}[Answer to \cref{q:approx}; informal statement]\label{thm:introlb}
Assuming XXH, for every $\beta>0$, there is no randomized algorithm running in time $2^{\left(k/\varepsilon\right)^{1-\beta}} \cdot \poly(n,d)$ that can $(1+\varepsilon)$-approximate the Euclidean $k$-means problem whenever $k\gg 1/\varepsilon$.

\end{theorem}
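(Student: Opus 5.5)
The plan is a gap-preserving reduction from the vertex cover instances in XXH to Euclidean $k$-means. The construction is the standard PVC-style embedding used in \cite{AwasthiCKS15,LeeSW17,Cohen-AddadS19}, but tuned so that the number of clusters $k$ and the gap $\eps$ satisfy $k/\eps \approx n$, the number of vertices of the expander. XXH will say, roughly, that given an $n$-vertex near-perfect vertex expander $G$ of bounded degree $\Delta$, no algorithm running in time $2^{n^{1-o(1)}}$ can distinguish two cases: a vertex cover of size $\alpha n$ exists, or every vertex cover has size at least $(\alpha+\delta)n$, with constant gap $\delta$. I will take that as the source problem.

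\textbf{Step 1 (encoding).} Map each vertex $v$ to the basis vector $\e_v\in\R^n$, and each edge $(u,v)$ to the point $\e_u+\e_v$. A cluster whose points are the edges incident to a common vertex $v$ (a star) has a centroid close to $\e_v$ plus a small perturbation. Its $k$-means cost is about $|S|\cdot(1-1/|S|)$ for a star with $|S|$ edges. A cluster containing edges that do not share a vertex costs at least an additive constant more per edge.

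To handle $k\gg 1/\eps$, I will not map one vertex cover to $k$ clusters directly. Instead I partition the vertex set into $k$ groups and use $1/\eps$-sized local structure, or take $k\eps$ disjoint copies. The simplest choice is to take $t$ disjoint copies of an expander on $n_0$ vertices placed in orthogonal, far-apart coordinates, and set $k=t\cdot \alpha n_0$. The expansion property will then have to be argued per copy.

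\textbf{Step 2 (completeness).} A cover of size $\alpha n$ gives a partition of the edges into $\alpha n$ stars. This yields cost $|E| - \alpha n$, up to exact lower-order terms. So in the YES case the optimum is at most $|E|-\alpha n$.

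\textbf{Step 3 (soundness).} This is the main obstacle. I must show that every $k$-means solution with $k=\alpha n$ centers costs at least $|E|-\alpha n+\delta' n$. Centers can be arbitrary points, and clusters need not be stars: they may be triangles, or mixtures of stars with stray edges. The near-perfect vertex expansion, meaning small sets have neighborhoods of size nearly $\Delta|S|$, is exactly what rules out dense non-star clusters. The argument I would try first is to classify clusters into "star-like" clusters, where some vertex covers a $1-\gamma$ fraction of the cluster's edges, and the rest. For a non-star-like cluster with $m$ edges spanning vertex set $U$, the cost is about $m - \|\sum\text{edges}\|^2/m$. Expansion implies the edge set is close to a forest, so $\|\sum\|^2$ is small, and each such cluster pays an extra $\Omega(1)$. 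The star-like clusters give a vertex set that covers all but a few edges. Patching the uncovered edges with extra vertices then gives a cover of size $\le k + O(\text{excess cost})$. Contrapositively, any $k$-means solution of cost below $|E|-\alpha n+\delta' n$ yields a vertex cover of size $(\alpha+\delta)n$, with $\delta' = \Theta(\delta)$.

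\textbf{Step 4 (parameters).} The total cost is $\Theta(n\Delta)=\Theta(n)$ and the gap is $\delta' n$. So the relative gap is a constant, and that is too large: we need $\eps$ tiny with $k/\eps\approx n$. I will rescale by taking copies as in Step 1, where only $1/\eps$ copies carry the hard instance, or by adding padding clusters of fixed cost that dilute the gap. With $k=\Theta(n_0 t)$ and relative gap $\eps=\Theta(1/t')$, I will arrange $k/\eps = \Theta(n)$ for the hard instance size $n$. Then a $2^{(k/\eps)^{1-\beta}}\poly(n,d)$ algorithm decides the XXH instance in time $2^{n^{1-\beta}}$, a contradiction. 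Randomization is allowed because XXH will be stated against randomized algorithms. The delicate accounting here is to make the padding cost exactly predictable, so that $\eps$ is not lost. I would use far-apart singleton points, each forced to take its own center, to absorb extra centers.
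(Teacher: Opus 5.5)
There is a genuine gap, and it is in Step~4, which is where the entire difficulty of the statement lies. In your reduction the number of clusters equals the cover size, $k=\alpha n_0=\Theta(n_0)$, and none of your rescaling devices can lower it.
\begin{itemize}
\item $t$ disjoint copies multiply $k$, the total cost and the gap by the same factor. So $\varepsilon$ stays constant while $k/\varepsilon=\Theta(tn_0)$ outgrows the size $n_0$ of the XXH instance.
\item Far-apart singletons only increase $k$.
\item Fixed-cost padding lowers $\varepsilon$ to $\Theta(n_0/C_{\mathrm{tot}})$ but keeps $k\ge\alpha n_0$, so $k/\varepsilon=\Omega(C_{\mathrm{tot}})=\Omega(n_0/\varepsilon)$.
\end{itemize}
A $2^{(k/\varepsilon)^{1-\beta}}$-time algorithm then spends $2^{(n_0/\varepsilon)^{1-\beta}}$ on your instance. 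This contradicts XXH only if $1/\varepsilon\le n_0^{O(\beta)}$. Since the statement must hold for every $\beta$, you are confined to the regime $1/\varepsilon=k^{o(1)}$, which is essentially a $2^{k^{1-o(1)}}$ bound. You do not reach the trade-off curve where $1/\varepsilon$ is polynomial in $k$ (e.g.\ $1/\varepsilon=k^{4/5}$), which Theorem~\ref{thm:lb_k} covers. This is a real obstruction, not bookkeeping. As the paper points out, instances whose clusters are essentially stars around cover vertices are partial vertex cover instances. These admit $(1+\varepsilon)$-approximations in time $\varepsilon^{-O(k)}n^{O(1)}$ \cite{manurangsi2018note}, far below $2^{(k/\varepsilon)^{1-\beta}}$ in that regime.

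The missing idea is to \emph{trade gap for clusters}. Use $k=\Theta(\varepsilon n)\ll n/2$ clusters, each a union of about $n/(2k)$ stars, so that each cluster saves roughly $d$. The additive gap is then $\Theta(\delta^5kd)$ against a total cost $\Theta(nd)$, giving $\varepsilon=\Theta(k/n)$ and $k/\varepsilon=\Theta(n)$ over a wide range of $k$.

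You mention partitioning the vertex set into $k$ groups but drop it, and with your plain encoding $\e_u+\e_v$ it fails soundness. In a NO instance, take $k-1$ clusters that are single stars (each saving $d+1$) plus one giant cluster holding all remaining edges. The giant cluster saves about $2d$, since almost every vertex keeps nearly full degree in it. This clustering costs about $2|E|-kd-k$, as cheap as the YES clustering up to an additive $O(d)\ll\delta kd$, yet it corresponds to no vertex cover.

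The paper's fix is a random colouring $\rho:[n]\to[k]$ with the embedding $\e_u+\e_v+\e_{\rho(u)+n}+\e_{\rho(v)+n}$. The colour coordinates penalize clusters lacking a dominant colour (the $\gamma_i$ term in Lemma~\ref{lem:costCluster}). They also exclude giant clusters (Lemma~\ref{lem:giant_cluster_strict}) and force good clusters to be nearly monochromatic (Lemma~\ref{lem:monochrome}). Small-set vertex expansion, with the polylogarithmic degree that XXH actually provides rather than bounded degree, then does two things. It caps $\frac{1}{|E_i|}\sum_v d_{i,v}^2$ by $(1+5\sqrt{\alpha})d$. It also supplies the unique-neighbour structure from which $n/2$ vertices covering almost all edges are extracted (Lemma~\ref{lem:smallVC}). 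Your Step~3, with one cluster per cover vertex, is the classical analysis and never confronts any of this.
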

The formal statement of the lower bound is \cref{thm:lb_k}. 
\begin{sloppypar}This result might appear surprising due to the known close connection between vertex cover and \(k\)-means. Specifically, a partial vertex cover can be approximated within a \((1+\varepsilon)\) factor in \(\varepsilon^{-O(k)} n^{O(1)}\) time~\cite{manurangsi2018note} (where \(k\) is the solution size). This might lead one to expect an approximation scheme for \(k\)-means with similar complexity. 
\end{sloppypar}

\begin{sloppypar}For example, reductions such as those in Cohen-Addad et al.~\cite{CohenaddadSODA18} and Awasthi et al.~\cite{AwasthiCKS15} (see also~\cite{LeeSW17}) transform the vertex cover instance into a $k$-means instance by creating a point for each edge of the input graph and positioning the points in space such that pair of edges sharing a common vertex are close to each other. The hardness proof then relies on distinguishing between two kinds of instances:\end{sloppypar}

(1) Instances derived from graphs admitting a vertex cover of size \(k\), i.e., the edge set can be partitioned to $k$ stars. In this case, the corresponding point set can be partitioned into \(k\) clusters such that all points within the same cluster are close (say distance 1, representing edges covered by the same vertex from the cover).

(2) Instances derived from graphs where any set of \(k\) vertices leaves at least a constant fraction (say \(\delta\) fraction) of the edges uncovered. Consequently, in any partitioning of the corresponding point set into \(k\) clusters, either a constant fraction of these clusters contain points that are far apart (say distance 3, i.e., representing pairs of edges that do not share one of the selected \(k\) vertices) or a few clusters contain a lot of points (and most pairs are far apart).

The hardness for \(k\)-means then follows from the difficulty of distinguishing between these two cases based on the clustering cost. Specifically, it involves separating instances admitting a lower cost (e.g., \(n\), associated with Case 1) from those necessitating a higher cost (e.g., \((1-\delta)n + 2\delta n = n+\delta n\), associated with Case 2), where \(\delta\) is related to the minimum fraction of uncovered edges in the latter case.

However, since partial vertex cover can be approximated within 
a $(1+\eps)$ factor, for any $\eps>0$, in time $\eps^{-O(k)} n^{O(1)}$, and since 
  the objective scales linearly with the number
of edges not covered (i.e., clients that are at distance 3,
instead of 1, from their center), this type of instance can be solved in time $\eps^{-O(k)} n^{O(1)}$. Therefore, one cannot expect
to boost the lower bound running time from the partial vertex cover result to $2^{(k/\eps)^{1-o(1)}}$.

Thus, to show that the  $k$-means problem require an exponential dependency in
$\eps$, we need to develop a novel reduction framework.

To establish the above conditional lower bound for the Euclidean \(k\)-means problem, we first reduce the vertex cover problem in the non-parameterized setting to an intermediate graph problem, essentially in the parameterized setting, but with reduced structure so as to fail standard algorithmic techniques for partial \(k\)-vertex cover, while retaining enough structure to embed the graph problem into the Euclidean \(k\)-means problem. We direct the reader to \cref{sec:overview} for further details where we also try to clarify how vertex expansion in the vertex cover problem helps us overcome several technical difficulties.

\Cref{thm:introlb} implies lower-bounds for exact algorithms as well:
\begin{corollary}
\label{cor:exact}
Assuming XXH, for every $\beta>0$, there is no algorithm that for any $k,d$
solves the Euclidean $k$-means problem in time $n^{(k\sqrt{d})^{1-\beta}}$, nor in time $n^{d^{1-\beta}}$.
\end{corollary}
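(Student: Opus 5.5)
The plan is to argue by contradiction and turn a fast exact algorithm into a fast $(1+\eps)$-approximation, contradicting \Cref{thm:introlb} (formally \cref{thm:lb_k}). Start from an instance $P\subseteq\R^d$ of Euclidean $k$-means with $n$ points, $k\gg 1/\eps$, and target ratio $k/\eps$. The first step applies the dimension reduction of Makarychev, Makarychev and Razenshteyn~\cite{MakarychevMR19}. It maps $P$ by a random linear map into $\R^{d'}$ with $d'=O(\eps^{-2}\log(k/\eps))$ and preserves the cost of every $k$-clustering (every partition into $k$ parts) within $1\pm\eps$. The second step computes a coreset in $\R^{d'}$ via~\cite{Cohen-AddadSS21}. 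This is a weighted set $\Omega$ of $n'=\tilde O(k\eps^{-2}\cdot\min(\eps^{-2},\sqrt k))$ points. For the exact algorithm we need integer multiplicities, so we round the weights and replicate points, keeping the number of distinct points polynomial in $k/\eps$. Alternatively, we can note that the Inaba et al.-style algorithm handles weights without changing its $n^{kd}$ dependence, and treat a weighted exact algorithm as equivalent. Both steps run in $\poly(n,d)$ time and together keep the optimum within a $(1+O(\eps))$ factor.

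Next we run the hypothetical exact algorithm on the reduced instance. It takes time $(n')^{(k\sqrt{d'})^{1-\beta}}=2^{O(\log(k/\eps))\cdot (k\sqrt{d'})^{1-\beta}}$, and $k\sqrt{d'}=\tilde O(k/\eps)$. Hence the running time is $2^{(k/\eps)^{1-\beta}\polylog(k/\eps)}\le 2^{(k/\eps)^{1-\beta/2}}$ for large $k/\eps$. The resulting partition of the coreset induces centers: the means of the clusters in $\R^{d'}$. Pulling back through the dimension reduction is the subtle part. The guarantee of~\cite{MakarychevMR19} applies to partitions, so we take the partition of $P$ into Voronoi cells of the centers lifted to $\R^d$. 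Concretely, we assign each original point to the cluster of its image's nearest center and recompute the means in $\R^d$. The cost of this partition is within $1+O(\eps)$ of optimal. In total we get a $(1+O(\eps))$-approximation in time $2^{(k/\eps)^{1-\beta/2}}\poly(n,d)$, which contradicts \Cref{thm:introlb} after rescaling $\eps$.

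For the $n^{d^{1-\beta}}$ bound we use the same pipeline in the regime where $k$ is a constant multiple of $1/\eps$ (still $k\gg1/\eps$). Then $d'=\tilde O(\eps^{-2})=\tilde O(k^2)$, so the exact algorithm runs in $2^{\tilde O((d')^{1-\beta})}=2^{\tilde O(k^{2-2\beta})}$. This is not yet below $2^{(k/\eps)^{1-\beta'}}=2^{\Theta(k^{2(1-\beta')})}$ as required, since for $k=\Theta(1/\eps)$ we have $k/\eps=\Theta(k^2)$. So the exponents match, and taking $\beta'<\beta$ absorbs the polylog factors. If the formal \cref{thm:lb_k} requires $k$ polynomially larger than $1/\eps$, one would instead pad $d'$ with dummy coordinates, or choose $\eps$ so that $\eps^{-2}$ dominates, and redo the bookkeeping.

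The main obstacle is this parameter bookkeeping: checking that the formal lower bound holds in a regime where $k\sqrt{d'}$ and $d'$ (with coreset size $n'$ contributing only $\log$ factors in the exponent) are both $(k/\eps)^{1+o(1)}$. A secondary issue is handling weights and dimension-reduction pullback rigorously; for that I would rely on the partition-based guarantee.
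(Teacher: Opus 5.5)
Your argument for the $n^{(k\sqrt d)^{1-\beta}}$ bound is essentially the paper's. Both combine a coreset with the dimension reduction of \cite{MakarychevMR19}; the paper applies them in the opposite order, which is immaterial. Both get $k\sqrt{d'}=\tilde O(k/\eps)$ and time $2^{(k/\eps)^{1-\beta}\polylog(k/\eps)}$. Your pull-back step is correct but unnecessary, since \cref{thm:lb_k} only asks for an estimate of the optimal cost.

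The $n^{d^{1-\beta}}$ part has a genuine flaw in the bookkeeping. You instantiate the lower bound at $k=\Theta(1/\eps)$, but \cref{thm:lb_k} does not cover that regime. It assumes $\tilde k(n)\,\tilde\eps(n)=\omega((\log n)^{2L})$, so $k\eps\to\infty$.

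Your fallbacks do not repair this. Padding $d'$ with dummy coordinates can only slow the hypothetical algorithm down. Making $\eps^{-2}$ dominate $k/\eps$ would force $1/\eps\gg k$, contradicting the hypothesis.

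The missing observation is that you need only an upper bound on $d'$, not a matching one. The hypothetical running time is increasing in $d$. Every regime of \cref{thm:lb_k} has $k\ge 1/\eps$, i.e.\ $\eps^{-2}\le k/\eps$. Hence the same reduction already gives $d'=O(\eps^{-2}\log(k/\eps))=O\bigl((k/\eps)\log(k/\eps)\bigr)$, and therefore $|\Omega|^{(d')^{1-\beta}}\le 2^{(k/\eps)^{1-\beta}\polylog(k/\eps)}\le 2^{(k/\eps)^{1-\beta/2}}$. The larger $k\eps$ is, the better this bound gets.

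With that fix your route is valid. It differs from the paper only in the choice of reduction. The paper uses the $O(k/\alpha)$-dimensional projection of \cite{cohen2015dimensionality}, which gives $d=O(k/\eps)$ without appealing to $k\ge 1/\eps$. Your version lets a single reduction serve both bounds.

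A smaller point on weights: what enters the exponent is the logarithm of the total number of points after replication, not the number of distinct points. This is harmless here only because \cref{thm:lb_k} assumes $\tilde\eps(n)=o((\log n)^{-\omega(1)})$, which gives $\log n=(k/\eps)^{o(1)}$. The paper's proof also glosses over this point.
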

Note that this result contrasts the running time of $k$-means clustering with that of $k$-center, which admits an exact algorithm running in time $O\left(n^{k^{1-1/d}}\right)$ \cite{agarwal2002exact}. The current state of the art algorithm enumerating over all Voronoi partitions in time $O(n^{kd+1})$ by Inaba, Katoh and Imai~\cite{InabaKI94} is thus a likely candidate for being optimal -- in particular, it is almost optimal for constant $k$.

\subsection{Further Related Work}
\paragraph{Hardness of Clustering.}
As we mentioned previously, the $k$-means and $k$-median problems are NP-hard, even when $k=2$ (and $d$ is large) \cite{dasgupta2009random,AKP24}, or when $d=2$ (and $k$ is large) \cite{megiddo1984complexity, MahajanNV12}. When both parameters are part of the input, the problems become APX-hard \cite{GuK99, JMS02, GuI03, AwasthiCKS15, Cohen-AddadSL21, Cohen-AddadSL22}.
Most techniques to show hardness of approximation are based on reducing from covering problems to clustering problems, for instance through structured instances of max $k$-coverage or set cover.
Recent works have used different approaches: \cite{Cohen-AddadSL21} showed how to use hardness of some coloring problems to prove hardness of approximation for $k$-median and $k$-means in general metric spaces,
and \cite{Cohen-AddadSL22} focused on Euclidean spaces and tried to pinpoint what combinatorial structures allow for gap-preserving embeddings to Euclidean space.

For general metrics, the connection between $k$-clustering and the set cover problem (or rather max-coverage) has been known since the fundamental work of Guha and Khuller~\cite{GuK99}, who established
the best known hardness of approximation bounds.
This connection was observed again when analyzing the parameterized complexity of the problem: Cohen-Addad, Gupta, Kumar, Lee and Li~\cite{Cohen-AddadG0LL19} showed how to approximate $k$-median and $k$-means up to factors
$1+2/e$ and $1+8/e$ respectively and showed that this is tight assuming Gap-ETH (see also~\cite{Cohen-AddadL19,manurangsi2020tight,DBLP:conf/esa/AdamczykBMM019,DBLP:conf/cocoon/ChenHXXZ23,DBLP:journals/jco/HanXDZ22}).

\paragraph{FPT algorithms via Sketching.} The past decades have seen the development of very powerful sketching and compression methods that allow reducing
the dimension to $O(\log k / \varepsilon^{-2})$ \cite{MakarychevMR19} and the number of distinct input points to $\tilde{O}(k \varepsilon^{-z-2})$ ($z=1$ for $k$-median, $z=2$ for $k$-means) via the construction of coresets \cite{FL11, BravermanJKW21, huang2020coresets, Cohen-AddadSS21, stoc22}. Perhaps surprisingly, these bounds are independent of the original input size and dimension and can be computed in near-linear time, which allows for the construction of simple FPT algorithms.
Applying the $O(n^{kd + 1})$ algorithm of \cite{InabaKI94} indeed gives a complexity of $2^{\tilde{O}(k/\varepsilon^2)}$ plus the near-linear time to sketch the input; naively enumerating all partitions yields a running time of $k^{\tilde{O}(k/\varepsilon^{z+2})}$, plus the time to sketch the input.

We crucially remark that the dependency on $\varepsilon$ cannot be substantially improved: \cite{stoc22} showed a lower bound of $\Omega(k\varepsilon^{-2})$ for coresets, and \cite{LarsenN17}  showed the optimality of the dimension reduction.
Therefore, one cannot hope to go below $2^{\Omega(k/\varepsilon^2)}$ and answer \cref{q:approx} using only these techniques.

Other parameters were studied for $k$-means clustering: most notably, the cost has been investigated by Fomin, Golovach and Simonov \cite{FominGS21}. They presented a $D^D \mathrm{poly}(nd)$ exact algorithm for $k$-median, where $D$ is an upper bound on the cost.

\paragraph{From Continuous to Discrete Clustering.}\begin{sloppypar} The other standard technique to design FPT algorithms is to find a small set of candidate centers that contains a near-optimal solution. This approach was used, for instance, by \cite{BadoiuHI02} to obtain the first algorithm running in time \mbox{$2^{(k/\varepsilon)^{O(1)}} d^{O(1)} n \log^{O(k)}(n)$}, by \cite{KumarSS10} to improve the running time, and by \cite{BhattacharyaJK18} for the capacitated clustering problem.
\end{sloppypar}

\paragraph{Approximation Algorithms in Euclidean Spaces.} To compute a $(1+\varepsilon)$-approximation in time polynomial in $n$ and $k$, any algorithm must have a running time at least doubly exponential in $d$, as the problem is APX-hard in dimension $\Omega(\log n)$. The best of these algorithms is from \cite{jacm}, with a near-linear running time of $f(\varepsilon, d) n \operatorname{polylog} n$.
If one sticks to algorithms polynomial in $n, k$ and $d$, the lower bounds on the approximation ratio are $1.06$ for $k$-median and $1.015$ for $k$-means, conditioned on P~$\neq$~NP \cite{Cohen-AddadSL22}. The upper bounds are still quite far: $2.41$ for $k$-median, and $5.96$ for $k$-means \cite{euclideanApprox}.

\subsection{Organization of the paper}
We provide the proof overview of Theorem~\ref{thm:introlb} in Section~\ref{sec:overview}, and then in Section~\ref{sec:prelim} we provide some notations, definition, and tools useful for this paper.
Then, we present in \cref{sec:lb_k} our formal proof of Theorem~\ref{thm:introlb} based on the hypothesis XXH defined in \cref{sec:hyp}. Sections~\ref{sec:completeness} and \ref{sec:soundness} contain the completeness and soundness analysis of reduction underlying Theorem~\ref{thm:introlb} respectively. Finally, in Section~\ref{sec:exact} we prove Corollary~\ref{cor:exact}.

\section{Our Techniques}\label{sec:overview}

We would like to now convey the conceptual and technical ideas that went into proving the lower bound in Theorem~\ref{thm:introlb}. As alluded to earlier in this section, hard instances of Euclidean $k$-means are typically constructed from the Vertex Cover problem, where every edge is mapped to a client and the partition of the edge set (which is the client set) by an optimal vertex cover also yields the optimal clustering for the $k$-means objective.

\subsection{Motivation and Technical Background}\label{sec:introtechdiff}
\paragraph{Current Understanding of the Landscape.}
Starting from Gap-ETH (for 3-SAT)~\cite{D16,MR16}, it is easy to show that there is no $2^{o(n)}$ time algorithm to $1+\delta$ approximate the Vertex Cover problem (on sparse graphs) for some small constant $\delta>0$. By a standard reduction, this implies that Euclidean $k$-means cannot be approximated within $1+\delta$ in time $2^{o(n)}$, albeit when $k$ is linear in $n$ (the number of clients).
On the other hand, starting from ETH~\cite{IP01,IPZ01}, it is easy to show (for example following the reduction in \cite{dasgupta2009random} or \cite{AKP24}) that Euclidean $2$-means cannot be exactly solved in $2^{o(n)}$ time. 
Thus, we can show that there is no $2^{o(k/\varepsilon)}\cdot \poly(n)$ time algorithm for $1+\varepsilon$ approximating Euclidean $k$-means problem, when either (i) $k=\Omega(n)$ and $\varepsilon=\Omega(1)$, or (ii) $k=2$ and $\varepsilon=1/\Omega(n)$. 

\paragraph{Exploring Uncharted Territories.} From the above discussion, we know that if $k$ was $\Omega(n)$ then we cannot obtain a $2^{o(n)}$ time approximation algorithm, but what if $k$ was approximately $\sqrt{n}$? Then algorithmic techniques based on coresets yield a $(1+\varepsilon)$ approximation in time $2^{\tilde{O}(\sqrt{n}/\varepsilon)}$. But is it possible to do better?

We can try to answer this from the lower bound viewpoint. We can look at the above mentioned Gap-ETH hard instance, i.e., the setting $k=\Omega(n)$ and $\varepsilon=\Omega(1)$ and duplicate each point  $n$ times to obtain a point-set with $N=n^2$ points, and we have that there is no $2^{\sqrt{N}}$ time algorithm that can constant approximate the objective. But we cannot even rule out the possibility that there is an \emph{exact} algorithm running in time  $2^{N^{0.51}}$ time for this value of $k$. Ideally, we would like to be able to rule out algorithms running in time  $2^{N^{0.5+\rho}}$ that provide a $(1+\Omega(1/N^{\rho}))$ approximate solution, for every $\rho\in [0,0.5]$.
Thus, the result that we are shooting for is:
\begin{center}\emph{Rule out $2^{o(k/\varepsilon)}\cdot \poly(n)$ time algorithms, when $k/\varepsilon$ is fixed,}\end{center} 
i.e., we want the lower bound to hold on the entire  tradeoff curve between the number of clusters $k$ and the accuracy of clustering $\varepsilon$.
To the best of our knowledge, there are no such results known in fine-grained complexity. Thus, we are motivated to develop a new framework to prove such results. 

\paragraph{Technical  Challenges.}
One approach is to start from the lower bound given in $(ii)$, i.e., when $k=2$ and $\varepsilon=1/\Omega(n)$, and reduce it to a different instance of $k$-means, where $k$ has increased (say to $\sqrt{n}$), but also $\varepsilon$ has increased to $1/\sqrt{n}$ (all this with a linear blowup in size). But this requires ``gap creation'', a notorious challenging task, potentially much harder than even proving Gap-ETH from ETH! 

Therefore, we pursue the approach of  starting from the lower bound given in $(i)$, i.e., when $k=\Omega(n)$ and $\varepsilon=\Omega(1)$, and reduce it to a different instance of $k$-means, where $k$ has decreased (say to $\sqrt{n}$), and $\varepsilon$ has also decreased to $1/\sqrt{n}$. This is the approach of trading off the gap to reduce the number of clusters. 

\paragraph{A (Failed) Simple Approach.} Let $G=([n],E)$ be a vertex cover instance (where $|E|:=m=O(n)$) which is hard to approximate to $1+\delta$ factor (for some positive constant $\delta$) under Gap-ETH. Thus, size of an optimal vertex cover of $G$, denoted $\alpha_G$ is $\Omega(n)$. Suppose, our target $k$-means instance is when $k=o(m)$ and $\varepsilon=\Omega(\delta k/\alpha_G)$: then we may simply look at the embedding where each edge $\{u,v\}\in E$ is mapped to the point (i.e., client), $\mathbf{e}_u+\mathbf{e}_v\in \mathbb{R}^n$, where $\mathbf{e}_i$ is the standard basis vector which is 1 on $i^{\text{th}}$ coordinate, and 0 everywhere else. If we were asked to cluster this client set to $\alpha_G$ clusters minimizing the $k$-means objective in Euclidean metric, then the optimal solution would simply be the partition based on some optimal vertex cover of $G$. Thus, we could hope that even when asked to cluster the clients to $k$ parts, the optimal solution would be to first cluster the clients into $\alpha_G$ clusters (based on the vertex cover) and then merge clusters, so as to end up with only $k$ clusters in the end.  However, since $k\ll \alpha_G$, we can have near-optimal cost from clustering which do not correspond to any vertex cover of $G$. For example, a typical obstacle  is when we have the following clustering: $k-1$ clusters each contain a single client, and one cluster contains all the remaining points. It is entirely possible that such clusterings also have low cost.

\paragraph{Embedding via Color Coding.}
To overcome the above issue (of imbalancedness) of clustering, we introduce a color coding based embedding technique. Given $G$ and a target number of clusters $k$, we first uniformly and independently randomly color each vertex $v$ in $G$, with a color in $[k]$ (and let color of $v$ be denoted by $c_v$). Thus each edge (consisting of two vertices), also gets (at most) two colors. Now, consider
the embedding where each edge $\{u,v\}\in E$ is mapped to the point (i.e., client), $\mathbf{e}_u+\mathbf{e}_v+\mathbf{e}_{c_u}+\mathbf{e}_{c_v}\in \mathbb{R}^{n+k}$. Consider the clustering of this client set  to $k$ parts in the following way. Let $S\subset [n]$ be any optimal vertex cover of $G$ and $S:=S_1\dot\cup \cdots \dot\cup S_k$ be a partition of the vertices in the vertex cover based on the coloring. Then the alleged optimal clusters would be given by the edges covered by $S_1,\ldots ,S_k$, i.e., cluster $i$ would be all edges (i.e., the corresponding points of these edges) in $G$ covered by the vertices in $S_i$ (break ties arbitrarily). 

The embedding of the colors of the edges, forces the optimal clusters to have a dominating color, and since the colors are uniformly spread, the obstacle mentioned in the above approach (without coloring), would yield high clustering cost and thus can now be ruled out. 

That said, there is more serious obstacle  that is not addressed: when we merge clusters in the completeness and soundness case, the graph topology  affects the $k$-means cost; we elaborate on this next.

\paragraph{Need for Vertex Expansion.} Consider two graphs $G_1$ and $G_2$, both on $n$ vertices and are $d$-regular. Suppose $G_1$ looks like a random $d$ regular graph, and thus is a very good vertex expander, and for every subset $S$ of vertices of size $O(n/d)$ we have that the number of unique neighbors is about $|S|\cdot (d-O(1))$ \cite{Vadhan12}. On the other hand suppose that $G_2$ is obtained by first taking $n/(d-3)$ disjoint copies of $K_{d-3,d-3}$ and then adding a random $3$-regular graph to connect these copies. 

We encounter the following problem when we merge clusters as described above: $G_1$ behaves in the way we expect it to, whereas the $G_2$ has a very low cost even if it doesn't have a good vertex cover. This is because if we take the $d-3$ sized independent set in a $K_{d-3,d-3}$ then we do get a very good vertex cover of $K_{d-3,d-3}$ but in addition if all (or even a large fraction of) the edges of $K_{d-3,d-3}$ were in the same cluster then each edge has $2d-2$ other edges that is adjacent to it in the cluster. On the other hand, if we take the edges of an independent set in $G_1$, then a typical edge has $d-1$ other edges incident to it. This makes the analysis of completeness and soundness impossible without knowing more about the graph topology.

\subsection{New Hypothesis: Exponential Time for Expanders Hypothesis}\label{sec:introXXH}

\paragraph{Exponential Time for Expanders Hypothesis (XXH).} To remedy the situation we introduce a working hypothesis that the gap vertex cover problem cannot be solved in time $2^{n^{1-o(1)}}$ on random graphs. For the sake of keeping the proofs clean (to the extent possible), we state the hypothesis in Section~\ref{sec:hyp} in terms of vertex expanders which makes it directly usable.  Under this hypothesis, we use the color coding embedding that was described earlier and with a lot of technical effort are able to derive the conclusion given in Theorem~\ref{thm:introlb}.

Informally, XXH asserts that for some constants $\zeta,\delta$, such that $\zeta\ll \delta$,
no randomized algorithm running in $2^{n^{1-o(1)}}$ time can take as input  a  $d$-regular vertex expander $G:=([n],E)$ and 
 distinguish between the following:
\begin{description}
\item[Completeness:] There exist $n/2$ vertices that covers at least $(1-\zeta)$ fraction of $E$. 
\item[Soundness:]  Every subset of $V$ of size $n/2$ does not cover $\delta$ fraction of the edges.
\end{description}

\paragraph{Plausibility of XXH.}
A key observation connecting random graphs and expander graphs is that random $d$-regular graphs are known to be very strong vertex expanders with constant probability (see Theorem 4.2 in \cite{Vadhan12}). Consequently, making progress on understanding the inapproximability of the Vertex Cover problem on vertex expanders is closely related to the hardness of the vertex cover problem on random graphs.

XXH formalizes this hardness on vertex expanders and, via our main reduction (Theorem~\ref{thm:lb_k}), links it directly to the Euclidean $k$-means problem. This connection implies we live in one of three possible worlds. The first is where XXH is false and the Euclidean $k$-means admits a much faster algorithm (such as $2^{(k/\varepsilon)^{0.99}}\cdot \poly(n)$ time), and this would yield new algorithms for computing Vertex Cover on random-like graphs. The second is where XXH is false but there are no new faster algorithms for Euclidean $k$-means, and this would force a deeper, computational understanding of vertex expanders themselves. The third world is where XXH is true, and thus the current Euclidean $k$-means algorithm is nearly optimal. The reader is directed to Section~\ref{sec:plausibilityXXH} for more discussion on these three possibilities.

Investigating the truth of XXH (the third world) leads to natural questions of independent interest. For instance, Ramanujan graphs are known to be near-perfect vertex expanders for small sets~\cite{mckenzie_et_al}, so we ask: does the Vertex Cover problem admit a PTAS  when the input graph is a Ramanujan graph? Progress on this question would directly help us better understand XXH.

As partial evidence for XXH, in Corollary~\ref{cor:AKS} we show that if we forego the expansion property, then under the Unique Games Conjecture it is possible to show from \cite{AKS11} that no polynomial-time algorithm can take as input a $d$-regular graph $G:=([n],E)$ and distinguish between the following:
\begin{description}
\item[Completeness:] There exist $n/2$  vertices that cover at least a $(1-\zeta)$ fraction of $E$. 
\item[Soundness:] Every subset of $V$ of size $n/2$ fails to cover an $\Omega(\sqrt{\zeta})$ fraction of the edges.
\end{description}

\paragraph{Our Message.}
We wish to highlight two key aspects of the XXH hypothesis. First, it fundamentally acts as a fine-grained complexity assumption regarding the hardness of the Vertex Cover problem on random instances. Second, it establishes an important connection: developing better approximation algorithms for Euclidean $k$-means would provide non-trivial approaches to solving Vertex Cover on expander graphs, a problem of significant independent interest. In addition, even if weaker versions of XXH are true (and proved in the future), this would imply weaker lower bounds for  $(1+\varepsilon)$-approximating the Euclidean $k$-means problem (see Remark~\ref{rem:weakXXH} for details).

\subsection{Proof Overview of Theorem~\ref{thm:introlb}}

We now give an overview of the completeness and soundness cases. Recall that we are given a $d$-regular graph $G=([n],E)$, and we have constructed a point $\mathbf{e}_u+\mathbf{e}_v+\mathbf{e}_{c_u}+\mathbf{e}_{c_v}\in \{0,1\}^{n+k}$, for each edge $(u,v)$ whose end points have colors $c_u$ and $c_v$. 

In the completeness case, we have that there are $n/2$ vertices cover $1-\zeta$ fraction of vertices, so our strategy to cluster is straightforward: first form $n/2$ clusters, each one corresponding to a vertex in the vertex cover solution, and then identify each cluster with a color by looking at the color of the common vertex in each cluster (which is a star graph). Then, each color class would be a cluster, and we have $k$ clusters, and we can show that the cost is $3|E|-(1-7\zeta)kd$.  

However, our soundness analysis is highly non-trivial, and involves several tools and arguments of different flavors. Suppose we have a clustering of $P$ whose $k$-means cost is about $3|E|-(1-\delta^5)kd$ (where $\zeta\le \delta^5/10$), then  we first connect the cost of cluster $C_i$ (for $i\in [k]$) with certain properties of the graph $G_i$ in the following way:
\begin{lemma}[Informal version of Lemma~\ref{lem:costCluster}]\label{lem:informalcost}
 For every cluster $C_i$ we have its cost is equal to: \begin{align*}
 3|C_i|-1+ (\gamma_i - \kappa_i)  (|C_i|-1)-\frac{1}{|C_i|}\cdot \sum_{v \in V_i}  d_{i,v}^2,
\end{align*}  
where $d_{i,v}$ is the degree of $v$ in the graph $G_i$ (induced by edges in $C_i$), $\gamma_i$ be the fraction of pairs of edges in $G_i$ that have no color in common, and $\kappa_i$ be the fraction of pairs of edges that have two colors in common.
\end{lemma}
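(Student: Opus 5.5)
The plan is to use the standard identity for the $k$-means cost of a single cluster. The optimal center of $C_i$ is its centroid, so
\[
\cost(C_i)=\sum_{p\in C_i}\|p-\mu_i\|^2=\frac{1}{|C_i|}\sum_{\{p,q\}\subseteq C_i}\|p-q\|^2,
\]
where the sum runs over unordered pairs of distinct points. The whole proof then reduces to computing $\|p-q\|^2$ combinatorially for the points $p=\mathbf{e}_u+\mathbf{e}_v+\mathbf{e}_{c_u}+\mathbf{e}_{c_v}$ and summing over pairs.

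\textbf{Step 1: pairwise distances.} First I take the case where every edge has two distinct endpoint colours; this is the case matching the intended definitions of $\gamma_i$ and $\kappa_i$. Then $\|p\|^2=4$ for every client. For two distinct edges $e,e'$ we have $\langle p,q\rangle = s_V(e,e')+s_C(e,e')$. Here $s_V\in\{0,1\}$ indicates whether $e$ and $e'$ share an endpoint; two distinct edges share at most one. Similarly $s_C\in\{0,1,2\}$ counts the colours they have in common. Hence
\[
\|p-q\|^2=8-2s_V-2s_C.
\]

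\textbf{Step 2: summing over pairs.} Let $N=\binom{|C_i|}{2}$. By definition, $\gamma_i N$ pairs share no colour, $\kappa_i N$ pairs share two colours, and $(1-\gamma_i-\kappa_i)N$ pairs share exactly one. So
\[
\sum s_C = N(1-\gamma_i-\kappa_i)+2N\kappa_i = N(1-\gamma_i+\kappa_i).
\]
The pairs sharing a vertex are counted by vertex: $\sum s_V=\sum_{v\in V_i}\binom{d_{i,v}}{2}$. Since $\sum_v d_{i,v}=2|C_i|$ (handshake in $G_i$), this equals $\tfrac12\sum_v d_{i,v}^2-|C_i|$. Combining,
\[
\sum_{\{p,q\}}\|p-q\|^2 = 6N+2N(\gamma_i-\kappa_i)-\sum_v d_{i,v}^2+2|C_i|.
\]

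\textbf{Step 3: normalise.} Dividing by $|C_i|$ and using $2N/|C_i|=|C_i|-1$ gives
\[
3(|C_i|-1)+2+(\gamma_i-\kappa_i)(|C_i|-1)-\frac{1}{|C_i|}\sum_v d_{i,v}^2,
\]
which is exactly the claimed expression $3|C_i|-1+\cdots$.

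The only real obstacle is bookkeeping for monochromatic edges ($c_u=c_v$), where the colour part of $p$ is $2\mathbf{e}_c$ and $\|p\|^2=6$. I would handle this by fixing the convention in the formal version of Lemma~\ref{lem:costCluster}: either restrict attention to properly coloured edges, or count shared colours with multiplicity and absorb the resulting $O(1)$ per-point correction into the lower-order terms. I expect the formal statement adopts one of these, so I would simply check that it is consistent with $s_C$ as defined in Step 1.
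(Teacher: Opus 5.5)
Your proof is correct and is essentially the paper's computation. The paper writes the cost as $\sum_e\|p_e\|^2-\frac{1}{|C_i|}\|\sum_e p_e\|^2$ and expands $\|\sum_e p_e\|^2$ coordinate-wise into $\sum_v d_{i,v}^2$ plus the squared colour counts; this is the same sum of inner products $s_V+s_C$ that you organize pair by pair. Your concern about monochromatic edges is moot: the construction builds $P$ only from the non-bad edges $E'$, so $\|p_e\|^2=4$ throughout. The identity therefore holds exactly under your first option, and the ``absorb into lower-order terms'' fallback is not needed.
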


Next, we show that if $\gamma_i$ (fraction of pairs of edges with no color in common) and $\kappa_i$ (fraction of pairs of edges with two colors in common) are not too large, then there is a dominant single color in the cluster $C_i$. 
\begin{lemma}[Informal version of Lemma~\ref{lem:monochrome}]
If $\kappa_i$ and $\gamma_i$ are bounded by some small constants, then there is a large  fraction of edges in $G_i$ that have the same color. 
\end{lemma}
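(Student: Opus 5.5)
We have to prove the informal Lemma: if $\kappa_i$ and $\gamma_i$ are small, then a large fraction of the edges in $G_i$ share a single color. The plan is to treat the coloring as inducing a multigraph $H_i$ on the color set $[k]$ and then apply a purely combinatorial structure argument to $H_i$.

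Each edge $\{u,v\}$ of $G_i$ carries the color multiset $\{c_u,c_v\}$, so it becomes an edge (possibly a loop) of a multigraph $H_i$ on $[k]$ with $|C_i|$ edges. In this language the two quantities read as follows.
\begin{itemize}
\item $\gamma_i$ is the fraction of pairs of edges of $H_i$ that are vertex-disjoint, i.e.\ share no color.
\item $\kappa_i$ is the fraction of pairs that are parallel, i.e.\ share both colors. Loops need a convention here; for them I would count "two colors in common" as equal color multisets.
\end{itemize}
The goal is a color $c$ such that at least a $1-f(\gamma_i,\kappa_i)$ fraction of the edges contain $c$, with $f\to 0$ as $\gamma_i,\kappa_i\to 0$ (for instance $f=O(\sqrt{\gamma_i}+\kappa_i)$, to be tuned later).

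First, use $\gamma_i$ small to show that almost all pairs of edges of $H_i$ intersect. In an intersecting family of edges (size-$\le 2$ sets), every family is either a star or contained in a triangle. I would prove a stability version of this fact. Let $m=|C_i|$ and let $x_c$ be the fraction of edges containing color $c$. The fraction of intersecting pairs is at most $\sum_c x_c^2$ plus a correction for pairs sharing two colors, which is $O(\kappa_i)$. Since $\sum_c x_c\le 2$, we have $\sum_c x_c^2\le 2\max_c x_c$. From $1-\gamma_i\le \sum_c x_c^2+O(\kappa_i)$ we then get
\[
\max_c x_c\;\ge\;(1-\gamma_i-O(\kappa_i))/2 .
\]
This yields a color covering about half of the edges, which is not yet enough.

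The bound must be pushed to near $1$, and the obstacle is the triangle case. If the edges are spread evenly over the three sides of a triangle on colors $a,b,c$, then pairs of distinct edges always intersect, so $\gamma\approx 0$. The pairs that are parallel (same side) make up about $1/3$ of all pairs, so $\kappa\approx 1/3$ here. This is exactly why the hypothesis needs $\kappa_i$ to be small, and the argument has to use it decisively.

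To do so, let $y_e$ be the fraction of edges in parallel class $e$, so that $\kappa_i\approx\sum_e y_e^2$. A pair intersects in exactly one color when it lies in two different parallel classes sharing one endpoint. The fraction of intersecting pairs is therefore $\sum_c x_c^2-\sum_e y_e^2$ (up to loop and $1/m$ terms), since a pair sharing both colors would otherwise be counted at both endpoints. This gives
\[
\sum_c x_c^2 \;\ge\; 1-\gamma_i+\kappa_i-O(1/m).
\]
Let $c^\ast$ be the color of largest degree, with $x^\ast=x_{c^\ast}$, and let $r=1-x^\ast$ be the fraction of edges avoiding $c^\ast$. Each such edge must meet most of the edges through $c^\ast$, up to $\gamma_i$ error, through their other endpoints. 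If $r$ is large, then the edges through $c^\ast$ concentrate on few partner colors, which forces many parallel pairs. Concretely, I would bound $\sum_c x_c^2\le x^{\ast 2}+(\text{mass outside }c^\ast)$-type terms together with a Cauchy–Schwarz step against $\sum_e y_e^2\le\kappa_i$, and derive $r\le O(\gamma_i+\kappa_i)^{1/2}$.

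The main obstacle is this second step: making the triangle-versus-star stability quantitative using both parameters. The first step and the handling of loops and $1/m$ terms are routine.
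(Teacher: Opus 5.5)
\textbf{There is a genuine gap: the decisive second step is only described, never carried out.} Your first step is fine. The union bound gives $1-\gamma_i\le\sum_c x_c^2\le 2x^{\ast}$ directly, with no $\kappa_i$ correction needed. Your identity $\sum_c x_c^2\approx 1-\gamma_i+\kappa_i$ is correct but inessential: it only improves the lower bound on $\sum_c x_c^2$ by a negligible $\kappa_i$.

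The concrete route you indicate, upper-bounding $\sum_c x_c^2$ by $x^{\ast 2}$ plus terms controlled by the mass outside $c^\ast$, does not close by itself. Its natural instance is $\sum_c x_c^2\le x^{\ast 2}+x_2(2-x^\ast)$ with $x_2\le r+\sqrt{\kappa_i}+1/m$. (An edge through a second color $c_2$ either avoids $c^\ast$ or lies in the class $\{c^\ast,c_2\}$.) This only yields $r-2r^2\lesssim\gamma_i+O(\sqrt{\kappa_i})$, which is vacuous for $r$ near $1/2$, and your first step ($r\le(1+\gamma_i)/2$) does not exclude that regime. The paper works along this $\sum_c f_c^2$ route. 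To rule out such configurations it needs the extra inclusion--exclusion constraints $\sum_{j\le t}f_j\le 1+\binom{t}{2}\sqrt{\tilde\kappa_i}$ for $t$ up to $5$, plus a two-case mass-transfer analysis. Note also that your target $r\le O(\sqrt{\gamma_i+\kappa_i})$ would not give the formal Lemma~\ref{lem:monochrome}, which must get $r\le\eta$ from $\gamma_i\le\eta/3$.

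The missing piece is to make your own sentence (edges avoiding $c^\ast$ must meet the star through its partner colors) quantitative. Doing so gives a shorter proof than the paper's.
\begin{itemize}
\item Every parallel class has at most $1+\sqrt{\kappa_i}\,m$ edges, since $X(X-1)\le\kappa_i m(m-1)$.
\item An edge $\{b,c\}$ avoiding $c^\ast$ meets an edge through $c^\ast$ only if the latter lies in the class $\{c^\ast,b\}$ or $\{c^\ast,c\}$.
\item So each such edge is disjoint from at least $x^\ast m-2-2\sqrt{\kappa_i}\,m$ edges through $c^\ast$.
\end{itemize}
Counting these disjoint pairs, each exactly once, gives
\[
\gamma_i\binom{m}{2}\;\ge\; rm\bigl(x^\ast m-2-2\sqrt{\kappa_i}\,m\bigr),\qquad\text{hence}\qquad \gamma_i\;\ge\;2r\Bigl(x^\ast-2\sqrt{\kappa_i}-\tfrac{2}{m}\Bigr).
\]
Plugging in $x^\ast\ge(1-\gamma_i)/2$ from your first step yields the linear bound $r\le\gamma_i/(1-\gamma_i-4\sqrt{\kappa_i}-4/m)$. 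With $\gamma_i\le\eta/3$, $\kappa_i\le\eta^3$ and $m\to\infty$, this is below $\eta$, which is exactly the formal lemma. It needs only $\kappa_i$ to be a small constant rather than $\eta^3$, and no Cauchy--Schwarz or second-moment optimization.
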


Then, we show that we can identify a large subcollection of clusters for which both $\kappa_i$ and $\gamma_i$ are small. In addition, we assume that the cost of the clustering $C_1,\ldots ,C_k$ is at most $3|E| - (1-\delta^5)dk$. Then, for each cluster in this subcollection,  we can relate a bound on the sum of the squared degrees of the vertices in the subcollection to the clustering cost appearing in Lemma~\ref{lem:informalcost}.

\begin{lemma}[Informal version of Lemma~\ref{lem:avgStruct}] 
 There is some $I\subseteq [k]$ such that  for all $i\in I$, it holds that $\kappa_i $ and $\gamma_i $ are small and  $\sum_{i\in I}|E_i|\ge (1-38\delta^2)\cdot |E|$. 
In addition, we also have for all $i\in I$:
 \begin{align*}
  \frac{\sum_{v\in G_i}d_{i,v}^2 }{\sum_{v\in G_i}d_{i,v}} \geq (1-\delta^3) 2d.
\label{eqgood}
\end{align*}
\end{lemma}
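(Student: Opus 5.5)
We argue by averaging. We sum the per-cluster cost formula of Lemma~\ref{lem:informalcost} over all clusters, compare the total with the assumed upper bound $3|E|-(1-\delta^5)dk$, and read off a global budget. A Markov-type argument then converts that budget into the per-cluster statements.

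\textbf{Step 1: global budget.} Summing Lemma~\ref{lem:informalcost} over $i\in[k]$ gives
\[
\mathrm{cost} \;=\; 3|E|-k+\sum_i(\gamma_i-\kappa_i)(|C_i|-1)-\sum_i \frac{1}{|C_i|}\sum_{v\in V_i} d_{i,v}^2 .
\]
Combining this with the assumed bound yields
\[
\sum_i\Bigl(\frac{1}{|C_i|}\sum_v d_{i,v}^2 -(\gamma_i-\kappa_i)(|C_i|-1)\Bigr)\;\ge\;(1-\delta^5)dk-k .
\]
We then need an a priori upper bound on $\frac{1}{|C_i|}\sum_v d_{i,v}^2$. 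Since each edge contributes $d_{i,u}+d_{i,v}\le 2d$ and $\sum_v d_{i,v}=2|C_i|$, we get
\[
\frac{1}{|C_i|}\sum_v d_{i,v}^2 \;=\; 2\,\frac{\sum_v d_{i,v}^2}{\sum_v d_{i,v}} \;\le\; 4d .
\]
This alone is not comparable to the $dk$ scale, so the $\gamma_i|C_i|$ terms must pay. A cluster with small $\gamma_i$ is essentially dominated by few colors. Each color class has about $|E|/k$ edges, so $|C_i|$ is typically of order $nd/k$, and then $\gamma_i|C_i|$ is large unless $\gamma_i$ is tiny.

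\textbf{Step 2: $\kappa_i$ is negligible.} Using the color-coding randomness, we bound the number of edge pairs sharing two colors. Pairs whose endpoints carry the same color pair are few, because with high probability each color pair class has at most about $|E|/k^2$ edges (Chernoff). This gives $\kappa_i(|C_i|-1)\le O(|C_i|/k)$ plus small terms, so $\sum_i\kappa_i|C_i|\ll \delta^5 dk$ in the regime $k\gg1/\eps$ with $n/k$ large.

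\textbf{Step 3: extract $I$.} With the $\kappa$ contribution absorbed, we have
\[
\sum_i\Bigl(2\rho_i-\gamma_i(|C_i|-1)\Bigr)\;\gtrsim\;(1-\delta^5)dk,
\qquad \rho_i:=\frac{\sum_v d_{i,v}^2}{\sum_v d_{i,v}}\le 2d .
\]
This fits the target, since $\sum_i 2\rho_i\le 4dk$ is already of the right order. The intended normalization should make each cluster's "ideal" value $\approx d$ (a star-like color class, in which the quantities $\frac{1}{|C_i|}\sum d^2$ and $\gamma_i|C_i|$ trade off). I will recompute the exact normalization from the formal Lemma~\ref{lem:costCluster} so that the per-cluster maximum $M_i$ of the bracket satisfies $\sum_i M_i\le dk$. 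Then the total slack satisfies $\sum_i (M_i-\text{bracket}_i)\le\delta^5 dk$.

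Call a cluster \emph{bad} if its slack exceeds $\delta^3$ times its fair share, weighted by $|C_i|$ relative to $|E|/k$. By Markov, the bad clusters hold at most an $O(\delta^2)$ fraction of the edges. For the good clusters, small slack forces $\gamma_i\le\delta$ and $\rho_i\ge(1-\delta^3)2d$, which gives $\sum_{i\in I}|E_i|\ge(1-38\delta^2)|E|$ once the constants are tracked.

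\textbf{Main obstacle.} The delicate step is weighting Markov by cluster size. Clusters have very different sizes, and the upper bound $M_i$ must hold uniformly, including for huge clusters, where $\gamma_i|C_i|$ dominates and must be shown to be costly. I would first handle size via the color-balance argument: a cluster with $|C_i|\gg|E|/k$ must mix colors, so $\gamma_i$ is bounded below. I would then apply Markov to the slack normalized per edge.
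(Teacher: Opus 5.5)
\textbf{Gap: the per-cluster cap.} Your plan has the right overall shape: sum Lemma~\ref{lem:costCluster}, absorb $\kappa$, cap each cluster, apply Markov, and use color balance to bound the size of low-$\gamma$ clusters. But the step that carries all the weight is missing.

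First, a normalization issue. Since $\sum_v d_{i,v}=2|E_i|$ and $d_{i,v}\le d$, you have $S_i:=\frac{1}{|E_i|}\sum_v d_{i,v}^2\le 2d$ and $\rho_i\le d$, not $4d$ and $2d$. So the informal target $\rho_i\ge(1-\delta^3)2d$ can never hold. The correct target is the formal one, $S_i\ge(1-\delta^3)d$.

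Summing the cost formula gives $\sum_i\bigl(S_i-(\gamma_i-\kappa_i)(|E_i|-1)\bigr)\ge(1-\delta^5)kd-k$. With only the trivial cap $S_i\le 2d$, this inequality permits half the clusters to have $S_i\approx 2d$, $\gamma_i\approx 0$ and the other half $S_i\approx 0$. No Markov argument can then yield your conclusion.

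What you need is a per-cluster cap $S_i-(\gamma_i-\kappa_i)(|E_i|-1)\le(1+O(\delta^5))d$, which is your $M_i$. It cannot be obtained by recomputing the normalization of Lemma~\ref{lem:costCluster}. It is a structural statement about $G_i$: essentially, every edge of $G_i$ has at most one endpoint of large degree. This is exactly where you must use information beyond the cost identity, and in the paper that information is vertex expansion.

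\textbf{How the paper proves the cap.} The paper proves \eqref{eq:inI}, $S_i\le(1+5\sqrt{\alpha})d$ for every cluster, via small-set vertex expansion:
\begin{itemize}
\item Vertices with $d_{i,v}\le\sqrt{\alpha}d$ contribute at most $2\sqrt{\alpha}d|E_i|$.
\item The remaining set $W_i'$ has $|W_i'|\le 2|E_i|/(\sqrt{\alpha}d)\le 32k/\sqrt{\alpha}$. This uses Lemma~\ref{lem:giant_cluster_strict}, which rules out clusters with $16kd$ edges via color balance.
\item So expansion applies, $W_i'$ spans few edges, and $\sum_{v\in W_i'}d_{i,v}\le(1+3\sqrt{\alpha})|E_i|$.
\end{itemize}

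\textbf{The rest, given the cap.} The remaining steps go as you sketch, except the Markov step is by count rather than size-weighted:
\begin{itemize}
\item The budget $\sum_i\gamma_i(|E_i|-1)\le 7\tau kd$ leaves only $22kd=o(|E|)$ edges in clusters with $\gamma_i>\eta/3$.
\item Averaging $\sum_i S_i\ge(1-1.1\tau)kd$ against the cap bounds the \emph{number} of clusters with $S_i<(1-\delta^3)d$ by $6.1\delta^2k$.
\item Color balance gives $|E_i|\le 3nd/k$ for low-$\gamma$ clusters, which turns that count into at most $36.6\delta^2|E|$ edges.
\end{itemize}

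For low-$\gamma$ clusters you could alternatively get the cap from the coloring. After the paper deletes pairs of edges sharing a vertex and two colors, no vertex has two remaining edges to vertices of the same color. So by Lemma~\ref{lem:monochrome}, such a cluster is a star forest centered in its dominant color class, up to an $\eta$ fraction of edges. Your proposal contains neither this argument nor the expansion one.

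\textbf{Smaller issues.}
\begin{itemize}
\item Getting $\kappa_i\le\eta^3$ for each individual cluster requires first discarding clusters with fewer than $10\tau^{-2}d$ edges.
\item Since $k>\sqrt{n}d$ forces $|E|/k^2<1$, the correct bound on a color-pair class is $O(\log k)$ edges (Lemma~\ref{lem:kappa_i}), not $|E|/k^2$. This still gives $\sum_i\kappa_i(|E_i|-1)\le 36k\log k=o(\delta^5kd)$.
\end{itemize}
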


Finally, we show that a cost of clustering $3|E|-(1-\delta^5)dk$ implies that we can construct a set of vertices $S\subseteq [n]$ of size slightly more than $n/2$ such that we cover $(1-10\delta^{1.5})$ fraction of  the edges, contradicting the soundness assumption of XXH for small enough $\delta$. This step is quite involved, and we skip providing more details about it here.

\section{Preliminaries}\label{sec:prelim}

\paragraph{Notations}

We consider the Euclidean space $\R^d$, with the $\ell_2$ metric: $\dist(p, q) := \sqrt{\sum_{i=1}^d (p_i - q_i)^2}$.
The $k$-means cost function of $P$ using set of centers $\calS$ is defined as $\cost(P, \calS) := \sum_{p \in P} \dist(p, \calS)^2$.

\begin{definition}[Continuous $k$-means]
 Given a set $P \subset \R^d$, the \emph{continuous} $k$-means problem is the task of finding $k$ centers $\calS$ in $\R^d$ that minimize the cost function $\cost(P, \calS)$.
\end{definition}

The continuous $k$-means problem has a surprising (and folklore) formula that expresses the cost of a clustering only in terms of pairwise distances between input points:
\begin{lemma}\label{lem:folklore}[Folklore]
 For a given cluster $C \subset \R^d$, the optimal center is the average $\frac{\sum_{p \in C} p}{|C|}$, and the $k$-means cost is
 $\sum_{p_1, p_2 \in C} \frac{\|p_1-p_2\|^2}{2|C|}$.
\end{lemma}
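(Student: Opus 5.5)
The plan is to fix an arbitrary center $c\in\R^d$ and split $\cost(C,\{c\})=\sum_{p\in C}\|p-c\|^2$ around the centroid $\mu:=\frac{1}{|C|}\sum_{p\in C}p$. Writing $p-c=(p-\mu)+(\mu-c)$ and expanding, we get
\begin{align*}
\sum_{p\in C}\|p-c\|^2=\sum_{p\in C}\|p-\mu\|^2+2\Big\langle \sum_{p\in C}(p-\mu),\,\mu-c\Big\rangle+|C|\,\|\mu-c\|^2 .
\end{align*}
By the definition of $\mu$, the sum $\sum_{p\in C}(p-\mu)$ vanishes, so the cross term is zero. The cost therefore equals $\sum_{p}\|p-\mu\|^2+|C|\|\mu-c\|^2$. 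This is minimized exactly when $c=\mu$, which proves the first claim.

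For the cost formula, I will evaluate $\sum_{p_1,p_2\in C}\|p_1-p_2\|^2$, summing over ordered pairs and including the diagonal terms, which contribute zero. Write $p_1-p_2=(p_1-\mu)-(p_2-\mu)$ and expand. The two squared terms each give $|C|\sum_{p}\|p-\mu\|^2$. The cross term is $-2\langle\sum_{p_1}(p_1-\mu),\sum_{p_2}(p_2-\mu)\rangle$, which is zero for the same reason as before. So the double sum equals $2|C|\sum_p\|p-\mu\|^2$, and dividing by $2|C|$ gives the stated expression.

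There is no real obstacle here. The only point needing care is the convention that the sum runs over ordered pairs with the diagonal included, which is what the factor $\frac{1}{2|C|}$ requires. Summing over unordered pairs would instead call for $\frac{1}{|C|}$.
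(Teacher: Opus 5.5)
Your argument is correct and complete. Centering at $\mu$ kills the cross term and gives $\cost(C,\{c\})=\sum_{p}\|p-\mu\|^2+|C|\,\|\mu-c\|^2$, so $\mu$ is the unique minimizer. The same centering turns the ordered double sum into $2|C|\sum_p\|p-\mu\|^2$. Your convention of ordered pairs with the diagonal included matches how the paper uses the formula.

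The paper states this lemma as folklore and gives no proof. The one place it is invoked, the proof of \Cref{lem:costCluster}, uses it in the uncentered form $\sum_{p}\|p\|^2-\frac{1}{|C|}\|\sum_{p}p\|^2$. That form follows from yours via $\sum_p\|p-\mu\|^2=\sum_p\|p\|^2-|C|\,\|\mu\|^2$. It also follows directly by expanding $\|p_1-p_2\|^2=\|p_1\|^2+\|p_2\|^2-2\langle p_1,p_2\rangle$ without centering.
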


Sometimes the points of $P$ are called \emph{clients}. A \emph{solution} is any subset of $\R^d$ with size $k$.
We say a solution is an $\alpha$-approximation if its cost is at most $\alpha$ times the minimal cost.

We use the following two Chernoff-type bounds, where the latter is suited to partially dependent variables:

\begin{theorem}[Chernoff bounds]\label{thm:chernoff}
Suppose $X_1, \dots, X_n$ are i.i.d. random Bernoulli variables with parameter $p$, and let $\mu = np$ be the expected value of their sum. Then the following bounds hold:
\begin{description}
    \item[Upper tail:] For any $0 < \eps < 1-p$,
\[
\Pr\!\left[\sum_{i=1}^n X_i \ge (p+\eps)n\right]
\le \exp\!\bigl(-n D(p+\eps \| p)\bigr),
\]
where
\[
D(q\|p)= q\ln\frac{q}{p} + (1-q)\ln\frac{1-q}{1-p}.
\]
    \item[Lower tail:] For any $0 < \lambda < 1$,
\[
\Pr\!\left[\sum_{i=1}^n X_i \le (1-\lambda)\mu\right]
\le \exp\!\left(-\frac{\lambda^2\mu}{2}\right).
\]
\end{description}
\end{theorem}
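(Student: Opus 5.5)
We prove both tails by the exponential moment (Cramér--Chernoff) method, using only independence and Markov's inequality. Let $S=\sum_{i=1}^n X_i$. For any $t>0$, Markov's inequality applied to $e^{tS}$ gives
\[
\Pr[S\ge a]\le e^{-ta}\,\E[e^{tS}] = e^{-ta}\bigl(1-p+pe^{t}\bigr)^n .
\]
The factorization uses independence of the $X_i$ and $\E[e^{tX_i}]=1-p+pe^t$. The upper tail then reduces to optimizing the exponent over $t$.

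\textbf{Upper tail.} Put $a=qn$ with $q=p+\eps\in(p,1)$; the hypothesis $\eps<1-p$ is exactly what guarantees $q<1$. The bound is $\exp\bigl(-n\,[tq-\ln(1-p+pe^t)]\bigr)$. The bracket is concave in $t$, and setting its derivative to zero gives $q=\frac{pe^t}{1-p+pe^t}$, that is,
\[
e^{t}=\frac{q(1-p)}{p(1-q)} .
\]
This value satisfies $t>0$ because $q>p$. Substituting back gives $1-p+pe^t=\frac{1-p}{1-q}$, so the exponent becomes
\[
q\ln\frac{q(1-p)}{p(1-q)}-\ln\frac{1-p}{1-q}=q\ln\frac qp+(1-q)\ln\frac{1-q}{1-p}=D(q\|p).
\]
This is exactly the claimed bound.

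\textbf{Lower tail.} Apply the same method to $-S$. For $t>0$,
\[
\Pr[S\le(1-\lambda)\mu]\le e^{t(1-\lambda)\mu}\,\E[e^{-tS}].
\]
Using $1+x\le e^x$, we get $\E[e^{-tX_i}]=1+p(e^{-t}-1)\le \exp\bigl(p(e^{-t}-1)\bigr)$, so $\E[e^{-tS}]\le \exp\bigl(\mu(e^{-t}-1)\bigr)$. Choosing $t=-\ln(1-\lambda)>0$ yields the standard form
\[
\Pr[S\le(1-\lambda)\mu]\le\left(\frac{e^{-\lambda}}{(1-\lambda)^{1-\lambda}}\right)^{\mu}.
\]

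The only step requiring care is the final analytic inequality
\[
-\lambda-(1-\lambda)\ln(1-\lambda)\le -\frac{\lambda^2}{2}\qquad\text{for }\lambda\in(0,1).
\]
Expanding the logarithm as a series gives
\[
(1-\lambda)\ln(1-\lambda)=-\lambda+\sum_{j\ge2}\frac{\lambda^j}{j(j-1)},
\]
which is at least $-\lambda+\frac{\lambda^2}{2}$ since every term of the sum is nonnegative. This is exactly the needed inequality, and it completes the lower tail.

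The routine parts are the optimization in the upper tail and the series comparison in the lower tail. The main point to verify is that the optimizing $t$ is admissible, i.e.\ strictly positive. This holds because $p<q<1$ in the upper tail and $0<\lambda<1$ in the lower tail.
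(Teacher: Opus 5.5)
Your proof is correct. The paper gives no proof of this statement (it is quoted as a standard tool), and your argument is the textbook Cram\'er--Chernoff derivation: in the upper tail, the choice $e^{t}=\frac{q(1-p)}{p(1-q)}$ is correctly shown to give $t>0$ and exponent exactly $D(q\|p)$, and the lower tail is correctly reduced to $(1-\lambda)\ln(1-\lambda)\ge -\lambda+\lambda^2/2$ via the series with nonnegative terms $\lambda^j/(j(j-1))$. The only implicit assumption is $p>0$ in the upper tail, so that your choice of $t$ is defined; the case $p=0$ is trivial because then $\sum_i X_i=0$ almost surely.
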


\section{Exponential Time for Expanders Hypothesis}\label{sec:hyp}

In this section, we formally introduce the Exponential Time for Expanders Hypothesis which is then used in the next section to prove conditional lower bounds for the Euclidean $k$-means problems for small $k$.

\subsection{New Hypothesis: Exponential Time for Expanders Hypothesis}

We first define the notion of vertex expansion relevant to this submission.

\begin{definition}[Small Set Vertex Expanders]\label{def:smallSetExpander}
 Given constants $\alpha>0$, a $d$-regular graph $G=(V,E)$ on $n$ vertices, and an integer $k:=k(n)$, we say that $G$ is a $(k,\alpha)$-small set vertex expander if for every subset $S\subseteq V$ of size at most $n/k$ we have that $|\{u\in V: \exists \{u,v\}\in E \text{ and }v\in S\}|\ge (1-\alpha)\cdot d\cdot |S|$.
\end{definition}

Now, we can define our new hypothesis.

\begin{hypothesis}[Exponential Time for Expanders Hypothesis -- XXH$(\delta, \zeta, \alpha)$]\label{hypo:xxh}
Given constants $\delta, \zeta, \alpha \in (0, 1)$, the XXH$(\delta, \zeta, \alpha)$ assumption states that the following holds for all $\beta>0$:
No randomized algorithm running in $2^{n^{1-\beta}}$ time can, given as input a $d$-regular $(\polylog n, \alpha)$-small set vertex expander $G=(V=[n],E)$ with $d=(\log n)^{L}$ (for some $L>1$), distinguish between the following with probability 0.9:
\begin{description}
\item[Completeness:] There exist $n/2$ vertices that cover at least $(1-\zeta)$ fraction of $E$.
\item[Soundness:] For every $S\subseteq V$ of size $n/2$, there are at least $\delta\cdot |E|$ many edges which are not covered by any vertex in $S$.
\end{description}
\end{hypothesis}

\subsection{Plausibility of XXH: Three Possible Worlds}\label{sec:plausibilityXXH}
For the sake of the discussion in this subsection, we refer to the event ``Clustering Barrier is Breached'' to simply denote the existence of an algorithm much faster than $2^{k/\varepsilon}\cdot \poly (n)$ for $1+\varepsilon$ approximating the Euclidean $k$-means problem. In Section~\ref{sec:lb_k}, we proved that assuming XXH, the clustering barrier cannot be breached. Therefore, we are living in one of three possible worlds. The first world is where XXH is false and the clustering barrier is breached. The second world is where XXH is false, but the clustering barrier cannot be breached. Finally, the third world is where XXH is true (and thus the clustering barrier cannot be breached from Theorem~\ref{thm:lb_k}).

\emph{The message we want to convey here is that  only one of the above three worlds is possible, and regardless of which world is proven to be the one we live in, it will shed new light on problems of interest to the algorithmic community.}

\subsubsection{World I: XXH is False and Clustering Barrier is Breached}
XXH may be viewed as a fine-grained assumption for the vertex cover problem on random instances, which we elaborate on below. Thus, our reduction from XXH instances to Euclidean $k$-means instances, as given in Theorem~\ref{thm:lb_k}, can be used to make oracle calls to the efficient algorithm for the Euclidean $k$-means problem (since the clustering barrier is breached in this world) to solve XXH instances efficiently (i.e., in mildly sub-exponential time). This would imply a separation in this world between worst-case and average-case instances of the gap vertex cover problem (brushing aside many details to make a succinct claim). 

\paragraph{XXH as a Fine-Grained Assumption for Random Instances.}
Theorem 4.2 in \cite{Vadhan12} shows that for some large universal constant $C$, a random $d$-regular graph is a $(C,\frac{2}{d})$-small set vertex expander with a probability of 0.5. Thus, making progress on the inapproximability of the vertex cover problem for vertex expanders is morally similar to proving the hardness of approximation for vertex cover on random $d$-regular graphs. Or in this world, we would make algorithmic progress on understanding vertex cover on random graphs through (hypothetical) clustering algorithms.

\subsubsection{World II: XXH is False but Clustering Barrier is not Breached}
In this world, XXH is false, possibly because\footnote{We are not addressing here the concern that XXH might be false in this world because of the setting of the parameters.} the vertex cover problem is computationally easy on small set vertex expanders. For \emph{spectral} expanders, it is sometimes possible to apply Hoffman’s ratio bound \cite{haemers2021hoffman} to obtain non-trivial speedups when the spectral gap is large. However, vertex expanders, while intuitively similar to edge expanders (and spectral ones), are poorly understood. In fact, explicit constructions with parameters close to those obtained for random graphs were developed only recently \cite{vertexexpanders}. Thus, falsifying XXH motivates a better understanding of vertex expanders from the computational viewpoint of optimization problems.

Nevertheless, Theorem 11 in \cite{AbboudW23} shows that the \emph{exact} vertex cover problem remains as hard on spectral expanders as it is on general graphs.

\subsubsection{World III: XXH is True}
The first evidence that XXH might be true is that the hard instances of vertex cover constructed by \cite{DinurS02} are essentially built from random label cover instances which have strong expansion properties after PCP composition. Moreover, XXH only promises expansion for small sets (sub-polynomial size sets), and thus it is unlikely that such a local structure can be algorithmically used.

Also, it can be formally argued that on small sets, Ramanujan graphs are near-perfect vertex expanders \cite{mckenzie_et_al}. Therefore, as a way to prove XXH, one can first ask whether the vertex cover problem admits a PTAS on Ramanujan graphs. It is possible that the answer to this question is negative, although there are no techniques to handle such questions, and thus XXH opens this new line of exploration.

\subsection{Small Progress on XXH under Unique Games Conjecture}

In this subsection, we show that without the vertex expansion property, it is possible to obtain a weak version of XXH under the unique games conjecture \cite{khot2002power}. 

Let $\Phi$ denote the cumulative density function of the standard normal
distribution and, for any $\rho \in [-1, 1], \mu \in [0, 1]$, let $\Gamma_{\rho}(\mu)$ denote $\Pr[X \leq \Phi^{-1}(\mu) \wedge Y \leq \Phi^{-1}(\mu)]$ where $X, Y$ are normal random variables with means 0, variances 1 and covariance $\rho$. The main intermediate result of~\cite{AKS11} is the following:

\begin{theorem}[Theorem 1 from \cite{AKS11}]\label{thm:AKS}
For any $q \in (0, 1/2)$ and any $\varepsilon > 0$, it is UG-hard to, given a regular graph $G = (V, E)$, distinguish between the following two cases.
\begin{itemize}
\item (Completeness) $G$ contains an independent set of size at least $q \cdot |V|$.
\item (Soundness) For any subset $T \subseteq V$, the number of edges with both endpoint in $T$ is at least $|E| \cdot \left(\Gamma_{-q/(1 - q)}(\mu) - \varepsilon\right)$ where $\mu = |T|/|V|$.
\end{itemize}
\end{theorem}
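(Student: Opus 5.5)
This is Theorem~1 of \cite{AKS11}, and the paper invokes it as a black box. If I had to reprove it, I would follow the standard Unique-Games dictatorship-test template with three ingredients: a long-code style reduction, an invariance principle, and Gaussian isoperimetry (Borell's theorem).

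\emph{The base gadget.} Fix $q\in(0,1/2)$. I would first build a small weighted regular graph $H$ on a finite set $\Omega$ with two properties: it contains an independent set $A\subseteq\Omega$ of measure $q$, and the random walk on $H$ has smallest eigenvalue exactly $\rho=-q/(1-q)$, with all other nontrivial eigenvalues of absolute value at most $|\rho|$. The natural choice is the two-point type gadget, with $\Omega=\{0,1\}$ up to refinement and $A=\{1\}$ of measure $q$. Its walk never stays in $A$, and from outside $A$ it moves to $A$ with probability $q/(1-q)$; this chain is reversible with respect to the $q$-biased measure and has eigenvalue $1-1-q/(1-q)=-q/(1-q)$. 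To get a regular, unweighted graph I would discretize the weights by blowing up $\Omega$ into many copies. I would then add a tiny amount of noise, giving correlation $\rho-\eta$, so that the low-influence machinery applies. The resulting loss is absorbed into the $\varepsilon$ slack, using continuity of $\Gamma_\rho(\mu)$ in $\rho$.

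\emph{The reduction.} Given a Unique Games instance $\mathcal U$ with label set $[R]$, I would replace each UG vertex $w$ by a cloud $\Omega^R$ carrying the product measure. Vertices $x\in\Omega^R$ (at $w$) and $y\in\Omega^R$ (at $w'$) are joined according to $H^{\otimes R}$ after permuting the coordinates of $y$ by the UG constraint $\pi_{ww'}$. In practice I would pass through a common neighbour in the standard bipartite/averaging way so that the graph is regular.

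\emph{Completeness.} If $\mathcal U$ has a labeling $\ell$ satisfying almost all constraints, take the dictator sets $\{x : x_{\ell(w)}\in A\}$. Their union has measure $q$ and is independent, except on the edges coming from the small fraction of violated constraints. I would delete the endpoints of those bad edges, which costs only $o(1)$ in measure, and then recover measure exactly $q$ by the usual ``$1-\eta$'' completeness padding.

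\emph{Soundness.} This is the main obstacle. Let $T$ have measure $\mu$, and let $f_w$ be the indicator of $T$ restricted to cloud $w$. Averaging over UG edges, the fraction of edges inside $T$ equals $\E[\langle f_w, T_{\rho} f_{w'}\rangle]$, where $T_\rho$ is the product Markov operator. I would argue by cases on influences.
\begin{itemize}
\item If for most edges the averaged functions have no common coordinate of noticeable low-degree influence, the invariance principle of Mossel (for correlated product spaces with bounded correlation) lets me replace the $f_w$ by Gaussian functions of the same mean.
\item Borell's theorem, in its negative-correlation form, then says that half-spaces minimize $\Pr[X\in S, Y\in S]$ for $\rho$-correlated Gaussians $X,Y$ at $\rho<0$. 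This yields a lower bound of $\Gamma_\rho(\mu)-\varepsilon$, after a convexity/Jensen step over the varying measures $\mu_w$.
\item Otherwise, influential coordinates give a randomized labeling that satisfies a constant fraction of UG constraints, contradicting UGC.
\end{itemize}

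The delicate points are the following. The quantity $\Gamma_\rho(\cdot)$ must be convex, so that Jensen applies across clouds of unequal density. The invariance principle must handle $\rho<0$ together with a non-uniform, biased base space; this is where the spectral bound $|\lambda|\le|\rho|$ on the nontrivial spectrum of $H$ is needed. I would handle these by citing the $\rho<0$ version of the Mossel--O'Donnell--Oleszkiewicz / Mossel invariance results as used by Austrin.
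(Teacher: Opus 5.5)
The paper gives no proof of this statement. It quotes it from \cite{AKS11} and uses it as a black box, so your opening sentence matches the paper's treatment exactly. Your reproof sketch, however, has a genuine gap in the completeness step: deleting the endpoints of bad edges does not cost $o(1)$ in measure.

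Take the common-neighbour construction. Suppose some constraint at a left vertex $v$ is violated, so $\pi_{vw}(\ell(w))\neq\pi_{vw'}(\ell(w'))$. Then the dictator coordinates of clouds $w$ and $w'$ are matched to two different, independent gadget coordinates. A typical $x\in D_w=\{x: x_{\ell(w)}\in A\}$ therefore sends a constant fraction (about $q/(1-q)$) of its edges into cloud $w'$ into $D_{w'}$. So essentially all of $D_w$ consists of endpoints of bad edges, for every $w\in N(v)$.

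Under plain UGC, only a $1-\eta$ fraction of constraints is satisfied and degrees are uncontrolled. The clouds sharing a left neighbour with a violated constraint can then be almost all clouds, and deleting endpoints can wipe out the whole set. Counting does not help either: there are on the order of $\eta|E|$ bad edges, far more than $|V|$ in a regular graph of large degree.

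The missing ingredient is the Khot--Regev form of UGC. In its YES case there is a labeling and a $1-\eta$ fraction $W'$ of the long-code vertices all of whose incident constraints are satisfied. Then $\bigcup_{w\in W'}D_w$ is exactly independent, of measure $(1-\eta)q$. To get size at least $q|V|$, run the reduction with a rational $q'>q$ satisfying $(1-\eta)q'\ge q$. The resulting change, $\Gamma_{-q'/(1-q')}(\mu)\ge\Gamma_{-q/(1-q)}(\mu)-\eps/2$, is absorbed into the slack, because $\partial_\rho\Gamma_\rho(\mu)$ is bounded uniformly in $\mu$ for $\rho$ away from $-1$. This should replace your vague ``padding'' and ``correlation $\rho-\eta$'' steps.

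Two smaller points. First, you should not add noise to the gadget. Its maximal correlation is already $q/(1-q)<1$, which is all that Mossel's Gaussian-bounds theorem needs; at equal means $\mu$ its lower bound is exactly $\Gamma_{-q/(1-q)}(\mu)$. Noise that re-randomizes coordinates puts positive mass on the pair $(A,A)$ and destroys the exact independence of the dictator sets.

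Second, the common-neighbour step is needed for soundness, not just for regularity. With direct edges between clouds of adjacent UG vertices, a bipartite UG graph lets $T$ be all clouds on one side. That set has measure $1/2$ and no internal edges, yet $\Gamma_\rho(1/2)=\frac14+\frac{\arcsin\rho}{2\pi}>0$. Only the averaged form $\E_v\langle g_v, T_\rho g_v\rangle$ reduces soundness to one-variable convexity of $\mu\mapsto\Gamma_\rho(\mu)$. That convexity does hold, since $\frac{d}{d\mu}\Gamma_\rho(\mu)=2\Phi\big(\Phi^{-1}(\mu)\sqrt{(1-\rho)/(1+\rho)}\big)$ is increasing.
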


This means that, in the completeness case, there is a vertex cover of size at most $(1 - q) \cdot |V|$. On the other hand, in the soundness case, if we consider any subset $S \subseteq V$ of size at most $(1 - q) \cdot |V|$, then the number of edges \emph{not} covered is exactly the same as the number of edges with both endpoints in $(V \setminus S)$, which is at least $(\Gamma_{-q/(1 - q)}(q) - \varepsilon) \cdot |E|$. We will evaluate this approximation factor for $q=\frac{1}{2}-\zeta$. 

\begin{claim}\label{claim:zeta}
 Let  $q:=\frac{1}{2}-\zeta$ where $\zeta>0$. Then for small enough $\zeta$, we have: $\Gamma_{-q/(1 - q)}(q) \ge \frac{\sqrt{\zeta}}{3}$.
\end{claim}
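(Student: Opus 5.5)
We would prove the claim by a direct Gaussian computation. The idea is that for $q=\tfrac12-\zeta$ the correlation $\rho$ is very close to $-1$, so $Y\approx -X$ plus a small independent perturbation of size about $\sqrt{\zeta}$. The event $\{X\le t,\,Y\le t\}$ then amounts to $X$ landing in a window of width about $\sqrt{\zeta}$ just below zero.

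\textbf{Setup.} Put $\rho=-q/(1-q)=-(1-2\zeta)/(1+2\zeta)$, so $1+\rho=4\zeta/(1+2\zeta)$. Then
\[
\sigma:=\sqrt{1-\rho^2}=\sqrt{(1-\rho)(1+\rho)}=\frac{\sqrt{8\zeta}}{1+2\zeta}=(1-o(1))\,2\sqrt2\,\sqrt{\zeta}.
\]
Let $t=\Phi^{-1}(q)$. Since $\Phi(0)=\tfrac12$ and $\Phi'(0)=1/\sqrt{2\pi}$, we have $t=-(1+o(1))\sqrt{2\pi}\,\zeta$, so $|t|=O(\zeta)$. Realize $Y=\rho X+\sigma Z$ with $Z$ a standard normal independent of $X$. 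Then
\[
\Gamma_\rho(q)=\Pr\bigl[X\le t,\ \rho X+\sigma Z\le t\bigr].
\]

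\textbf{Main estimate.} Condition on $Z=z$. Since $\rho<0$, the second constraint is $X\ge (\sigma z-t)/|\rho|$. So $X$ must lie in the interval $[(\sigma z+|t|)/|\rho|,\,-|t|]$. For $z<0$ this interval has length at least $\sigma|z|-O(\zeta)$, using $|\rho|\ge 1-4\zeta$ and $|t|=O(\zeta)$.

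Fix a small constant $\eta>0$ and a large constant $M$, and restrict to $-M\le z\le 0$. For $\zeta$ small enough that $\sigma M+O(\zeta)\le\eta$, the whole interval lies inside $[-\eta,0]$, where the density of $X$ is at least $\phi(\eta)$. Therefore
\[
\Gamma_\rho(q)\;\ge\;\phi(\eta)\,\E\bigl[(\sigma|Z|-O(\zeta))\,\mathbbm{1}[-M\le Z\le 0]\bigr]
\;\ge\;\phi(\eta)\Bigl(\sigma\,\E\bigl[|Z|\,\mathbbm{1}[-M\le Z\le 0]\bigr]-O(\zeta)\Bigr).
\]
As $M\to\infty$, the truncated expectation tends to $\E[Z_-]=1/\sqrt{2\pi}$. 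As $\eta\to0$, $\phi(\eta)\to 1/\sqrt{2\pi}$. So the right-hand side is
\[
(1-o(1))\,\frac{\sigma}{2\pi}-O(\zeta)=(1-o(1))\,\frac{2\sqrt2}{2\pi}\sqrt{\zeta}\approx 0.45\sqrt{\zeta},
\]
which exceeds $\sqrt{\zeta}/3$ once $\eta$ and $M$ are fixed suitably and $\zeta$ is small.

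\textbf{Main obstacle.} The only delicate point is the constant bookkeeping. A crude restriction such as $|X|\le 1$ only gives density $\phi(1)\approx0.24$, which yields about $0.27\sqrt{\zeta}$ and is too weak. So we must localize $X$ to a window $[-\eta,0]$ with $\eta$ small, truncate $Z$ at a large $M$, and only then send $\zeta\to0$. In that order the losses from $\eta$ and $M$ are arbitrarily small constant factors. The $O(\zeta)$ errors coming from $t\neq0$ and $\rho\neq-1$ are lower order than $\sqrt{\zeta}$, so they are absorbed for small $\zeta$.
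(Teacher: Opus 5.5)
Your argument is correct and arrives at the same constant $\sqrt{2}/\pi\approx 0.45$ as the paper, but by a different route. The paper does not decompose $Y=\rho X+\sigma Z$. Instead it uses the fact that the derivative in $r$ of $\Pr[X\le a,\,Y\le a]$ is the bivariate density $\phi_2(a,a;r)=\frac{1}{2\pi\sqrt{1-r^2}}e^{-a^2/(1+r)}$ (Plackett's identity). It notes $\Gamma_{-1}(q)=0$ because $a=\Phi^{-1}(q)<0$ (your $t$), and writes $\Gamma_\rho(q)=\int_{-1}^{\rho}\phi_2(a,a;r)\,dr$. The substitutions $u=1+r$ and $v=a^2/u$ turn this into an incomplete-gamma integral, and the asymptotic $\int_x^\infty v^{-3/2}e^{-v}\,dv\sim 2x^{-1/2}$ as $x=a^2/(4\zeta)\to 0$ gives $\Gamma_\rho(q)\sim\sqrt{2\zeta}/\pi$. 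That computation gives the exact first-order asymptotics (matching upper and lower bounds) in one stroke, at the price of invoking the Plackett identity and working throughout with $\sim$.

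Your conditioning argument is more elementary and directly yields the one-sided inequality that is actually needed. It has explicit error terms and a transparent order of limits ($\eta$ and $M$ fixed first, then $\zeta\to 0$). Your remark that the crude density bound $\phi(1)$ only gives about $0.27\sqrt{\zeta}$ correctly explains why localizing $X$ to $[-\eta,0]$ is essential.

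One piece of bookkeeping is worth making explicit. For $\sigma|z|\lesssim|t|$ the interval for $X$ is empty, so its ``length'' should be read as the signed quantity $-|t|-(\sigma z+|t|)/|\rho|$. For $z\le 0$ and $|\rho|\ge 1/2$ this is at least $\sigma|z|-3|t|$. So the conditional bound $\phi(\eta)(\sigma|z|-3|t|)$ remains valid (trivially so when the interval is empty), and your estimate goes through unchanged.
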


Assuming the above claim (which will be proved later), we have the following corollary from Theorem~\ref{thm:AKS} by setting $\varepsilon\ll \sqrt{\zeta}$:

\begin{corollary}\label{cor:AKS}
There exists $\delta,\zeta > 0$ with $\delta^2>\zeta/33$ such that it is UG-hard to, given a regular graph $G = (V, E)$, distinguish between the following two cases.
\begin{description}
\item[Completeness:] There exists $|V|/2$  vertices that covers at least $(1-\zeta)$ fraction of $E$. 
\item[Soundness:]  For every $S\subseteq V$ of size $|V|/2$, there are at least $\delta\cdot |E|$ many edges which are not covered by any vertex in $S$.
\end{description}
\end{corollary}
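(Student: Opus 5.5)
We prove Corollary~\ref{cor:AKS} by instantiating Theorem~\ref{thm:AKS} with $q:=\frac12-\zeta$ and an error parameter $\varepsilon\ll\sqrt{\zeta}$. The only work is to translate ``independent set of size $q|V|$'' and ``edges inside $T$'' into the language of ``$|V|/2$ vertices covering edges''. For simplicity I assume $|V|$ is even, so that $|V|/2$ is an integer.

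\emph{Completeness.} Suppose $G$ has an independent set $I$ with $|I|\ge q|V|=(\frac12-\zeta)|V|$. Then $V\setminus I$ is a vertex cover of size at most $(\frac12+\zeta)|V|$.

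We now shrink it to size exactly $|V|/2$. Delete from $V\setminus I$ a set $D$ of at most $\zeta|V|$ vertices, chosen arbitrarily. Because $G$ is regular of degree $d_G$, the removed vertices are incident to at most
\[
\zeta|V|\cdot d_G = 2\zeta|E|
\]
edges. Hence the remaining $|V|/2$ vertices cover at least a $(1-2\zeta)$ fraction of $E$. If the cover already has fewer than $|V|/2$ vertices, we pad it with arbitrary vertices, which cannot reduce coverage.

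So after renaming $\zeta' := 2\zeta$, the completeness condition holds with parameter $\zeta'$.

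\emph{Soundness.} Take any $S\subseteq V$ with $|S|=|V|/2$, and let $T:=V\setminus S$. The edges not covered by $S$ are exactly the edges with both endpoints in $T$. Here $\mu=|T|/|V|=\frac12\ge q$.

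Theorem~\ref{thm:AKS} therefore guarantees at least $(\Gamma_{-q/(1-q)}(1/2)-\varepsilon)|E|$ uncovered edges. We need $\Gamma_{-q/(1-q)}(1/2)\ge \Gamma_{-q/(1-q)}(q)$. This holds because $\Gamma_\rho(\mu)=\Pr[X\le\Phi^{-1}(\mu),\,Y\le\Phi^{-1}(\mu)]$ is monotone non-decreasing in $\mu$: increasing $\mu$ enlarges the quadrant.

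Claim~\ref{claim:zeta} then gives at least $\left(\frac{\sqrt{\zeta}}{3}-\varepsilon\right)|E|$ uncovered edges. Set $\delta:=\frac{\sqrt{\zeta}}{3}-\varepsilon$ with $\varepsilon:=\frac{\sqrt{\zeta}}{100}$, so that $\delta\ge 0.32\sqrt{\zeta}$ and
\[
\delta^2\ge 0.1\,\zeta = 0.05\,\zeta'.
\]
Since $0.05>1/33$, we get $\delta^2>\zeta'/33$, which is the required relation between $\delta$ and $\zeta'$.

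\emph{Remaining obstacle.} The only substantive step is Claim~\ref{claim:zeta}, which the paper proves separately. Granting it, the argument has two steps:
\begin{itemize}
\item the completeness trimming, which relies on regularity so that removing $\zeta|V|$ vertices costs only a $2\zeta$ fraction of the edges;
\item the monotonicity of $\Gamma_\rho$ in $\mu$.
\end{itemize}
With these, the constants match: the factor $2$ lost in $\zeta$ is absorbed by the slack between $1/20$ and $1/33$.
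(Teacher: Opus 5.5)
Your proof is correct and follows the paper's argument. You instantiate Theorem~\ref{thm:AKS} with $q=\frac12-\zeta$, trim the $(\frac12+\zeta)|V|$ cover to $|V|/2$ vertices at a cost of a $2\zeta$ fraction of edges, and combine Claim~\ref{claim:zeta} with monotonicity of $\Gamma_\rho$ in $\mu$ (which the paper also uses, only implicitly) to get $\delta=\Theta(\sqrt{\zeta})$. The differences are cosmetic: the paper trims the cover by an averaging argument that does not need regularity, rather than your degree-based deletion, and takes $\delta=\sqrt{\zeta_0}/4$ with $\varepsilon=o(\sqrt{\zeta_0})$ instead of your explicit $\varepsilon=\sqrt{\zeta}/100$.
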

\begin{proof}
 Let $G=(V,E)$ be a $d$-regular hard instance given by Theorem~\ref{thm:AKS} plugging in $q:=\frac{1}{2}-\zeta_0$. Then, from Claim~\ref{claim:zeta}, we have that either $G$ has a vertex cover of size at most $\left(\frac{1}{2}+\zeta_0\right) \cdot |V|$, or for every subset $S \subseteq V$ of size at most $\left(\frac{1}{2}+\zeta_0\right) \cdot |V|$,  the number of edges \emph{not} covered by $S$ is  at least $(\frac{\sqrt{\zeta_0}}{3} - \varepsilon) \cdot |E|$. Moreover, in the completeness case, by an averaging argument, there is a subset $S^* \subseteq V$ of size $|V|/2$ such that $S^*$ covers at least $\frac{\nicefrac{1}{2}}{\nicefrac{1}{2}+\zeta_0}$ fraction of edges. Note that $\frac{\nicefrac{1}{2}}{\nicefrac{1}{2}+\zeta_0}>1-2\zeta_0$. On the other hand, in the soundness case, by setting $\varepsilon=o(\sqrt{\zeta_0})$, we can conclude that for every subset $S \subseteq V$ of size  $ |V|/2$,  the number of edges \emph{not} covered by $S$ is  at least $\frac{\sqrt{\zeta_0}}{4}  \cdot |E|$. Then, we define $\delta:= \frac{\sqrt{\zeta_0}}{4}$ and $\zeta:= 2\zeta_0$ to obtain the theorem statement. Note that $\delta^2=\frac{\zeta}{32}> \frac{\zeta}{33}$.
\end{proof}
 
This corollary is weaker than the promise given in XXH in the following ways. First, and most importantly, the hard instances of Corollary~\ref{cor:AKS} need not be vertex expanders. Second, the conditional lower bound under the Unique Games Conjecture is only against polynomial-time algorithms, whereas XXH rules out sub-exponential time algorithms.  

Between the two remarks made above, the first one is the major obstacle, and this is mainly to do with our state of poor understanding of vertex expanders. Once the  toolkit develops on this topic, it is conceivable that some additional progress can be made on XXH.

We close this section, with the proof of Claim~\ref{claim:zeta}.
\begin{proof}[Proof of Claim~\ref{claim:zeta}]Let $a := \Phi^{-1}(1/2 - \zeta)$. As $\zeta \to 0$, we have the expansions:$$ a \sim -\sqrt{2\pi}\zeta \quad \text{and} \quad \delta := 1+\rho = \frac{4\zeta}{1+2\zeta} \sim 4\zeta. $$Since $a < 0$, we have $\Gamma_{-1}(q) = \Pr[X \le a \wedge -X \le a] = 0$. We compute $\Gamma_{\rho}(q)$ by integrating the bivariate Gaussian density $\phi_2(a, a; r)$ from $r=-1$ to $\rho$. Letting $u = 1+r$, the integral is dominated by the behavior near $u=0$:\begin{align*}\Gamma_{\rho}(q) &= \int_{-1}^{\rho} \phi_2(a, a; r) ,dr\sim \int_{0}^{4\zeta} \frac{1}{2\pi\sqrt{2u}} \exp\left(-\frac{a^2}{u}\right) ,du.\end{align*}Substituting $v = a^2/u$ and using the asymptotic $\int_{x}^{\infty} v^{-3/2}e^{-v} dv \sim 2x^{-1/2}$ for small $x$:\begin{align*}\Gamma_{\rho}(q) &\sim \frac{|a|}{2\pi\sqrt{2}} \int_{\frac{a^2}{4\zeta}}^\infty v^{-3/2} e^{-v} ,dv\sim \frac{|a|}{2\pi\sqrt{2}} \cdot 2\left(\frac{|a|}{\sqrt{4\zeta}}\right)^{-1}= \frac{\sqrt{2\zeta}}{\pi}.\end{align*}Finally, since $\frac{\sqrt{2}}{\pi} \approx 0.45 > \frac{1}{3}$, the claim holds.\end{proof}

\section{Conditional Lower Bound for Euclidean $k$-means with Few Clusters}\label{sec:lb_k}
We prove in this section our main  result, the hardness of approximating $k$-means when parameterized by $k$. The formal theorem is:

\begin{theorem}[Fine-Grained Hardness for Approximation of Euclidean $k$-means from XXH]\label{thm:lb_k}
Suppose XXH$(\delta, \zeta, \alpha)$ is true for some constants $\delta, \zeta, \alpha \in (0, 1)$ satisfying $\delta \le 10^{-3}$, $\zeta \le \delta^5/10$, and $\alpha \le \delta^{10}$. Let $L>1$ be the constant defining the degree $d = (\log |V|)^L$ in the XXH hypothesis. 

Let $\tilde{k}:\mathbb{N}\to\mathbb{N}$ be a non-decreasing function and $\tilde \varepsilon:\mathbb{N}\to (0,1)$ be a non-increasing function. 
Define $f(n) := \frac{\tilde k(n)}{\tilde\varepsilon(n)}$. Assume that for all sufficiently large $n$, $f(n)$ satisfies the smoothness condition $f(n) \le C \cdot f(n-1)$ for some constant $C \ge 1$. Furthermore, assume the following asymptotic limits hold:
\begin{itemize}
    \item $\tilde k(1) \ge \frac{1}{\tilde{\varepsilon}(1)} \ge 2/\delta^5$.
    \item $f(n) = \omega(\log n)$ \quad and \quad $f(n) = o\left(\frac{n}{\log^L n}\right)$.
    \item $\tilde \varepsilon(n) = o\left(\frac{1}{(\log n)^{\omega(1)}}\right)$.
    \item $\tilde k(n) \cdot \tilde\varepsilon(n) = \omega\left((\log n)^{2L}\right)$.
\end{itemize}

Then the following holds for all $\beta>0$:
No randomized algorithm can, given as input exactly $n$ points in $\mathbb{R}^{\poly(n)}$ and an integer $\tilde k(n)$, run in $2^{f(n)^{1-\beta}}\cdot \poly(n)$ time and output a $(1+\tilde\varepsilon(n))$-approximate estimate of the $\tilde{k}(n)$-means cost with probability at least $0.9$.
\end{theorem}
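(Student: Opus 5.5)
We argue by contradiction: from a fast $(1+\tilde\varepsilon)$-approximate estimator for $\tilde k$-means we build an XXH distinguisher running in time $2^{n_0^{1-\beta'}}$. The reduction is the color-coding embedding of Section~\ref{sec:overview}.

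\textbf{The reduction.} Take an XXH instance $G=([n_0],E)$: it is $d$-regular with $d=(\log n_0)^L$, and it is a $(\polylog n_0,\alpha)$-small set vertex expander. We choose the output size $N$ so that the number of clusters $k:=\tilde k(N)$ satisfies $k\approx \delta^5 n_0 d/ \tilde\varepsilon(N)\cdot \Theta(1/|E|)$, that is, $\tilde\varepsilon(N)\cdot 3|E|\approx \delta^5 kd/2$. This is exactly the accuracy needed to separate the completeness cost from the soundness cost, given below. Equivalently, $f(N)=k/\tilde\varepsilon \approx \Theta(|E|/d)=\Theta(n_0)$. Since $f(N)\le C f(N-1)$, a suitable $N$ exists, and we pad by duplicating or adding far-away dummy points so that exactly $N$ points appear. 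The condition $f(N)=o(N/\log^L N)$ guarantees $N\ge |E|$, and $f=\omega(\log N)$ keeps $N$ subexponential in $n_0$. We then color every vertex uniformly from $[k]$ and map each edge $\{u,v\}$ to $\e_u+\e_v+\e_{c_u}+\e_{c_v}$. The estimator separates the two cases; after accounting for the padding and the color-coding error, this decides XXH. Its running time is $2^{f(N)^{1-\beta}}\poly(N)=2^{O(n_0)^{1-\beta}}\poly(N)$, which contradicts XXH provided $\poly(N)\le 2^{n_0^{1-\beta/2}}$. That inequality follows from $f(N)=\omega(\log N)$.

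\textbf{Completeness.} Take the $n_0/2$ vertices covering a $(1-\zeta)$ fraction of $E$. Assign each covered edge to the color of one of its covering vertices, which gives $k$ clusters, and place the uncovered edges arbitrarily. By Lemma~\ref{lem:folklore}, the cost of a cluster is a sum of pairwise distances over $|C|$. Three kinds of pairs matter. Pairs sharing the cover vertex have squared distance $2$ or less. Pairs sharing a color have reduced distance. All other pairs have distance $8$ minus the shared contributions. Using concentration (Theorem~\ref{thm:chernoff}), each color class has about $n_0/(2k)$ cover vertices and balanced sizes. Since $k\tilde\varepsilon=\omega(\log^{2L})$, the clusters are large enough for concentration to hold. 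A direct computation, as in the proof of Lemma~\ref{lem:costCluster}, then gives a total cost of at most $3|E|-(1-7\zeta)kd$.

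\textbf{Soundness (main obstacle).} Suppose some clustering costs at most $3|E|-(1-\delta^5)kd$. We use the per-cluster formula: cost $3|C_i|-1+(\gamma_i-\kappa_i)(|C_i|-1)-\frac{1}{|C_i|}\sum_v d_{i,v}^2$. First we show that the terms $\gamma_i,\kappa_i$ force most clusters to be nearly monochromatic; this is the monochrome lemma. Next, averaging gives a set $I$ of clusters covering a $(1-38\delta^2)$ fraction of the edges, each with $\sum_v d_{i,v}^2/\sum_v d_{i,v}\ge(1-\delta^3)2d$. The hard step is turning this degree concentration into a cover. Because degrees in $G$ are at most $d$, the ratio bound means most edge mass of $G_i$ sits at vertices of near-full degree $\approx d$ in $G_i$, each of them star-like. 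Here we use vertex expansion. Each monochromatic class holds about $n_0/k\le n_0/\polylog$ vertices, so it is a small set. With $\alpha\le\delta^{10}$, near-unique neighborhoods rule out the $K_{d,d}$-type configurations in which an edge sees $\approx 2d$ neighbors in its cluster. Therefore the vertices of near-full degree, taken over $i\in I$, number about $|E|/d\approx n_0/2$ with only slight excess. Together they cover a $(1-10\delta^{1.5})$ fraction of $E$. Trimming to exactly $n_0/2$ vertices loses only an $O(\delta^{1.5})$ fraction, which contradicts the soundness guarantee of at least a $\delta$ fraction uncovered, since $\delta\le10^{-3}$. I expect this expansion-based counting, which bounds the number of high-degree vertices and controls double counting of edges, to be the most delicate step.
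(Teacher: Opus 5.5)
Your outline follows the paper's route. You choose $N$ minimal with $f(N)$ above a constant multiple of $n_0$, so that smoothness gives $f(N)=\Theta(n_0)$. You build the color-coded gap instance of Theorem~\ref{thm:lb_k_n}, pad to exactly $N$ points, and check that a $(1+\tilde\varepsilon(N))$-estimate separates the two cases. The genuine gap is in the parameter bookkeeping, which is the real content of this theorem once Theorem~\ref{thm:lb_k_n} is available. You never use the hypothesis $\tilde\varepsilon(n)=o\bigl((\log n)^{-\omega(1)}\bigr)$, and two of your steps need it.

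(i) \emph{Running time.} Together with $f(N)=\Theta(n_0)$, the condition $f(N)=\omega(\log N)$ only yields $\log N=o(n_0)$, that is, $\poly(N)=2^{o(n_0)}$. This does not contradict XXH, which is stated against $2^{n^{1-\beta}}$-time algorithms. For example, $f(n)=\log n\cdot\log\log n$ satisfies $f=\omega(\log n)$ but would give $\poly(N)=2^{\Theta(n_0/\log n_0)}$. The paper instead uses $f(N)\ge 1/\tilde\varepsilon(N)=(\log N)^{\omega(1)}$, hence $\log N=n_0^{o(1)}$. The bound $f\ge\tilde k\tilde\varepsilon=\omega(\log^{2L}N)$ would also suffice.

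(ii) \emph{Preconditions on $k$.} Theorem~\ref{thm:lb_k_n} requires $\sqrt{n_0}\,d<k=o(n_0/d^{\omega(1)})$, and you never check this. Since $k=\tilde\varepsilon(N)f(N)=\Theta(\tilde\varepsilon(N)\,n_0)$, one gets $k^2/n_0=\Theta(\tilde k\tilde\varepsilon)=\omega(d^2)$. That is the actual role of $\tilde k\tilde\varepsilon=\omega(\log^{2L}n)$: it feeds Lemma~\ref{lem:kappa_i} and the negligibility of bad edges, not ``clusters large enough for concentration.'' The upper bound $k/n_0=\Theta(\tilde\varepsilon(N))=o(d^{-\omega(1)})$ needs the $\tilde\varepsilon$ hypothesis. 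Without it, $\tilde\varepsilon$ could be a constant below $\delta^5/2$, making $k=\Theta(n_0)$. Then several soundness steps break down:
\begin{itemize}
\item discarding the $O(kd/\tau^2)$ edges in small clusters is no longer $o(|E|)$;
\item the bound $\sum_{i\in B_\gamma}|E_i|\le 22kd=o(|E|)$ fails;
\item small-set expansion can no longer be applied to the $O(k/\sqrt{\alpha})$ high-degree vertices of a cluster, and that is exactly what excludes your $K_{d,d}$-type configurations.
\end{itemize}

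There are also some smaller issues.
\begin{itemize}
\item Far-away dummy points consume one center. The core instance must therefore be colored and clustered with $\tilde k(N)-1$ colors, not $\tilde k(N)$. The paper absorbs the $-1$ using $\tilde k\ge 2/\delta^5$.
\item The separation condition is $\tilde\varepsilon\cdot 3|E|<(\delta^5-7\zeta)kd$, and $\delta^5-7\zeta$ can be as small as $0.3\delta^5$. So your target $\tilde\varepsilon\cdot 3|E|\approx\delta^5kd/2$ is too weak when $\zeta$ is close to $\delta^5/10$. You need the one-sided choice $f(N)\ge 3n_0/(\delta^5-7\zeta)$, which is the paper's $C_{\mathrm{gap}}$ and also covers the lost center.
\item In the completeness sketch, two edges sharing the cover vertex are at squared distance $4$ (distance $2$), not squared distance $2$. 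Squared distance $2$ occurs only when the other endpoints also share a color, which Proposition~\ref{proposition:badVertex} makes negligible. It is the saving $6-4=2$ per ordered pair that produces the $kd$ term.
\item Finally, $\sum_v d_{i,v}^2/\sum_v d_{i,v}\le d$, so the bound $(1-\delta^3)2d$ cannot hold. The correct form is $\sum_v d_{i,v}^2/|E_i|\ge(1-\delta^3)d$, as in Lemma~\ref{lem:avgStruct}.
\end{itemize}
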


In this section, we actually prove the following theorem, which applies for $\tilde {k}$ and $\tilde\varepsilon$ bounded in a specific way and then show that it immediately implies Theorem~\ref{thm:lb_k} above.

\begin{theorem}\label{thm:lb_k_n}
Let $\delta, \zeta, \alpha \in (0, 1)$ be some constants satisfying $\delta \le 10^{-3}$, $\zeta \le \delta^5/10$, and $\alpha \le \delta^{10}$.
There is a randomized algorithm running in linear time in the input size, which takes as input an integer $k$ (where $k>\sqrt{|V|}\cdot d$, $k=o(|V|/d^{\omega(1)})$) and a $d$-regular $(\polylog |V|, \alpha)$-small set vertex expander $G=(V,E)$ (where $d=(\log |V|)^{L}$, for some $L>1$), and outputs a point-set $P\subseteq \mathbb{R}^{O(|V|)}$ of at most $|E|$ points such that with probability at least $0.95$, the following holds:
   \begin{description}
       \item[Completeness:] If there are $|V|/2$ vertices covering at least $1-\zeta$ fraction of $E$, then there is a clustering of $P$ such that the $k$-means cost is at most $3|E|-(1-7\zeta)kd$.
       \item[Soundness:] If every $|V|/2$ vertices miss at least $\delta$ fraction of $E$, then every clustering of $P$ has $k$-means cost at least $3|E|-(1-\delta^5)kd$.
   \end{description}
\end{theorem}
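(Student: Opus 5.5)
The reduction is the color-coding embedding from \cref{sec:overview}. Color every vertex $v\in V$ independently and uniformly with $c_v\in[k]$. Map each edge $\{u,v\}$ to the client $\e_u+\e_v+\e_{c_u}+\e_{c_v}\in\R^{|V|+k}$, where the color coordinates are a separate block. This takes linear time and produces at most $|E|$ points.

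Before either direction, we fix a single coloring that is good with probability $0.95$. We want three properties.
\begin{itemize}
\item Each color class has $(1\pm o(1))|V|/k$ vertices.
\item Only a tiny fraction of edges are monochromatic; the expected fraction is $1/k$.
\item For every fixed set $S$ of size $|V|/2$, each color class of $S$ has $(1\pm o(1))|V|/(2k)$ vertices.
\end{itemize}
These follow from \Cref{thm:chernoff}, using $k=o(|V|/d^{\omega(1)})$ so that class sizes are large. The third property is only needed for the completeness witness, so no union bound over all $S$ is required there.

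\textbf{Completeness.} Let $S$ be the $|V|/2$ vertices covering a $(1-\zeta)$ fraction of $E$. Assign each covered edge to one covering endpoint, so the edges form stars centered in $S$. Cluster $i$ is the union of the stars whose centers have color $i$. Uncovered edges go to any cluster; a crude bound of $O(1)$ extra cost per edge (at most $\zeta|E|$ of them) must be charged against $\zeta kd$, so we should place them cheaply, or simply absorb them as described below.

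Apply \Cref{lem:folklore} to each cluster, i.e.\ the exact formula of \Cref{lem:informalcost}. Inside cluster $i$, two edges from the same star share both the center coordinate and its color. Hence the cost is about $3|C_i|-1$ minus $\frac{1}{|C_i|}\sum_v d_{i,v}^2$. The centers contribute $\sum_{v\in S_i} d^2$, and $|C_i|\approx |S_i|d$, so this term is $\approx d$. The color terms give further corrections: every edge in cluster $i$ carries color $i$, so almost all pairs share exactly one color and $\gamma_i,\kappa_i\approx 0$. Summing over $i$ gives $3|E|-kd$, up to error terms from the $\zeta$-fraction of uncovered edges and from pairs counted by $\kappa_i$. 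These errors total at most $7\zeta kd$, which is a bookkeeping computation using $k>\sqrt{|V|}d$.

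\textbf{Soundness.} This follows the overview's chain.
\begin{enumerate}
\item Derive the exact cost formula of \Cref{lem:informalcost}, by expanding $\|p_1-p_2\|^2$ over vertex and color coordinates and using \Cref{lem:folklore}.
\item Assume the cost is below $3|E|-(1-\delta^5)kd$. Averaging then yields the set $I$ of \Cref{lem:avgStruct}: these clusters have small $\gamma_i,\kappa_i$, cover a $(1-38\delta^2)$ fraction of the edges, and satisfy $\sum_v d_{i,v}^2\ge(1-\delta^3)2d\,|E_i|$.
\item Use \Cref{lem:monochrome} to give each $i\in I$ a dominant color. Since a color class has only $\approx|V|/k\le|V|/\polylog|V|$ vertices, the dominant vertex set is small enough for the small-set vertex expansion of \Cref{def:smallSetExpander} to apply.
\item Show that the squared-degree inequality forces $G_i$ to be nearly a union of stars. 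By expansion, a small set $T$ has about $(1-\alpha)d|T|$ distinct neighbors, so the high-degree vertices of $G_i$ are essentially the only vertices carrying edges in $G_i$. Take these high-degree ``center'' vertices in each cluster, with $d_{i,v}\ge(1-\sqrt\delta)d$, and let $S$ be their union. Counting gives $|S|\le(1+O(\delta^{1.5}))|V|/2$, because $\sum_i|E_i|\le|E|=d|V|/2$ and each center has degree about $d$. Moreover $S$ covers all but an $O(\delta^{1.5})$ fraction of the edges. Trimming $S$ to size $|V|/2$ loses only $O(\delta^{1.5})|E|$ further edges, which is less than $\delta|E|$ and contradicts soundness.
\end{enumerate}

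The main obstacle is step 4: converting the inequality $\sum_v d_{i,v}^2/\sum_v d_{i,v}\ge(1-\delta^3)2d$ into a genuine star structure. The ratio $2d$ means that in a typical edge, one endpoint has degree nearly $d$ in $G_i$. A dense bipartite block such as $K_{d,d}$ would have the same ratio, and this is exactly what vertex expansion rules out. The plan is to classify the edges of $G_i$ by whether one endpoint has $d_{i,v}\ge(1-\delta)d$. A Markov-type argument on the deficit $2d-(d_{i,u}+d_{i,v})$ shows that few edges lack such an endpoint. Expansion then shows that the centers' neighborhoods are nearly disjoint, so the center count is controlled and the counting in step 4 goes through.
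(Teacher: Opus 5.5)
Step~4, which you rightly call the crux, rests on a misstated inequality, and the Markov argument you propose fails once the inequality is corrected.

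\textbf{The factor error.} Your step~2 bound is off by a factor $2$, and the ratio form in your last paragraph by a factor $4$; the latter reproduces a slip in the informal overview statement. Four facts show this.
\begin{itemize}
\item The completeness clustering is cheap, yet its clusters, essentially unions of stars, have $\frac{1}{|E_i|}\sum_v d_{i,v}^2\approx d$.
\item Summing \Cref{lem:costCluster} over the clusters only yields $\sum_i\frac{1}{|E_i|}\sum_v d_{i,v}^2\ge(1-1.1\delta^5)kd$.
\item \eqref{eq:inI} caps each term at $(1+5\sqrt\alpha)d$.
\item $\sum_v d_{i,v}^2/\sum_v d_{i,v}\le d$ always holds.
\end{itemize}
What the formal \Cref{lem:avgStruct} gives is $\E_{\{u,v\}\in E_i}[d_{i,u}+d_{i,v}]=\frac{1}{|E_i|}\sum_v d_{i,v}^2\ge(1-\delta^3)d$.

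\textbf{Why Markov fails.} Against the baseline $d$, the per-edge deficit can be negative (down to $-d$), so Markov does not apply directly. Indeed, a $K_{d/2,d/2}$-like piece satisfies the corrected inequality although no endpoint has degree near $d$. With your baseline $2d$, Markov would instead force \emph{both} endpoints of most edges to have degree $\approx d$. That contradicts expansion outright and excludes the very stars you are after. Likewise, ``the high-degree vertices are essentially the only vertices carrying edges'' does not follow from small-set expansion alone. It is exactly what must be proved by combining expansion with the squared-degree bound, whereas your plan uses expansion only afterwards, for counting.

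\textbf{The missing ingredient.} You need a way to designate one endpoint of each edge as its center before applying Markov. The paper uses the dominant color that you introduce in step~3 but never use.
\begin{itemize}
\item The color-$c_i$ endpoints $\tilde V_i$ of $\tilde E_i$ form an independent set meeting every edge of $\tilde E_i$ exactly once.
\item Since $|\tilde V_i|\le 1.1n/k$, expansion applies, and Claim~\ref{claim:S_i} shows almost all these edges go to unique neighbors. So $\tilde G_i$ is a star forest whose leaves contribute exactly $1$ per edge.
\item Claim~\ref{claim:simple_focs} then turns the bound into $\sum_{\tilde V_i}\tilde d_{i,v}^2\ge(1-\theta)d\sum_{\tilde V_i}\tilde d_{i,v}$ with $\theta\approx\delta^3$. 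Here Markov on $d-\tilde d_{i,v}\ge 0$ is legitimate.
\end{itemize}

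A repair closer to your plan also works. The set $W'_i=\{v: d_{i,v}>\sqrt\alpha d\}$ is small because $|E_i|<16kd$ (\Cref{lem:giant_cluster_strict}), so by expansion it spans at most $2\sqrt\alpha|E_i|$ edges. Off those edges $d_{i,u}+d_{i,v}\le(1+\sqrt\alpha)d$. Markov on $\bigl((1+\sqrt\alpha)d-d_{i,u}-d_{i,v}\bigr)_+$ then shows that all but a $\delta^{1.5}+O(\delta^{3.5})$ fraction of edges have an endpoint of degree at least $(1-\delta^{1.5})d$.

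\textbf{The threshold.} In either route the center threshold $1-\lambda$ needs $\delta^2\ll\lambda\ll\delta$. The paper has $\lambda\approx 2\delta^{1.5}$, balancing the Markov loss $\delta^3/\lambda$ against the trimming loss $\lambda$. With $(1-\sqrt\delta)d$ or $(1-\delta)d$, the counting only bounds the trimming loss by about $\sqrt\delta|E|$, respectively $\delta|E|$. That gives no contradiction with soundness, and your claimed $|S|\le(1+O(\delta^{1.5}))|V|/2$ does not follow.

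\textbf{Completeness.} The uncovered edges cannot go to an arbitrary cluster. In a cluster none of whose colors they carry, each costs about $2$ above the baseline $3$, i.e.\ up to $\Theta(\zeta|E|)\gg\zeta kd$ overall. The paper puts each into the cluster of one endpoint's color, so $\gamma_i=0$ and the edge only dilutes $\frac{1}{|C_i|}\sum_v d_{i,v}^2$. It then bounds this dilution using $|\tilde C_i|\ge nd/(6k)$ from Fact~\ref{fact:sizeTCi}.
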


Now the proof of Theorem~\ref{thm:lb_k} follows as an immediate corollary by a simple duplication technique.

\begin{proof}[Proof of Theorem~\ref{thm:lb_k}]
Assume, for the sake of contradiction, that there exists a constant $\beta \in (0,1)$ and a randomized algorithm $\mathcal{A}$ that, given $n$ points and a target cluster count $\tilde{k}(n)$, runs in time $2^{f(n)^{1-\beta}} \cdot \poly(n)$ and outputs a $(1+\tilde\varepsilon(n))$-approximation to the $\tilde k(n)$-means cost with probability at least $0.95$. We use $\mathcal{A}$ to refute the XXH hypothesis in sub-exponential time.

Let $G = (V, E)$ be an instance of XXH on $N = |V|$ vertices with degree $d = (\log N)^L$.  
Define $C_{\mathrm{gap}} := \frac{3}{\delta^5 - 7\zeta}$. The condition $\zeta \le \delta^5/10$ ensures $\delta^5 - 7\zeta \ge 0.3\delta^5 > 0$, so $C_{\mathrm{gap}}$ is well-defined and strictly positive. 
Since $f(n) = \omega(\log n)$ diverges, we let $n^*$ be the minimal integer such that $f(n^*) \ge C_{\mathrm{gap}} N$. By minimality, $f(n^*-1) < C_{\mathrm{gap}} N$. Applying the smoothness condition $f(n) \le C \cdot f(n-1)$, we obtain $C_{\mathrm{gap}} N \le f(n^*) < C \cdot C_{\mathrm{gap}} N$, establishing $f(n^*) = \Theta(N)$.

Let the target number of core clusters be $k_{\mathrm{core}} := \tilde k(n^*) - 1$. We verify that $k_{\mathrm{core}}$ satisfies the preconditions of Theorem~\ref{thm:lb_k_n} for $G$. Since $f(n^*) = \Theta(N)$ and $f(n) = o(n/\log^L n)$, we  have $\log N = \Theta(\log f(n^*)) =O(\log n^*)$.
\begin{itemize}
    \item \textbf{Lower bound $k_{\mathrm{core}} > \sqrt{N} d$:} We have $\frac{k_{\mathrm{core}}^2}{N} = \Theta\Big(\frac{\tilde k(n^*)^2}{f(n^*)}\Big) = \Theta\big(\tilde k(n^*) \tilde \varepsilon(n^*)\big)$. By the theorem's assumptions, this is $\omega\big((\log n^*)^{2L}\big) = \omega\big((\log N)^{2L}\big) = \omega(d^2)$. For sufficiently large $N$, this   implies $k_{\mathrm{core}} > \sqrt{N} d$.
    \item \textbf{Upper bound $k_{\mathrm{core}} = o\big(N / d^{\omega(1)}\big)$:} We have $\frac{k_{\mathrm{core}}}{N} \le \frac{\tilde k(n^*)}{N} = \Theta\big(\tilde \varepsilon(n^*)\big) = o\big((\log n^*)^{-\omega(1)}\big) = o\big((\log N)^{-\omega(1)}\big) = o\big(d^{-\omega(1)}\big)$.
\end{itemize}

We apply Theorem~\ref{thm:lb_k_n} to $(G, k_{\mathrm{core}})$, producing a point set $P_{\mathrm{core}} \subset \mathbb{R}^{O(N)}$ of size $M \le |E| = Nd/2$. Since $N = \Theta(f(n^*))$ and $\log N = O(\log n^*)$, we have $M = O(f(n^*) (\log n^*)^L) = o(n^*)$. Thus, for sufficiently large $N$, $M \ll n^*$.

Because $\mathcal{A}$ requires exactly $n^*$ points, we construct a padded instance $P_{\mathrm{new}}$ as follows.  Let $W := \lfloor (n^*-1) / M \rfloor$. Since $M = o(n^*)$, $W \ge 1$. We insert exactly $W$ identical copies of every point in $P_{\mathrm{core}}$ into $P_{\mathrm{new}}$, yielding $WM$ points. The remaining $R := n^* - WM \ge 1$ points  satisfy $R \le M$. We introduce a new orthogonal basis vector $\mathbf{e}_{\mathrm{new}}$ and add $R$ identical dummy points at coordinate $x^* := \Lambda \mathbf{e}_{\mathrm{new}}$. 

We choose $\Lambda > 0$ to be sufficiently large (e.g., computable in polynomial time as $\Lambda^2 > n^* \cdot W \sum_{x \in P_{\mathrm{core}}} \|x\|^2$). Because $x^*$ lies on an orthogonal axis, this choice guarantees that any clustering placing a dummy point and a core point in the same cluster incurs a variance penalty strictly greater than the baseline cost of clustering all core points at the origin and placing a dedicated center at $x^*$. Thus, any optimal clustering  dedicates exactly $1$ center to $x^*$ (incurring zero cost for the identical dummy points) and uses the remaining $k_{\mathrm{core}}$ centers to optimally partition the $W$ copies of $P_{\mathrm{core}}$.\footnote{If $\mathcal{A}$ requires distinct points, an infinitesimally small perturbation avoids multisets without meaningfully affecting the continuous objective gap.}

The total number of points is $WM + R = n^*$. We set $K := \tilde k(n^*) = k_{\mathrm{core}} + 1$ and run $\mathcal{A}$ on $(P_{\mathrm{new}}, K)$. 
Since exact point duplication scales the $k$-means variance objective uniformly, the optimal $K$-means cost is exactly $W \cdot \cost_{k_{\mathrm{core}}}(P_{\mathrm{core}})$. By Theorem~\ref{thm:lb_k_n}, we have:
\begin{description}
    \item[Completeness:] $\cost(P_{\mathrm{new}}) \le W \cdot C_{\mathrm{comp}} \le W \big(3|E| - (1-7\zeta)k_{\mathrm{core}}d\big)$.
    \item[Soundness:] $\cost(P_{\mathrm{new}}) \ge W \cdot C_{\mathrm{sound}} \ge W \big(3|E| - (1-\delta^5)k_{\mathrm{core}}d\big)$.
\end{description}

To successfully distinguish the two cases, the relative error of $\mathcal{A}$ must not bridge the gap: $(1+\tilde\varepsilon(n^*)) W C_{\mathrm{comp}} < W C_{\mathrm{sound}}$, which rearranges to:
\[ 
    \tilde\varepsilon(n^*) C_{\mathrm{comp}} < C_{\mathrm{sound}} - C_{\mathrm{comp}} = k_{\mathrm{core}} d (\delta^5 - 7\zeta). 
\]
Using the worst-case bound $C_{\mathrm{comp}} \le 3|E| = 1.5 N d$, it suffices to guarantee $1.5 N \tilde\varepsilon(n^*) < k_{\mathrm{core}} (\delta^5 - 7\zeta)$, or equivalently $N < \frac{\delta^5 - 7\zeta}{1.5} \frac{k_{\mathrm{core}}}{\tilde\varepsilon(n^*)}$. 
By our initial choice of $n^*$, we have $N \le \frac{\delta^5 - 7\zeta}{3} f(n^*) = \frac{\delta^5 - 7\zeta}{3} \frac{\tilde k(n^*)}{\tilde\varepsilon(n^*)}$.
Since $\tilde k(1) \ge 2/\delta^5 \ge 2000$ and $\tilde k$ is a non-decreasing function, $k_{\mathrm{core}} = \tilde k(n^*) - 1 \ge \frac{1}{2}\tilde k(n^*)$. This yields:
\[ N \le \frac{\delta^5 - 7\zeta}{3} \frac{k_{\mathrm{core}} + 1}{\tilde\varepsilon(n^*)} < \frac{\delta^5 - 7\zeta}{1.5} \frac{k_{\mathrm{core}}}{\tilde\varepsilon(n^*)} \]
Thus, $\mathcal{A}$ correctly distinguishes the instances.

By the union bound, the reduction mapping and $\mathcal{A}$ succeed simultaneously with probability at least $1 - 0.05 - 0.05 = 0.90$.
The total time to resolve the XXH instance comprises the reduction overhead and the execution of $\mathcal{A}$. 
Because $\tilde{\varepsilon}(n) = o\left(1 / (\log n)^{\omega(1)}\right)$, we have $f(n) = \omega((\log n)^c)$ for any constant $c>0$. This implies $\log n^* = N^{o(1)}$, yielding $n^* = 2^{N^{o(1)}}$. Thus, the reduction time to compute bounds and construct $P_{\mathrm{new}}$ is $\poly(N, n^*) = 2^{o(N)}$. 
Algorithm $\mathcal{A}$ runs in $2^{f(n^*)^{1-\beta}} \cdot \poly(n^*) = 2^{O(N^{1-\beta})} \cdot 2^{N^{o(1)}}$ time. 
For any constant $\gamma \in (0, \beta)$ and sufficiently large $N$, the overall runtime is  bounded by $2^{N^{1-\gamma}}$. This decides the XXH hypothesis in strictly sub-exponential time, producing a contradiction.
\end{proof}

\begin{remark}\label{rem:weakXXH}
 It is worth noting that the reduction in Theorem~\ref{thm:lb_k_n} applies even for weaker versions of XXH. For example, if one day in the future, XXH was proved against algorithms running in time $2^{\sqrt{n}}$ instead of the currently stated $2^{n^{1-o(1)}}$ runtime algorithms, then we can recover that there is no randomized algorithm running in time much faster than $2^{\sqrt{k/\varepsilon}} \cdot \poly(n,d)$ that can $(1+\varepsilon)$-approximate the Euclidean $k$-means problem. 
\end{remark}

The rest of this section is dedicated to proving Theorem~\ref{thm:lb_k_n}. In Section~\ref{sec:construction_k} we provide the reduction from the instance of the vertex cover problem given by XXH to the Euclidean $k$-means problems with the parameters given in Theorem~\ref{thm:lb_k_n}. Next in Sections~\ref{sec:completeness}~and~\ref{sec:soundness}, we prove the completeness and soundness properties of the reduction, thus completing the proof of Theorem~\ref{thm:lb_k_n}. Finally, in Section~\ref{sec:exact}, we show how Theorem~\ref{thm:lb_k} implies Corollary~\ref{cor:exact}.

\subsection{Construction of the instance}\label{sec:construction_k}

Our starting point is a $d$-regular $(\polylog n,\alpha)$-small set vertex expander $G=(V=[n],E)$ (where $\alpha \le \delta^{10}$, $k>\sqrt{n}\cdot d$ and $d=(\log n)^{L}$, for some $L>1$).

Let $\rho:[n]\to [k]$ be a uniform random function (also viewed as a coloring), where $\rho(j) = i$ with probability $1/k$ independently for all $j\in [n]$ and $i\in [k]$. Let $[n]=U_1\dot\cup\cdots \dot\cup U_{k}$ be the partition induced by $\rho$, i.e., for all $i\in[k]$, we have $U_i:=\{j\in [n]|\rho(j)=i\}$. 
Since $\rho$ is a uniform random function, applying Chernoff bound, we have:

\begin{proposition}\label{prop:balanced}
 With probability at least 0.99, for all $i\in [k]$, we have $|U_i| = \frac{n}{k} \pm 3\sqrt{\frac{n}{k}\cdot \log k}$.
\end{proposition}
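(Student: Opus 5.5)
For each fixed color $i\in[k]$, I would write $|U_i|=\sum_{j\in[n]} X_j^{(i)}$, where $X_j^{(i)}:=\mathbbm{1}[\rho(j)=i]$. These indicators are i.i.d.\ Bernoulli with parameter $p=1/k$, so $\mu=\E|U_i|=n/k$. The plan is to apply a two-sided Chernoff bound to each $|U_i|$ with deviation $t:=3\sqrt{\mu\log k}$, and then take a union bound over the $k$ colors. For this to give total failure probability at most $0.01$, the per-color failure probability must be at most $0.01/k$, which requires a tail of the form $\exp(-c\,t^2/\mu)$ with $c\cdot 9\log k \ge \log k + \log 100$.

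The lower tail is immediate from Theorem~\ref{thm:chernoff}. Take $\lambda=t/\mu=3\sqrt{\log k/\mu}$, which is $<1$ because $\mu=n/k\ge n^{1-o(1)}\gg\log k$ by the assumption $k=o(n/d^{\omega(1)})$. This gives
\[
\Pr\bigl[|U_i|\le \mu-t\bigr]\le \exp\!\left(-\tfrac{\lambda^2\mu}{2}\right)=\exp\!\left(-\tfrac{9\log k}{2}\right)=k^{-4.5}.
\]

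For the upper tail I would use the KL form in Theorem~\ref{thm:chernoff} with $\eps=t/n$. The step that needs care is lower-bounding $D(p+\eps\,\|\,p)$. I would use the standard estimate
\[
D(p+\eps\,\|\,p)\ \ge\ \frac{\eps^2}{2(p+\eps)},
\]
which can be checked by Taylor expansion in $\eps$, or equivalently via the multiplicative Chernoff bound $\exp\bigl(-\lambda^2\mu/(2+\lambda)\bigr)$. This yields a tail of at most $\exp\bigl(-t^2/(2(\mu+t))\bigr)$. Since $\lambda=t/\mu=o(1)$, we have $\mu+t\le(1+o(1))\mu$, and so the tail is at most $\exp\bigl(-(1-o(1))\tfrac{9}{2}\log k\bigr)\le k^{-4}$.

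Summing the two tails and taking the union bound over all $k$ colors gives total failure probability at most $2k\cdot k^{-4}\le 0.01$, since $k$ is large (indeed $k>\sqrt{n}\,d$). The only step I would expect to need any attention is the KL lower bound for the upper tail. Everything else is routine, and the regime $\log k\ll n/k$ guaranteed by the hypotheses on $k$ is exactly what keeps $\lambda$ small.
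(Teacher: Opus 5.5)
Your proof is correct and follows essentially the same route as the paper: a Chernoff bound for each color with deviation $3\sqrt{(n/k)\log k}$, then a union bound over the $k$ colors. The paper applies the two-sided form $2\exp(-\lambda^2\mu/3)$ to get $2k^{-3}\le \frac{1}{100k}$ per color, whereas you split the two tails to match the forms actually stated in Theorem~\ref{thm:chernoff} and explicitly check $\lambda<1$. One small correction: $k=o(n/d^{\omega(1)})$ gives only $n/k=(\log n)^{\omega(1)}$, not $n/k\ge n^{1-o(1)}$, but that still yields $\log k=o(n/k)$ and hence $\lambda=o(1)$, which is all your argument needs.
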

\begin{proof}
 Fix an index $i \in [k]$. For each vertex $j \in [n]$, let $X_{j,i}$ be the indicator random variable for the event that vertex $j$ is assigned to set $U_i$.
 By the definition of $\rho$, the variables $\{X_{j,i}\}_{j=1}^n$ are independent Bernoulli trials with parameter $p = 1/k$. The expected size of the set $U_i$ is:
 \[
  \E[|U_i|] = \sum_{j=1}^n \E[X_{j,i}]   = \frac{n}{k}.
 \]
 
 We now apply the Chernoff bound (Theorem~\ref{thm:chernoff}) with the deviation term set to $3\sqrt{\frac{n}{k} \log k}$ to obtain the following:
 \[
  \Pr\left( \left| |U_i| - \frac{n}{k} \right| \ge 3\sqrt{\frac{n}{k} \log k} \right) \le 2 \exp\left( -\frac{9 \frac{n}{k} \log k}{3 \frac{n}{k}} \right)   = 2k^{-3}\le \frac{1}{100k},
 \]
when $k\ge 15$.

 By the Union Bound, the probability that the bound holds simultaneously for all $i \in [k]$ is at least $1-\frac{k}{100k}=0.99$.
\end{proof}

We next note that $\rho$ induces a coloring on the vertices of $G$, simply by labeling vertices $1$ to $k$.
For every edge $e\in E$, let $\rho(e)\subset [k]$ be the colors of the two endpoints of $e$, formalized as follows: if $e$ connects $u$ and $v$,
$$\rho(e) := \{\rho(u), \rho(v)\}.$$

\subsubsection{Properties of the random coloring}

Call an edge $e$ bad if $|\rho(e)|=1$. A simple probabilistic analysis bounds the number of bad edges:

\begin{proposition}\label{proposition:badEdges}
 With probability at least $0.99$, the fraction of bad edges is at most $100/k$.
\end{proposition}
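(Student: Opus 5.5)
The plan is a first-moment computation followed by Markov's inequality. For each edge $e=\{u,v\}\in E$, let $B_e$ be the indicator that $e$ is bad, i.e.\ that $\rho(u)=\rho(v)$. There are no self-loops in $G$, so $u\neq v$. The colors $\rho(u)$ and $\rho(v)$ are therefore independent and uniform on $[k]$, which gives
\[
\Pr[B_e=1]=\sum_{i\in[k]}\Pr[\rho(u)=i]\Pr[\rho(v)=i]=k\cdot\frac{1}{k^2}=\frac1k .
\]

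Next, let $X:=\sum_{e\in E}B_e$ be the number of bad edges. By linearity of expectation, $\E[X]=|E|/k$. Linearity needs no independence, so the dependence between edges sharing an endpoint causes no difficulty. Since $X\ge 0$, Markov's inequality yields
\[
\Pr\left[X\ge \frac{100}{k}|E|\right]\le \frac{\E[X]}{100|E|/k}=\frac1{100}.
\]
Hence, with probability at least $0.99$, the fraction $X/|E|$ of bad edges is at most $100/k$, as claimed.

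There is no real obstacle here. The one point to check is that we only use Markov's inequality, so the correlations among edges sharing endpoints do not matter. If a stronger concentration bound were ever needed, one could bound the variance instead. Two bad events $B_e$ and $B_{e'}$ for distinct edges are pairwise independent, even when the edges share an endpoint, because $\Pr[\rho(u)=\rho(v)=\rho(w)]=1/k^2$. This gives $\mathrm{Var}(X)\le |E|/k$, and Chebyshev's inequality would then give much tighter concentration. That refinement is unnecessary for the stated bound.
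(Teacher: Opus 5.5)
Your proposal is correct and matches the paper's proof: both compute $\Pr[\rho(u)=\rho(v)]=1/k$ per edge, use linearity of expectation to get an expected bad-edge fraction of $1/k$, and apply Markov's inequality at threshold $100/k$. Your pairwise-independence and variance remark is also correct, but, as you say, it is not needed for the stated bound.
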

\begin{proof}
Let $Y$ be the random variable counting the number of bad edges. We write $Y = \sum_{e \in E} I_e$, where $I_e$ is the indicator that edge $e=(u,v)$ is bad (i.e., $\rho(u)=\rho(v)$).
 
 Since vertex colors are chosen uniformly and independently, for any edge $e=(u,v)$, $\Pr(I_e=1) = \sum_{c=1}^k \Pr(\rho(u)=c \land \rho(v)=c) = \sum_{c=1}^k \frac{1}{k^2} = \frac{1}{k}$.
 
 By linearity of expectation, the expected fraction of bad edges is $\E[Y/|E|] = \frac{1}{|E|} \sum_{e \in E} \E[I_e] = \frac{1}{k}$. Applying Markov's inequality to the non-negative random variable $Y/|E|$:
 \[
  \Pr\left(\frac{Y}{|E|} \ge \frac{100}{k}\right) \le \frac{\E[Y/|E|]}{100/k} =   0.01.\qedhere
 \]
\end{proof}

Next, we call a pair of edges non-representative if $|\rho(e)\cup \rho(e')|<|e\cup e'|$. On the other hand, we call a pair of edges representative if $|\rho(e)\cup \rho(e')|=|e\cup e'|$. Similar to \Cref{proposition:badEdges}, we can bound the number of non-representative pairs:

\begin{proposition}\label{proposition:badpairs}
 With probability $0.99$, the fraction of non-representative pairs is at most $600/k$.
\end{proposition}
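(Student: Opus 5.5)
We follow the proof of \Cref{proposition:badEdges}: compute the expected fraction of non-representative pairs and then apply Markov's inequality. The pairs under consideration are pairs of distinct edges $\{e,e'\}$. These split into two types. If $e$ and $e'$ share a vertex, then $|e\cup e'|=3$. If they are vertex-disjoint, then $|e\cup e'|=4$. A pair is non-representative exactly when some two of the (three or four) distinct vertices in $e\cup e'$ receive the same color under $\rho$, since $|\rho(e)\cup\rho(e')|$ equals the number of distinct colors among the vertices of $e\cup e'$.

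Fix a pair $\{e,e'\}$, and let $W:=e\cup e'$ with $|W|\le 4$. The colors of the vertices in $W$ are independent and uniform on $[k]$. For two distinct vertices $u,w\in W$, we have $\Pr[\rho(u)=\rho(w)]=1/k$. A union bound over the at most $\binom{4}{2}=6$ pairs of vertices in $W$ gives
\[
\Pr\bigl[|\rho(e)\cup\rho(e')|<|W|\bigr]\le \frac{6}{k}.
\]

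Let $Z$ be the number of non-representative pairs and $T$ the total number of pairs. By linearity of expectation, $\E[Z/T]\le 6/k$. Markov's inequality then gives
\[
\Pr\Bigl[\frac{Z}{T}\ge \frac{600}{k}\Bigr]\le \frac{6/k}{600/k}=0.01.
\]
Hence with probability at least $0.99$ the fraction of non-representative pairs is at most $600/k$.

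The argument gives the same bound if ``pairs'' is restricted to adjacent pairs only, or to any fixed deterministic family of edge pairs. The per-pair bound $6/k$ holds for every pair regardless of type, so the expectation and the Markov step are unchanged.
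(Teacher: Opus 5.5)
Your proposal is correct and matches the paper's proof: the paper also union-bounds over the at most $\binom{4}{2}=6$ vertex pairs in $e\cup e'$ to get a per-pair collision probability of at most $6/k$, then applies linearity of expectation and Markov's inequality. Your added observation that non-representativeness is exactly a color collision among the vertices of $e\cup e'$ is the (implicit) justification the paper relies on as well.
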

\begin{proof}
 Let $Z$ be the random variable counting the number of non-representative pairs. For any pair of edges $\{e, e'\}$, let $S = e \cup e'$ be the set of vertices involved. Since $|S| \le 4$, the probability of a collision in the coloring of $S$ is bounded by $\binom{4}{2} \frac{1}{k} = \frac{6}{k}$.

 By linearity of expectation, the expected fraction of non-representative pairs is at most $6/k$. Applying Markov's inequality:
 \[
  \Pr\left(\text{fraction of non-representative pairs} \ge \frac{600}{k}\right) \le \frac{6/k}{600/k} =   0.01.\qedhere
 \]
\end{proof}

A particular subset of non-representative pairs is those that have one vertex in common -- and hence have two colors in common. We bound more precisely those in the next lemma.

\begin{proposition}\label{proposition:badVertex}
 With probability $0.99$, the number of edges that have one vertex and two colors in common with another edge is at most $100 \frac{nd^2}{k}$.
\end{proposition}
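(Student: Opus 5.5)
The plan is the same first-moment-plus-Markov argument used in Propositions~\ref{proposition:badEdges} and~\ref{proposition:badpairs}. Let $W$ count the edges $e$ for which there exists another edge $e'$ sharing exactly one vertex with $e$ and satisfying $\rho(e)=\rho(e')$ with $|\rho(e)|=2$. If $e=\{u,v\}$ and $e'=\{u,w\}$ share the vertex $u$, then $\rho(e)=\{\rho(u),\rho(v)\}$ and $\rho(e')=\{\rho(u),\rho(w)\}$. Having two colors in common therefore forces $\rho(v)=\rho(w)$. (If instead $\rho(u)=\rho(v)$ and the sets coincide, then $e$ is a bad edge; we may either fold this into the same count or note that it is already controlled by Proposition~\ref{proposition:badEdges}. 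Either way it has probability at most $1/k$ per pair.) Since $v\neq w$ and colors are independent and uniform, for a fixed such pair $\Pr[\rho(v)=\rho(w)]=1/k$.

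We bound $\E[W]$ by a union bound over pairs. Each edge $e=\{u,v\}$ has at most $2(d-1)<2d$ edges sharing exactly one endpoint with it, because $G$ is $d$-regular. Hence
\[
\Pr[e\text{ is counted}]\le \sum_{e'} \Pr[\rho(e)=\rho(e')]\le \frac{2d}{k}\cdot 2,
\]
where the extra factor $2$ absorbs the degenerate case $\rho(u)=\rho(v)$. Summing over the $|E|=nd/2$ edges gives $\E[W]\le \frac{nd}{2}\cdot\frac{4d}{k}=\frac{2nd^2}{k}$.

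Markov's inequality then gives $\Pr[W\ge 100nd^2/k]\le \frac{2}{100}$. This is slightly weaker than the required $0.01$, so we tighten the count. Conditioning on $\rho(v)=\rho(w)$ alone already gives probability exactly $1/k$, and the number of ordered neighbours is at most $2(d-1)$. Thus $\E[W]\le nd^2/k$, and Markov yields failure probability at most $1/100$, as stated. Alternatively, one may count pairs rather than edges, since each counted edge is in at least one such pair, and then apply Markov directly.

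The only delicate point is bookkeeping the constant so that the failure probability comes out to $0.01$. The key observation that makes this work is that the event reduces to a single color collision between the two non-shared endpoints, which has probability exactly $1/k$ and does not need the union over $\binom{4}{2}$ collisions used in Proposition~\ref{proposition:badpairs}. The main obstacle is therefore not probabilistic but definitional: deciding whether "two colors in common" should include the case where $e$ is itself bad. If it should, the cleaner route is to count those edges via Proposition~\ref{proposition:badEdges}, which contributes at most $100|E|/k\le 100nd^2/k$ on a separate $0.99$ event, and to handle only the $\rho(v)=\rho(w)$ case here.
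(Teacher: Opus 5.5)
Your final argument is correct and is essentially the paper's proof: the event reduces to $\rho(v)=\rho(w)$, which has probability $1/k$, and your per-edge union bound over the $2(d-1)$ adjacent edges gives $\E[W]\le nd(d-1)/k\le nd^2/k$. The paper gets the same bound by counting the $n\binom{d}{2}$ incident pairs at each vertex and charging each pair two edges, and both finish with Markov. The ``degenerate'' case needs no extra factor and no appeal to Proposition~\ref{proposition:badEdges}, because $\{\rho(u),\rho(v)\}=\{\rho(u),\rho(w)\}$ always forces $\rho(v)=\rho(w)$, so you can drop your first factor-$2$ detour entirely.
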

\begin{proof}
 Let $W$ be the random variable counting the number of such edges. Such edges come in pairs sharing a vertex. For any vertex $u$, there are $\binom{d}{2}$ pairs of incident edges. Let $e=\{u,v\}$ and $e'=\{u,w\}$ be such a pair. They share two colors if $\rho(e)=\rho(e')$, which implies $\rho(v)=\rho(w)$ (occurring with probability $1/k$).
 
 Summing over all $n$ vertices, the expected number of such pairs is $n \binom{d}{2} \frac{1}{k} \le \frac{nd^2}{2k}$. Since each pair contributes at most 2 edges to the count $W$, we have $\E[W] \le 2 \cdot \frac{nd^2}{2k} = \frac{nd^2}{k}$. Applying Markov's inequality:
 \[
  \Pr\left(W \ge 100 \frac{nd^2}{k}\right) \le \frac{\E[W]}{100 nd^2/k} \le \frac{nd^2/k}{100 nd^2/k} = 0.01.\qedhere
 \]
\end{proof}

Finally, we bound the number of non-representative pairs in any set $C$. For this, we   use the Chernoff bounds given in Theorem~\ref{thm:chernoff}.

\begin{lemma}\label{lem:kappa_i}
With probability at least 0.99, it holds that for every set $C$ of edges, the fraction of pairs of edges in $C$ having two colors in common is at most $\frac{36 \log k}{|C|-1}$.
\end{lemma}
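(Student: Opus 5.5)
Since every $C$ must be handled at once and there are exponentially many sets of edges, a direct union bound over sets is hopeless. Instead I will reduce the statement to a property of the coloring restricted to color pairs. Write $\mathcal{P}$ for the set of unordered pairs of colors. A pair of edges $e,e'$ has two colors in common exactly when $\rho(e)=\rho(e')$ and $|\rho(e)|=2$, or when both are bad with the same single color. So the number of such pairs inside $C$ equals
\[
\sum_{\{a,b\}} \binom{m_{ab}}{2},
\]
where $m_{ab}$ is the number of edges of $C$ whose color set is exactly $\{a,b\}$ (diagonal terms handle bad edges). Note that $\sum_{ab} m_{ab}=|C|$. Using $\binom{m}{2}\le \frac{m}{2}\cdot \max_{ab} m_{ab}$, this count is at most $\frac{|C|}{2}\cdot \max_{ab} m_{ab}$. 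Dividing by $\binom{|C|}{2}$, the fraction is at most $\frac{\max_{ab} m_{ab}}{|C|-1}$.

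It therefore suffices to show that, with probability $0.99$, every color class of edges (for every pair $\{a,b\}$, the edges $e\in E$ with $\rho(e)=\{a,b\}$) has at most $36\log k$ edges overall, since $m_{ab}$ is at most this global count for every $C$. This is now a statement about at most $k^2$ fixed events.

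For a fixed pair $\{a,b\}$ with $a\ne b$, each edge lands in it with probability $2/k^2$, so the expected count is $2|E|/k^2 = nd/k^2 < 1$, using $k>\sqrt n\, d$. The counts are not independent across edges, because edges sharing a vertex are correlated; this is the main obstacle and the reason for the remark about ``partially dependent variables''. I would handle it by bounding the $t$-th moment directly, or equivalently by a union bound over sets of $t=36\log k$ edges all colored $\{a,b\}$. A set of $t$ edges spanning $s$ vertices has probability at most $(2/k)^{s}$ of lying in class $\{a,b\}$, and since each edge contributes at least one new vertex relative to a spanning forest, $s\ge t+\#\text{components}$. Counting edge sets by spanning structure gives at most $n^{c}d^{\,t}$ choices for an $s$-vertex set with $c$ components. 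The resulting bound is roughly $\sum_c (n/k)^{c}(2d/k)^{t}$, which is tiny because $d/k\le 1/\sqrt n$ and $n/k\le \sqrt n/d$, so the $(d/k)^t$ factor dominates. I expect the resulting bound to be $k^{-\Omega(\log n)}$, which is certainly $\le 1/(100k^2)$. The diagonal classes (bad edges of color $a$) are handled identically, with probability $(1/k)^{s-1}$ per connected piece.

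Finally, a union bound over the $O(k^2)$ color pairs gives probability $0.99$ that all classes have size below $36\log k$. The reduction in the first paragraph then yields the claimed bound simultaneously for every $C$. The delicate part is getting clean exponents in the dependent-count tail. If that becomes messy, a fallback is to apply the Chernoff upper tail with $p=2/k^2$ to a vertex-disjoint matching decomposition of the edges, which splits $E$ into $O(d)$ matchings. Edges inside one matching are colored independently, and we can absorb the factor $d=\polylog n$ since $\log k=\Theta(\log n)$ leaves slack.
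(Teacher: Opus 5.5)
Your reduction is sound, and slightly cleaner than the paper's convexity argument: $\sum_{\{a,b\}}\binom{m_{ab}}{2}\le\frac{|C|}{2}\max m_{ab}$ correctly reduces everything to showing $|E_{ab}|\le 36\log k$ for all $k^2$ color pairs. The gap is in the tail bound for a fixed pair. Your inequality $s\ge t+\#\text{components}$ is backwards. Only the edges of a spanning forest bring in a new vertex, so $s=(\#\text{forest edges})+c\le t+c$, with equality only when the witness set is a forest. Since $E_{ab}$ is bipartite between $U_a$ and $U_b$, a witness can be dense: all edges between $r$ vertices of color $a$ and $r$ vertices of color $b$ give $t=r^2$ edges on only $s=2r=2\sqrt t$ vertices. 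Its probability is then of order $k^{-2\sqrt t}$, not $(d/k)^t$. So your estimate $\sum_c(n/k)^c(2d/k)^t$ only covers forest-shaped witnesses, which is the easy case. (Even there, a matching witness gives $(nd/k^2)^t\approx d^{-t}=k^{-\Theta(\log d)}$ rather than $k^{-\Omega(\log n)}$, though that still suffices.) The union bound can be repaired componentwise. A component of the witness with $t_j$ edges has $s_j\ge\max(2,2\sqrt{t_j})$ vertices, and there are at most $n\,(O(d))^{t_j}$ such connected pieces. Using $k>\sqrt n\,d$, $t_j\le 36\log k$ and $d=\polylog n$ (so that $t_j\log d\ll\sqrt{t_j}\log n$), one checks $n\,(O(d))^{t_j}k^{-s_j}\le(O(1)/d)^{t_j}$, for a total of $(O(1)/d)^{36\log k}\ll k^{-3}$. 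This verification is exactly where the correlation between edges sharing vertices has to be paid for, and it is what your proposal is missing.

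Your fallback does not close the gap either. Splitting $E$ into $d+1$ matchings makes the count within one matching a binomial with mean at most $n/k^2\le 1/d^2$. However, a union bound over the matchings with a common threshold $m$ needs $m\gtrsim\log k/\log d$ to beat the $k^2(d+1)$ events, so it only yields $|E_{ab}|\le(d+1)m=O(d\log k/\log d)$. The factor $d$ cannot be absorbed. The lemma asserts $36\log k$, and its later uses (for instance $\sum_i\kappa_i(|E_i|-1)\le 36k\log k\ll\tau kd$ in Claim~\ref{claim:drop_gamma}) genuinely need a bound of order $\log k$ rather than $d\log k$.

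The paper sidesteps witness counting altogether. A first-moment bound using $k>\sqrt n\,d$ shows that w.h.p.\ no vertex has five or more neighbors of a fixed color, so every vertex of color $b$ contributes at most $4$ edges to $E_{ab}$. It then counts the \emph{vertices} of color $b$ adjacent to $U_a$, whose colors are independent once $U_a$ is fixed, so a plain Chernoff bound gives at most $9\log k$ of them. This degree cap is precisely what excludes the dense configurations your inequality overlooks.
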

\begin{proof}
 For every pair of distinct colors $a,b\in [k]$ (with $a \neq b$), let $E_{ab}$ be the set of edges in the graph with colors $a$ and $b$. Note that monochromatic edges (where $a=b$) are considered ``bad'' edges and are excluded from the point set $P$ by definition in Section~\ref{sec:construction_k}. Thus $E_{aa} \cap C = \emptyset$ for any cluster $C$. The number of pairs of edges in $C$ having two colors in common is therefore at most $N_{ab}:=\sum_{a \neq b} \binom{|E_{ab} \cap C|}{2}$. Note that the total number of pairs of edges in $C$ is $\binom{|C|}{2} \le \frac{|C|^2}{2}$.

 The quantity $N_{ab}$ is maximized when the mass is concentrated on as few sets $E_{ab} \cap C$ as possible. Indeed, by convexity, if $x \geq y$, then for any $\eta > 0$, $x^2 + y^2 \leq (x+\eta)^2 + (y-\eta)^2$.

 Hence, we focus on a fixed pair of colors, and show that $|E_{ab}| \leq 36 \log k$ with high probability.
 
 First, from Proposition~\ref{prop:balanced}, the number of vertices with color $a$ is at most $2n/k$ with probability $1-\exp(-n/k)$.
 
 We show an additional property: any vertex has at most $4$ neighbors with color $a$, with probability $1-1/k^2$. To see this, consider the probability that a vertex has at least five neighbors with color $a$. This corresponds to a Binomial distribution with parameters $d$ and $1/k$; hence the probability is at most $\binom{d}{5}(1/k)^5 \le \frac{d^5}{k^5}$. The expected number of vertices with at least five neighbors of color $a$ is therefore at most $\frac{nd^5}{k^5}$. Markov's inequality ensures that, with probability $1-1/k^2$, this number is at most $\frac{nd^5}{k^3}$. Since $k \ge \sqrt{n}d$, we have $k^3 \ge n^{1.5}d^3 > nd^5$ (for sufficiently large $n$), implying this quantity is strictly less than 1, so no such vertex exists.

 Recall that $U_a$ is the set of vertices with color $a$. Since the graph is $d$-regular and $|U_a| \le 2n/k$, the set $U_a$ has at most $2nd/k$ outgoing edges. For a vertex $v \in U_a$, each neighbor receives color $b$ independently with probability at most $1/(k-1)$.

 Applying Chernoff bounds (Theorem~\ref{thm:chernoff}) with number of trials $N = 2nd/k$ and probability $p = 1/k$, the number of neighbors of $U_a$ that have color $b$ is bounded. Let this count be $S$. We bound $\Pr(S > Np + \beta N)$. We choose $\beta = \frac{4k \log k}{nd}$.

 Using the assumption $k \ge \sqrt{n}d$, we have $k^2 \ge nd^2$. Thus:
 \[
 \beta k = \frac{4k^2 \log k}{nd} \ge \frac{4nd^2 \log k}{nd} = 4d \log k \ge 8,
 \]
 (since $d \ge 2$ and $k$ is large). This implies $\beta \ge 8/k = 8p$, and consequently $p+\beta \le \frac{9}{8}\beta$.

 Using the bound $D(p+\beta, p) \ge \frac{\beta^2}{2(p+\beta)}$, the exponent is:
 \[
 \frac{\beta^2 N}{2(p+\beta)} > \frac{\beta^2 N}{2(\frac{9}{8}\beta)} = \frac{4}{9}\beta N = \frac{4}{9} \left( \frac{4k \log k}{nd} \cdot \frac{2nd}{k} \right) = \frac{32}{9} \log k > 3.5 \log k.
 \]
 Thus, $\Pr(S > Np + \beta N) \le k^{-3.5}$. The deviation term is $\beta N = 8 \log k$. The mean is $Np = \frac{2nd}{k^2} < 1$ (for large $n$). Thus, with probability at least $1-k^{-3.5}$, there are at most $9 \log k$ such vertices (rounding up conservatively), and thus $S \le 9 \log k$.
 
 In addition, we established that each vertex has at most $4$ neighbors in $U_a$. Thus, each vertex in $U_b$ adjacent to $U_a$ contributes at most $4$ edges to $E_{ab}$. Consequently, $|E_{ab}| \leq 4 \times 9 \log k = 36\log k$.
 
 A union bound over all $\binom{k}{2} < k^2$ pairs $a,b$ ensures this holds for all pairs simultaneously with probability at least $1 - k^2(k^{-3.5}) = 1 - k^{-1.5} \ge 0.99$.

 Finally, the sum $\sum_{a \neq b} \binom{|E_{ab} \cap C|}{2} \le \sum \frac{|E_{ab} \cap C|^2}{2}$ is maximized when $\frac{|C|}{36\log k}$ different sets $E_{ab}$ have the maximal size $36 \log k$. In that case:
 \[
  \text{Number of pairs} \le \frac{|C|}{36\log k} \cdot \frac{(36 \log k)^2}{2} =   18 |C| \log k.
 \]
Dividing by the total number of pairs $\frac{|C|(|C|-1)}{2}$, the fraction is at most $\frac{36 \log k}{|C|-1}$.
\end{proof}

\subsubsection{Construction of the point set}\label{sec:pointSet}

Let $E'$ be the set of all edges which are not bad.
We construct the point-set $P\subseteq \{0,1\}^{n+k}$, where for every $e\in E'$, we have a unique point $p_e\in P$. We identify the first $n$ coordinates with the set $V$ ($= [n]$) and the last $k$ coordinates with the set $[k]$. Let $e\in E'$ be incident on the vertices $i$ and $j$ in $V$.
Then we can define the point $p_e$ as follows:
$$
p_e:= \e_{i} + \e_{j}+\e_{\rho(i)+n}+\e_{\rho(j)+n},
$$
for any $j\in [n+k]$, $\e_j$ denotes the standard basis vector which is 1 on the $j^{\text{th}}$ coordinate and zero everywhere else.

Fix $e\in E'$. From the construction, we have:
\begin{align}
\forall e\in E',\ \|p_e\|_2^2=4.
\label{eq:norm}
\end{align}

We now have an upper bound on the diameter of the point-set we constructed. For any pair of points, we have:
\begin{align}
\forall{e,e'\in E'},\ \|p_e-p_{e'}\|_2^2\le \|p_e\|_2^2+\|p_{e'}\|_2^2\le 8
\label{eq:max}
\end{align}

For a pair of edges that intersect, we have that their corresponding pair of points have:
\begin{align}
2\le \|p_e-p_{e'}\|_2^2\le 4
\label{eq:inter}
\end{align}
Those at distance $2$ must have two colors and one vertex in common: from \cref{proposition:badVertex}, the number of such pairs is at most $100\frac{nd^2}{k}$, which is a tiny fraction of the total number of pairs.
For a pair of edges that don't intersect, their corresponding pair of points are at distance:
\begin{align}
4\le \|p_e-p_{e'}\|_2^2\le 8
\label{eq:nointer}
\end{align}

In the following sections, we abuse notation sometimes, and use $E$ instead of $E'$, whenever it is clear. In addition, since $|E'|=(1-o(1))\cdot |E|$ (as $k=\omega(1)$), we use $|E|$ in measuring clustering cost instead of $|E'|$.

From Propositions~\ref{prop:balanced}, \ref{proposition:badEdges}, \ref{proposition:badpairs}, and \ref{proposition:badVertex} and Lemma~\ref{lem:kappa_i}, with probability at least $0.95$, the fraction of bad edges is at most $100/k$, the fraction of non-representative pairs is  at most $600/k$, the number of edges that have one vertex and two colors in common with another edge is at most $100 \frac{nd^2}{k}$, and it holds that for every set $C$ of edges, the fraction of pairs of edges in $C$ having two colors in common is at most $\frac{36 \log k}{|C|}$. Under the event that the above happens, in Section~\ref{sec:completeness}, we show that  suppose there exists $n/2$ vertices in $G$ that covers $(1-\zeta)\cdot |E|$ edges the optimal $k$-means solution of $P$ has cost at most $3|E| - (1 - 7\zeta) kd$, in Section~\ref{sec:soundness}, we show that suppose every $n/2$ vertices in $G$ covers at most $(1-\delta)\cdot |E|$ edges then optimal $k$-means solution of $P$ has cost at least $3|E| - (1 - \delta^5) kd$.

\section{Completeness Analysis} \label{sec:completeness}

Suppose there exists  $n/2$ vertices in $G$  that covers the subset of edges $\tilde E$, where $|\tilde E|=(1-\zeta)\cdot |E|$. We show that, in that case, the cost of the optimal $k$-means solution in the instance constructed in \cref{sec:construction_k} is cheap, namely:
\[k\text{-means}(C_1,\ldots ,C_k)\leq 3|E| - (1 - 7\zeta) kd.\]

\subsection{Construction of the Clustering}
 Let $S = \{v_1, \ldots, v_{n/2}\}$ be the subset of vertices comprising the given cover.
This implies there exists a partition of $\tilde E$ into $n/2$ parts such that for all $i\in[n/2]$, all the edges in the $i^{\text{th}}$ part can be covered by $v_i$. Let us capture this specific partitioning more formally. There exists a function $\pi:\tilde E\to [n/2]$ such that for all $e\in \tilde E$ we have $v_{\pi(e)}$ covers $e$.
Let $\tilde P$ be the points (from $P$ as constructed in \cref{sec:pointSet}) corresponding to the edges $\tilde E$.
We define a partition of $\tilde P$ into $k$ clusters $\tilde C_1,\ldots ,\tilde C_k$ based on the colors of the cover vertices: the point $p_e$ belongs to cluster $\tilde C_i$ if and only if $\rho(v_{\pi(e)})=i$.
We extend this clustering to points in $E \setminus \tilde E$: for each edge $e \in E \setminus \tilde E$, arbitrarily designate one of its endpoints as $u_e$, and assign $p_e$   to the cluster $\overline C_{\rho(u_e)}$.
We finally define cluster $C_i := \tilde C_i \cup \overline C_i$.

We start by showing few properties about this partitioning. For a vertex $v_j$ from the vertex cover, define $d_j$ such that $d-d_j$ is the number of edges with $\pi(e) = j$. We have the following:

\begin{fact}\label{fact:dj}
 With the above notation, $\sum_{j=1}^{n/2} d_j = \zeta |E|$.
\end{fact}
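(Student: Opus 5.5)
The plan is a double counting of $\sum_j (d-d_j)$, the total number of edges assigned by $\pi$. By definition of $\pi$, every edge of $\tilde E$ is assigned to exactly one index $j\in[n/2]$. The number of such edges with $\pi(e)=j$ is $d-d_j$. Summing over $j$ gives
\begin{align*}
\sum_{j=1}^{n/2} (d-d_j) = |\tilde E| = (1-\zeta)|E|.
\end{align*}

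The second step uses $d$-regularity: $|E| = nd/2$, so $\sum_{j=1}^{n/2} d = (n/2)\cdot d = |E|$. Subtracting the previous identity from this one yields
\begin{align*}
\sum_{j=1}^{n/2} d_j = |E| - (1-\zeta)|E| = \zeta|E|,
\end{align*}
which is the claim.

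The one point to be careful about is that $\pi$ is a function on $\tilde E$. This means an edge with both endpoints in $S$ is counted only once, for the index $\pi(e)$. As a result, $d-d_j$ may be strictly less than the degree of $v_j$ restricted to $\tilde E$, and each $d_j\ge 0$. This does not affect the computation, since the partition identity holds exactly.

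I also need the cover size to be exactly $n/2$, so that $\sum_j d$ matches $|E|$. This holds by hypothesis, since $S=\{v_1,\dots,v_{n/2}\}$. With both points in place there is no real obstacle, and the fact is a direct counting identity.
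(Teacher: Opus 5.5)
Your proposal is correct and is essentially the paper's argument: the paper also writes $(1-\zeta)|E| = \sum_{j=1}^{n/2}(d-d_j) = \frac{nd}{2} - \sum_j d_j = |E| - \sum_j d_j$, using that $\pi$ partitions $\tilde E$ and that $G$ is $d$-regular. Your remark that $\pi$ assigns each edge of $\tilde E$ to a single index, so edges with both endpoints in $S$ are not double-counted, makes explicit a point the paper leaves implicit.
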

\begin{proof}
 By definition of $d_j$, $v_j$ covers $d-d_j$ edges, therefore:
 \begin{align*}
  (1-\zeta)|E| = \sum_{j=1}^{n/2} (d-d_j) = \frac{nd}{2} - \sum_{j=1}^{n/2} d_j = |E| - \sum_{j=1}^{n/2} d_j.
 \end{align*}
 Thus, $\sum d_j = \zeta |E|$.
\end{proof}

We can also show that the $\tilde C_i$ are roughly balanced:
\begin{fact}\label{fact:sizeTCi}
 With probability $0.99$, each $\tilde C_i$ has size at least $\frac{nd}{6k}$.
\end{fact}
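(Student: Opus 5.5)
The plan is to lower-bound $|\tilde C_i|$ by counting the edges covered by cover vertices of color $i$. By construction, $|\tilde C_i| = \sum_{j:\rho(v_j)=i} (d-d_j)$. The quantity $\sum_j (d-d_j)$ equals $(1-\zeta)|E| = (1-\zeta)nd/2$ by Fact~\ref{fact:dj}, and each individual term lies between $0$ and $d$.

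The argument would proceed in three steps.

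\begin{enumerate}
\item Fix a color $i$. The colors $\rho(v_j)$ of the $n/2$ cover vertices are independent and uniform in $[k]$. Hence $X_i := |\tilde C_i|$ is a sum of $n/2$ independent random variables $X_{i,j} := (d-d_j)\mathbbm{1}[\rho(v_j)=i] \in [0,d]$. Its mean is $(1-\zeta)nd/(2k) \ge 0.45\, nd/k$.

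\item Rescale by $d$, so the summands are $[0,1]$-valued. Apply the lower-tail Chernoff bound of Theorem~\ref{thm:chernoff} with $\lambda = 1/2$; it extends to independent $[0,1]$-valued variables with the same proof. The scaled mean is $\mu \ge 0.45\, n/k$, so
\[
\Pr\bigl[X_i \le \tfrac{\mu d}{2}\bigr] \le \exp(-\mu/8) \le \exp\bigl(-\Omega(n/k)\bigr).
\]
Since $\mu d/2 \ge 0.225\, nd/k \ge nd/(6k)$, this is exactly the event we need to rule out.

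\item Take a union bound over the $k$ colors. Because $k = o(n/d^{\omega(1)})$, we have $n/k = d^{\omega(1)} \ge (\log n)^{\omega(1)} \gg \log k$. So $k\exp(-\Omega(n/k)) = o(1) \le 0.01$ for large $n$.
\end{enumerate}

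The randomness here is only over $\rho$; the cover $S$ and the map $\pi$ are fixed independently of the coloring, so they can be treated as deterministic. One subtlety is that the analysis of \Cref{sec:construction_k} works with $E'$, the non-bad edges. If $\tilde C_i$ must exclude bad edges, we need an extra step: the number of bad edges at color-$i$ vertices is at most about $d\cdot|U_i|\cdot O(d/k)$, which by Proposition~\ref{prop:balanced} is negligible compared with $nd/k$. The constant $1/6$ leaves enough slack to absorb this.

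The only real obstacle is this concentration step. The summands have range up to $d$ rather than $1$, and that is handled by the rescaling in the second step; the exponent $\Omega(n/k)$ still dominates $\log k$ thanks to the assumption $k = o(n/d^{\omega(1)})$.
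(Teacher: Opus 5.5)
Your argument is correct. It differs from the paper's only in how the concentration step is set up.

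You apply a lower-tail Chernoff bound directly to the weighted sum $\sum_j (d-d_j)\mathbbm{1}[\rho(v_j)=i]/d$ of independent $[0,1]$-valued variables. This gives $|\tilde C_i| \ge 0.225\,nd/k$ with failure probability $\exp(-\Omega(n/k))$.

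The paper proceeds in three steps:
\begin{enumerate}
\item It uses Markov's inequality with Fact~\ref{fact:dj} to find at least $n(1/2-\zeta)$ ``good'' cover vertices with $d_j \le d/2$.
\item It applies the Bernoulli Chernoff bound of Theorem~\ref{thm:chernoff} (with $\lambda = 1/4$) to the number of good vertices of color $i$, obtaining at least $n/(3k)$ of them.
\item It multiplies by $d/2$.
\end{enumerate}
The thresholding lets the paper use Theorem~\ref{thm:chernoff} exactly as stated, at the cost of a factor of $2$ in the constant. Your route is shorter and gives a better constant. It does rely on the standard extension of the Chernoff lower tail to bounded, non-Bernoulli summands, which you correctly point out. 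The union bound over colors, using $n/k \gg \log k$, is the same in both.

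Your remark about bad edges goes beyond the paper, which silently identifies $E'$ with $E$ at this point. To make it rigorous, your per-color estimate has to hold simultaneously for all $i$, not just in expectation. One way is the property from the proof of Lemma~\ref{lem:kappa_i}: with high probability every vertex has at most $4$ neighbors of any given color. The bad edges removed from $\tilde C_i$ have both endpoints in $U_i$, so this caps them at $4|U_i| = O(n/k) = o(nd/k)$, which your slack absorbs.
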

\begin{proof}
 Let $G = \{v_j \in S \mid d_j \le d/2\}$ be the set of ``good'' cover vertices.
 Since each vertex has degree $d$, $v_j$ covers at most $d$ edges, meaning $d_j \ge 0$.
 Applying Markov's inequality to the non-negative values $d_j$, and by Fact~\ref{fact:dj}, the number of vertices with $d_j > d/2$ is strictly less than
  $\frac{\zeta n d/2}{d/2} = \zeta n$.
 Consequently, the number of good vertices is bounded below by $|G| \ge n/2 - \zeta n = n(1/2 - \zeta)$.

 Since $\rho$ assigns each vertex in $V$ to a color in $[k]$ independently and uniformly at random, for a fixed color $i \in [k]$, the random variable $S_i := |G \cap U_i|$ follows a Binomial distribution $\text{Bin}(|G|, 1/k)$.
The expected value is $\mu := \E[S_i] = \frac{|G|}{k} \ge \frac{n(1/2 - \zeta)}{k}$. 
    
    From the preconditions of Theorem~\ref{thm:lb_k_n}, we know $\zeta \le \delta^5/10 \le 10^{-15}/10 < 1/100$, which implies $\mu \ge \frac{49n}{100k}$. We wish to bound the probability that $S_i < \frac{n}{3k}$. We apply the Chernoff bound for the lower tail (Theorem~\ref{thm:chernoff}), $\Pr(S_i \le (1-\lambda)\mu) \le \exp\left(-\frac{\lambda^2 \mu}{2}\right)$, with the deviation parameter $\lambda = 1/4$:
 \[
  (1-\lambda)\mu = \frac{3}{4} \mu \ge \frac{3}{4} \cdot \frac{49n}{100k} = \frac{147n}{400k}.
 \]
 Because $\frac{147n}{400k} = 0.3675 \frac{n}{k} > 0.3333 \frac{n}{k} \approx \frac{n}{3k}$, any event where $S_i < \frac{n}{3k}$   implies that $S_i \le \left(1-\frac{1}{4}\right)\mu$. Thus, we can safely upper-bound the probability:
 \[
  \Pr\left(S_i < \frac{n}{3k}\right) \le \Pr\left(S_i \le \left(1-\frac{1}{4}\right)\mu\right) \le \exp\left(-\frac{(1/4)^2 \mu}{2}\right) = \exp\left(-\frac{\mu}{32}\right) \le \exp\left(-\frac{49n}{3200k}\right).
 \]
 Since $k = o(|V|/\polylog(|V|))$, the ratio $n/k$ grows strictly faster than any polylogarithmic function, meaning $\exp(-49n/3200k)$ vanishes super-polynomially. Thus, for sufficiently large $n$, we have $\exp\left(-\frac{49n}{3200k}\right) \le \frac{0.01}{k}$.
 
 Applying a union bound over all $k$ clusters, we guarantee that $S_i \ge \frac{n}{3k}$ holds for all $i \in [k]$ simultaneously with probability at least $1 - k\left(\frac{0.01}{k}\right) = 0.99$.

 Conditioned on this event occurring, we can lower bound the size of each $\tilde C_i$. Recalling that $|\tilde C_i| = \sum_{v_j \in U_i \cap S} (d-d_j)$, we drop the non-negative contributions of vertices outside $G$ to obtain:
 \[
  |\tilde C_i| \ge \sum_{v_j \in G \cap U_i} (d-d_j) \ge \sum_{v_j \in G \cap U_i} \frac{d}{2} = \frac{d}{2} S_i \ge \frac{d}{2} \left(\frac{n}{3k}\right) = \frac{nd}{6k}.\qedhere
 \]
\end{proof}

\subsection{Cost of the Clustering}
The cost of cluster $C_i$ is $\frac{1}{2\left|C_i\right|}\cdot \left(\sum_{p_e,p_{e'}\in C_i}\|p_e-p_{e'}\|_2^2\right)$. Hence, we bound the distance between pairs of points in $C_i$. First, as all edges have at least one color in common (color $i$), they are at squared distance at most $6$ in the embedding.
Some are at squared distance $4$, when they share a vertex: we count more precisely their number, in terms of $d_j$:
\begin{fact}\label{fact:dist4pairs}
 In cluster $C_i$, the number of ordered pairs of edges at squared distance at most $4$ is at least
 \[ |\tilde C_i| \left( d - 1-\frac{6k}{n} \cdot \sum_{v_j \in S : \rho(v_j) = i} d_j \right). \]
\end{fact}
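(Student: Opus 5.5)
The plan is to count only pairs of edges that share the same cover vertex. Ordered pairs whose partner edge is in $\overline C_i$ are simply discarded, since they can only increase the count.

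Fix a cover vertex $v_j$ with $\rho(v_j)=i$. Every edge $e$ with $\pi(e)=j$ contains $v_j$, and there are exactly $d-d_j$ such edges by definition of $d_j$. Any two distinct edges in this group intersect in $v_j$, so by \eqref{eq:inter} their points are at squared distance at most $4$. This gives $(d-d_j)(d-d_j-1)$ ordered pairs for this $v_j$. Summing over the cover vertices of color $i$, and using $|\tilde C_i| = \sum_{v_j\in S:\rho(v_j)=i}(d-d_j)$, the number of ordered pairs at squared distance at most $4$ is at least
\[
\sum_{v_j \in S:\rho(v_j)=i} (d-d_j)(d-1-d_j) \;=\; |\tilde C_i|(d-1) \;-\; \sum_{v_j \in S:\rho(v_j)=i} (d-d_j)\,d_j \;\ge\; |\tilde C_i|(d-1) \;-\; d\sum_{v_j \in S:\rho(v_j)=i} d_j .
\]

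It remains to turn the loss term $d\sum d_j$ into $|\tilde C_i|\cdot\frac{6k}{n}\sum d_j$. This holds exactly when $d \le \frac{6k}{n}|\tilde C_i|$, that is, when $|\tilde C_i| \ge \frac{nd}{6k}$. That lower bound is precisely Fact~\ref{fact:sizeTCi}, which holds with probability $0.99$; I work conditioned on that event. Substituting it gives the claimed bound $|\tilde C_i|\bigl(d-1-\frac{6k}{n}\sum_{v_j\in S:\rho(v_j)=i} d_j\bigr)$.

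The only step needing care is this conversion of the additive loss $d\sum d_j$ into a factor relative to $|\tilde C_i|$, and it is handled entirely by the balance guarantee of Fact~\ref{fact:sizeTCi}. The rest is bookkeeping. Each group of $d-d_j$ edges through $v_j$ forms a clique of pairwise intersecting edges, and pairs from different groups, or pairs involving edges of $\overline C_i$, are dropped because we only need a lower bound.
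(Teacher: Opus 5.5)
Your proposal is correct and follows the paper's proof essentially line for line. You count the $(d-d_j)(d-d_j-1)$ ordered pairs through each cover vertex $v_j$ of color $i$, expand using $|\tilde C_i|=\sum_{v_j\in S:\rho(v_j)=i}(d-d_j)$, bound $(d-d_j)d_j\le d\,d_j$, and absorb the loss via $|\tilde C_i|\ge \frac{nd}{6k}$ from Fact~\ref{fact:sizeTCi}; conditioning explicitly on the probability-$0.99$ event of that fact is a welcome clarification of what the paper leaves implicit.
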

\begin{proof}
 First, the number of ordered pairs of edges in $C_i$ that have $v_j$ in common is $(d-d_j) (d-d_j - 1)$.
 Hence, the total number of ordered pairs of edges in $\tilde C_i$ with one vertex in common is :
 \begin{align*}
  \sum_{v_j \in S : \rho(v_j) = i} (d-d_j) (d-d_j - 1) &= \sum_{v_j \in S : \rho(v_j) = i} (d-d_j) (d-1) - \sum_{v_j \in S : \rho(v_j) = i} (d-d_j) d_j \\
  &= |\tilde C_i| (d-1) - \sum_{v_j \in S : \rho(v_j) = i} (d-d_j) d_j\\
  &\geq |\tilde C_i| (d-1) - d \sum_{v_j \in S : \rho(v_j) = i} d_j\\
  &\geq |\tilde C_i| \left( d -1 - \frac{6k}{n} \cdot \sum_{v_j \in S : \rho(v_j) = i} d_j \right),
 \end{align*}
 where the last inequality holds because $|\tilde C_i| \ge \frac{nd}{6k}$ (Fact~\ref{fact:sizeTCi}), implying $\frac{d}{|\tilde C_i|} \le \frac{6k}{n}$. As all those edges are at squared distance at most $4$, this concludes the claim.
\end{proof}

We can now compute the $k$-means cost.
\begin{proposition}\label{prop:kmeans_cost}
Suppose $S = \{v_1, \ldots, v_{n/2}\}$  covers the subset of edges $\tilde E$, where $|\tilde E|=(1-\zeta)\cdot |E|$. Then,
$$k\text{-means}(C_1,\ldots ,C_k)\leq 3|E| - (1 - 7\zeta) kd.$$
\end{proposition}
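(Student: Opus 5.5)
We bound the cost of each cluster $C_i$ with the pairwise formula of Lemma~\ref{lem:folklore}, $\cost(C_i)=\frac{1}{2|C_i|}\sum_{(e,e')}\|p_e-p_{e'}\|_2^2$, summing over ordered pairs, and then sum over $i$.

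\textbf{Distances inside a cluster.} Every pair in $\tilde C_i$ shares color $i$ (both edges are incident to a cover vertex of color $i$). For two such non-bad edges the squared distance is at most $6$: vertex coordinates contribute at most $4$, and color coordinates at most $2$ because of the common color. The points in $\overline C_i$ also carry color $i$ via $u_e$, so the bound $6$ holds for all pairs in $C_i$. Among the $|C_i|(|C_i|-1)$ ordered pairs, Fact~\ref{fact:dist4pairs} supplies at least
\[
A_i := |\tilde C_i|\Bigl(d-1-\tfrac{6k}{n}D_i\Bigr), \qquad D_i:=\sum_{v_j\in S,\ \rho(v_j)=i} d_j,
\]
pairs at squared distance at most $4$.

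\textbf{Per-cluster bound.} Hence
\[
\sum_{(e,e')}\|p_e-p_{e'}\|^2 \le 6|C_i|(|C_i|-1)-2A_i .
\]
Dividing by $2|C_i|$ gives
\[
\cost(C_i)\le 3(|C_i|-1)-\frac{A_i}{|C_i|}\le 3|C_i|-\frac{|\tilde C_i|}{|C_i|}\Bigl(d-1-\tfrac{6k}{n}D_i\Bigr).
\]

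\textbf{Summing over clusters.} Summing over $i$, the first term gives $3|E'|\le 3|E|$. For the subtracted term we must control $|\tilde C_i|/|C_i| = 1-|\overline C_i|/|C_i|$. Since $d-1-\frac{6k}{n}D_i\le d$, the loss from the $\overline C_i$ part is at most $d\,|\overline C_i|/|C_i|\le d\,\frac{6k}{nd}|\overline C_i|$, using $|C_i|\ge|\tilde C_i|\ge nd/(6k)$ from Fact~\ref{fact:sizeTCi}. Summed over $i$ this is at most $\frac{6k}{n}\zeta|E| = 3\zeta kd$. Similarly, $\sum_i \frac{6k}{n}D_i\le \frac{6k}{n}\zeta|E|=3\zeta kd$ by Fact~\ref{fact:dj}. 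The main term is $\sum_i (d-1)=k(d-1)$. Altogether the total cost is at most
\[
3|E| - kd + k + 6\zeta kd .
\]

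\textbf{Main obstacle.} The remaining step is to absorb the $+k$ and reach the stated constant $7\zeta$, i.e.\ to show $k\le \zeta kd$. This holds whenever $d\ge 1/\zeta$, which is true since $d=(\log n)^L\to\infty$ while $\zeta$ is a fixed constant. So this step is harmless, and the real care lies in the bookkeeping of the $\overline C_i$ contribution. In particular, the $-1$ terms and the bound $|\tilde C_i|/|C_i|\ge 1-6k|\overline C_i|/(nd)$ must be handled so that the constants sum to at most $7\zeta$. I would also double-check that the $A_i$ pairs all lie within $\tilde C_i$, so that replacing $|\tilde C_i|$ by $|C_i|$ in the denominator is valid.
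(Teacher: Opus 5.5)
Your proof is correct and follows the paper's argument essentially step for step: the trivial bound $6$ via the shared color $i$, a saving of $2$ per ordered pair from Fact~\ref{fact:dist4pairs}, and the two $3\zeta kd$ error terms from Facts~\ref{fact:dj} and~\ref{fact:sizeTCi}. The only difference is that you drop the $-3$ in $3(|C_i|-1)$, which is why you need $d\ge 1/\zeta$ to absorb the leftover $+k$; the paper keeps $\sum_i 3(|C_i|-1)=3|E|-3k$, ends with a $-2k$ term, and so its final inequality holds without any condition on $d$.
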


\begin{proof}
 Recall that the $k$-means cost is $\sum_{i=1}^k \frac{1}{2|C_i|} \sum_{e,e' \in C_i} \|p_e - p_{e'}\|^2$. The sum runs over all $|C_i|^2$ ordered pairs. The trivial distance bound is $\|p_e - p_{e'}\|^2 \le 6$ for all $e \neq e'$.
 Specifically, the cost contribution of cluster $C_i$ is at most:
 $$\frac{1}{2|C_i|} \sum_{e \neq e'} 6 = \frac{1}{2|C_i|} 6 |C_i|(|C_i|-1) = 3(|C_i|-1) = 3|C_i| - 3.$$
 Summing this base cost over all $k$ clusters gives $\sum_{i=1}^k (3|C_i| - 3) = 3|E| - 3k \le 3|E|$.
 
 However, pairs sharing a vertex have squared distance at most $4$. This is a reduction of at least $6-4=2$ per ordered pair compared to the base bound of $6$.
 Let $M_i$ be the number of unordered pairs in $C_i$ at distance 4. This corresponds to $2M_i$ ordered pairs. The cost reduction is   at least $\sum_{i=1}^k \frac{1}{2|C_i|} (2 \cdot 2M_i) = \sum_{i=1}^k \frac{2M_i}{|C_i|}$.
 
 Using Fact~\ref{fact:dist4pairs}, let $\Delta_i = \sum_{v_j \in S : \rho(v_j) = i} d_j$. We have $M_i \ge \frac{1}{2} |\tilde C_i| \left( d - 1 - \frac{6k}{n} \Delta_i \right)$.
 Since $|C_i| = |\tilde C_i| + |\overline C_i|$, we can write the fraction $\frac{|\tilde C_i|}{|C_i|}$ as $\frac{1}{1 + |\overline C_i|/|\tilde C_i|}$.
 
 The total cost reduction is bounded below by:
 $$\sum_{i=1}^k \frac{|\tilde C_i|\left(d - 1 - \frac{6k}{n} \Delta_i\right)}{|C_i|} = \sum_{i=1}^k \frac{d - 1 - \frac{6k}{n} \Delta_i}{1 + \frac{|\overline C_i|}{|\tilde C_i|}}.$$
 
 Using the algebraic identity $\frac{B-C}{1+A} \ge B - C - AB$ (which is valid for all $A, B, C \ge 0$ since $\frac{B-C}{1+A} - (B - C - AB) = \frac{AC + A^2 B}{1+A} \ge 0$), we set $A = \frac{|\overline C_i|}{|\tilde C_i|}$, $B = d-1$, and $C = \frac{6k}{n} \Delta_i$. We obtain:
 $$\sum_{i=1}^k \frac{d - 1 - \frac{6k}{n} \Delta_i}{1 + \frac{|\overline C_i|}{|\tilde C_i|}} \ge \sum_{i=1}^k \left[ (d-1) - \frac{6k}{n}\Delta_i - (d-1) \frac{|\overline C_i|}{|\tilde C_i|} \right]\ge \sum_{i=1}^k \left[ (d-1) - \frac{6k}{n}\Delta_i - d \frac{|\overline C_i|}{|\tilde C_i|} \right].$$
 
 We bound each term in the summation. First we have  $\sum_{i=1}^k (d-1) = k(d-1) = kd - k$. Second, we have  $\sum_{i=1}^k \frac{6k}{n} \Delta_i = \frac{6k}{n} \sum_i \Delta_i = \frac{6k}{n} (\zeta |E|) = \frac{6k}{n} \frac{\zeta n d}{2} = 3\zeta kd$. (Using Fact~\ref{fact:dj}). Finally, we have from Fact~\ref{fact:sizeTCi}, $|\tilde C_i| \ge \frac{nd}{6k}$, so $\frac{1}{|\tilde C_i|} \le \frac{6k}{nd}$. Thus:
  $$\sum_{i=1}^k d \frac{|\overline C_i|}{|\tilde C_i|} \le d \frac{6k}{nd} \sum_{i=1}^k |\overline C_i| = \frac{6k}{n} (\zeta |E|) = 3\zeta kd.$$
 
 Combining these, the total cost reduction is at least:
 $$(kd - k) - 3\zeta kd - 3\zeta kd = kd(1 - 6\zeta) - k.$$
 
 The total cost is therefore at most:
 \[(3|E| - 3k) - (kd(1 - 6\zeta) - k) = 3|E| - kd(1 - 6\zeta) - 2k \le 3|E| - kd(1 - 7\zeta).\qedhere\]
\end{proof}

\section{Soundness Analysis} \label{sec:soundness}
Let the soundness assumption be that for every $S\subseteq V$ of size $n/2$, we have that there exists $F\subseteq E$ such that $|F|\ge \delta\cdot |E|$ and for all $e\in F$ we have $e\cap S=\emptyset$.
Consider an arbitrary partition of $P$ into $k$ clusters $C_1,\ldots ,C_k$. If this clustering has cost at most  $3|E| - (1-\delta^5)dk$ then we will show that the soundness assumption is contradicted\footnote{Sometimes in the analysis, for the sake of ease of presentation, we bring back the bad edges removed from the construction of the point-set. This helps with using the regularlity of the graph. However, this doesn't affect the conclusion, as $k > \sqrt{n}d$ implies $|E_{\text{bad}}| \le \frac{50nd}{k} < \frac{50k}{d} = o(k) \ll \tau kd$. Hence their impact on the clustering cost is negligble.}.

For every $i\in[k]$, $G_i(V_i,E_i\subseteq E)$ is the subgraph defined as follows: $e\in E_i$ if and only if $p_e\in C_i$ (and $G_i$ has no isolated vertices).
In words, $E_i$ is the set of edges whose embedded point is in cluster $C_i$ and thus $|E_i| = |C_i|$.

We start by removing all pairs of edges with one vertex and two colors in common. From \Cref{proposition:badVertex}, this is at most $\frac{100nd^2}{k} = o(kd)$ many edges (as $k>\sqrt{n}$ and $d=(\log n)^{L}$).
As this is a tiny fraction of the total number of edges, not covering them does not hurt the soundness analysis, and it only decreases the $k$-means cost.

In addition, we assume that every cluster is of size at least $\xi d$ for some $\xi>0$. This is again because the sum of points in all clusters of size less than $\xi d$ is smaller than $\xi dk$ overall, and thus even removing them does not affect our analysis (assuming $\xi$ is sufficiently small, for example, $\xi=2^{-1/\delta}$).

Our proof strategy can now be broken down to the following four steps.

First, we connect the cost of cluster $C_i$ (for $i\in [k]$) with certain properties of the graph $G_i$. Formally, we show in \Cref{sec:costCluster} the following:
\begin{restatable}{lemma}{costCluster}\label{lem:costCluster}
 Fix a cluster $C_i$ (for $i\in[k]$). 
 Let $\gamma_i$ be the fraction of pairs of edges $e, e' \in E_i$ that have no color in common, and $\kappa_i$ be the fraction of pairs of edges that have two colors in common. Then, the cost of the cluster is equal to
 \begin{align*}
 3|C_i|-1+ (\gamma_i - \kappa_i) (|C_i|-1)-\frac{1}{|C_i|}\cdot \sum_{v \in V_i} d_{i,v}^2,
\end{align*}
where $d_{i,v}$ is the degree of $v$ in the graph $G_i$.
\end{restatable}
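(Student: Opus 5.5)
The plan is to apply the pairwise formula of Lemma~\ref{lem:folklore} directly. Since $\|p_e\|_2^2=4$ for every $e$ by \eqref{eq:norm}, each squared distance is an inner product: $\|p_e-p_{e'}\|_2^2 = 8-2\langle p_e,p_{e'}\rangle$. So everything reduces to computing $\sum_{e\neq e'}\langle p_e,p_{e'}\rangle$ over ordered pairs of distinct edges in $E_i$. The diagonal terms $e=e'$ contribute zero distance and can be dropped. Write $m:=|C_i|=|E_i|$.

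\textbf{Splitting the inner product.} The inner product splits along the two blocks of coordinates:
\[
\langle p_e,p_{e'}\rangle = |e\cap e'| + |\rho(e)\cap\rho(e')|.
\]
This uses two facts. First, the vertex block of $p_e$ is the indicator of the two endpoints. Second, since bad edges were removed, every $e\in E'$ has two distinct colors, so the color block of $p_e$ is the $0/1$ indicator of $\rho(e)$.

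\textbf{Vertex part.} Because $G$ is simple, two distinct edges share at most one vertex. Hence the sum of $|e\cap e'|$ over ordered pairs counts, for each $v\in V_i$, the ordered pairs of distinct edges of $G_i$ at $v$:
\[
\sum_{e\ne e'}|e\cap e'| = \sum_{v\in V_i} d_{i,v}(d_{i,v}-1) = \sum_{v} d_{i,v}^2 - 2m,
\]
by the handshake identity.

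\textbf{Color part.} Each pair has $0$, $1$ or $2$ common colors. By the definitions of $\gamma_i$ and $\kappa_i$ as fractions of pairs (equivalently of ordered pairs), the color sum equals $m(m-1)\bigl(0\cdot\gamma_i + 2\kappa_i + (1-\gamma_i-\kappa_i)\bigr) = m(m-1)(1-\gamma_i+\kappa_i)$.

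\textbf{Assembling.} Combining the three pieces gives
\[
\frac{1}{2m}\Bigl[8m(m-1) - 2\Bigl(\sum_v d_{i,v}^2 - 2m\Bigr) - 2m(m-1)(1-\gamma_i+\kappa_i)\Bigr].
\]
This simplifies to
\[
4(m-1) + 2 - (m-1)(1-\gamma_i+\kappa_i) - \frac1m\sum_v d_{i,v}^2 = 3m-1+(\gamma_i-\kappa_i)(m-1) - \frac1m \sum_v d_{i,v}^2,
\]
which is exactly the claimed expression.

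\textbf{Expected obstacle.} There is no real obstacle; the argument is bookkeeping. The points needing care are the following.
\begin{itemize}
\item Whether $\gamma_i,\kappa_i$ are taken over ordered or unordered pairs does not matter, since the two fractions coincide.
\item Simplicity of $G$ is needed for the vertex count.
\item The removal of bad edges is what makes the color block a $0/1$ vector with exactly two ones; otherwise $\|p_e\|^2$ would not be $4$.
\end{itemize}
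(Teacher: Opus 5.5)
Your proof is correct and is essentially the paper's computation. The paper writes the cost as $4|C_i| - \frac{1}{|C_i|}\bigl\|\sum_{e\in E_i} p_e\bigr\|^2$ and expands the squared norm coordinate-wise into $\sum_v d_{i,v}^2 + \sum_{e,e'}|\rho(e)\cap\rho(e')|$ (diagonal included), whereas you expand the pairwise form via $\|p_e-p_{e'}\|^2 = 8-2\langle p_e,p_{e'}\rangle$ over distinct ordered pairs — the same sum of inner products, organized by pairs rather than by coordinates; your algebra checks out, and the identity $\sum_{e\neq e'}|e\cap e'| = \sum_v d_{i,v}(d_{i,v}-1)$ actually follows by double counting without needing simplicity of $G$.
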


Next, we show that if $\gamma_i$ (fraction of pairs of edges with no color in common) and $\kappa_i$ (fraction of pairs of edges with two colors in common) are not too large, then there is a dominant single color in the cluster $C_i$. Formally, we show in \Cref{sec:monochrome} the following:
\begin{restatable}{lemma}{monochrome}\label{lem:monochrome}
Let $1/1000 > \eta > 0$ be some constant to be specified later.
Fix a cluster $C_i$ (for some $i\in[k]$).
If $\kappa_i \leq \eta^3$ and $\gamma_i \leq \eta/3$, then there is a set of $(1-\eta)$ fraction of edges in $E_i$ that have the same color $c_i$.
\end{restatable}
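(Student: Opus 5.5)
The plan is to view the cluster as a multigraph $H_i$ on the color set $[k]$. Each edge $e\in E_i$ has two distinct colors, since bad edges were removed, so $e$ contributes one edge $\rho(e)=\{a,b\}$ of $H_i$. Write $m:=|E_i|$ and use the following normalized loads:
\begin{itemize}
\item $y_{ab}$ is the fraction of edges of $E_i$ with $\rho(e)=\{a,b\}$;
\item $x_c:=\sum_{b} y_{cb}$ is the fraction of edges of $E_i$ containing color $c$.
\end{itemize}
These satisfy $\sum_c x_c = 2$ and $x_c\le 1$. In this language, $\gamma_i$ is the density of disjoint pairs of $H_i$-edges and $\kappa_i$ is the density of parallel pairs. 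The exact (error-free) analogue of the lemma is the fact that a pairwise-intersecting family of edges in a graph is a star or a triangle. The triangle case is exactly what the hypothesis on $\kappa_i$ excludes, because it forces heavy parallel classes. The proof is a quantitative version of this dichotomy.

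\textbf{Step 1 (no heavy parallel class).} Count ordered pairs of distinct edges with identical color sets. This gives $\sum_{a<b}(y_{ab}m)(y_{ab}m-1)=\kappa_i\, m(m-1)$. Hence $\sum y_{ab}^2 \le \kappa_i + 1/m$, and every class satisfies $y_{ab}\le \sqrt{\kappa_i+1/m}\le 2\eta^{1.5}$. For the last inequality we use that $m\ge \xi d$ is huge compared to $1/\eta$, since we already discarded clusters of size below $\xi d$.

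\textbf{Step 2 (a color of load about $1/2$).} Every intersecting pair shares at least one color, so the number of ordered intersecting pairs is at most $\sum_c (x_c m)^2$. Therefore
\[1-\gamma_i \le \frac{m}{m-1}\sum_c x_c^2 \le \frac{m}{m-1}\cdot 2\max_c x_c.\]
Let $c$ be a color of maximum load and set $x:=x_c$. This yields $x\ge \tfrac12-\tfrac{\eta}{6}-O(1/m)$.

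\textbf{Step 3 (few edges avoid $c$).} Let $t:=1-x$ be the fraction of edges not containing $c$. Consider an edge $f$ with $\rho(f)=\{a,b\}\not\ni c$ and an edge $e\ni c$. These two edges intersect in color only if $\rho(e)\in\{\{c,a\},\{c,b\}\}$. By Step 1, for each such $f$ this happens for at most a $y_{ca}+y_{cb}\le 4\eta^{1.5}$ fraction of the edges $e$. So the number of ordered color-disjoint cross pairs is at least $2m^2\, t\,(x-4\eta^{1.5})$. Comparing with $\gamma_i m(m-1)\le \tfrac{\eta}{3}m^2$ gives
\[t\,(x-4\eta^{1.5})\le \frac{\eta}{6}.\]
Plugging in Step 2 and $\eta<1/1000$, we get $x-4\eta^{1.5}\ge 0.49$. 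Hence $t\le \eta/(6\cdot 0.49)<\eta$. Thus a $(1-\eta)$ fraction of $E_i$ carries color $c_i:=c$, as required.

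The main obstacle is Step 2 together with the interplay in Step 3. A pure counting bound only gives $x\gtrsim 1/2$, and the triangle configuration (three colors each of load $2/3$) also has few disjoint pairs. The argument must therefore use $\kappa_i$ to show that edges avoiding the top color are almost all color-disjoint from the star at $c$. The delicate point is keeping the normalizations ($m$ versus $m-1$, ordered versus unordered pairs) and the $1/m$ error terms under control. This is where the lower bound $|C_i|\ge\xi d$ on cluster sizes is needed.
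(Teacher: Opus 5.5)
Your argument is correct, and it takes a genuinely different route from the paper's.

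\textbf{Shared ingredients.} Both proofs rest on two facts. The first is $1-\gamma_i\le\sum_c f_c^2$. The paper obtains this without your factor $m/(m-1)$ by using $\Pr[c\in\rho(e')\mid c\in\rho(e)]\le f_c$, but the difference is immaterial. The second is the parallel-class bound $f_{j,\ell}\le\sqrt{\kappa_i+1/|C_i|}$. Both proofs need the pruning of small clusters to absorb the $1/|C_i|$ term, and this is genuinely necessary: three edges with color sets $\{a,b\},\{b,c\},\{a,c\}$ have $\gamma_i=\kappa_i=0$, yet no color has load above $2/3$.

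\textbf{The paper's route.} The paper argues by contradiction. Assuming every color has load at most $1-\eta$, it runs a mass-transfer procedure that increases $\sum_c f_c^2$. This preserves the Bonferroni constraint $\sum_{j\le t}f_j\le 1+\binom{t}{2}\sqrt{\tilde\kappa_i}$ for the relevant $t$. It then splits into the cases $|L|\le 3$ and $|L|=4$ to show $\sum_c f_c^2<1-\eta/3$.

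\textbf{Your route.} You use the $\sum_c f_c^2$ bound only crudely, to locate a color $c$ of load about $1/2$. You then directly double count color-disjoint pairs between the edges avoiding $c$ and the star at $c$. The parallel-class bound is applied only to the classes $\{c,a\}$.

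\textbf{Comparison.} Your proof is shorter, direct, and free of case analysis. It is also quantitatively stronger.
\begin{itemize}
\item Your inequality shows the fraction of edges avoiding $c$ is at most about $\gamma_i$. Feeding $x\approx 1$ back in improves this to about $\gamma_i/2$.
\item It only needs $y_{ca}+y_{cb}$ to be bounded by a small constant. So a hypothesis like $\kappa_i\le 1/40$ would already suffice. The paper's Case~1 inequality, by contrast, needs $\sqrt{\tilde\kappa_i}\lesssim\eta/3$, i.e.\ $\kappa_i=O(\eta^2)$.
\end{itemize}
The paper's version packages everything as a single extremal bound on $\sum_c f_c^2$ over load profiles with no dominant color. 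In the application, $\kappa_i\le\eta^3$ is available anyway from Lemma~\ref{lem:kappa_i}, so either proof serves.
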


Then, we show that we can identify a large subcollection of clusters for which both $\kappa_i$ and $\gamma_i$ are small. Moreover, for each cluster in this subcollection, we obtain a meaningful bound on the sum of the squared degrees of the vertices in the subcollection, and this bound is tied to the clustering cost appearing in Lemma~\ref{lem:costCluster}. Formally, we show in \Cref{sec:avgStruct} the following:

\begin{restatable}{lemma}{avgStruct}\label{lem:avgStruct}     
    There is some $I\subseteq [k]$ such that the following holds: 
\begin{enumerate}
    \item \label{item:gamma} for all $i\in I$, $\kappa_i \leq \eta^3$ and $\gamma_i \leq \eta/3$.
    \item \label{item:lotEdges}$\sum_{i\in I}|E_i|\ge (1-38\delta^2)\cdot |E|$.
\end{enumerate}
In addition, we also have for all $i\in I$:
    \begin{align}
 \frac{1}{|E_i|}\cdot \sum_{v\in G_i} d_{i,v}^2 \geq (1-\delta^3) d,
\label{eqgood}
\end{align}
\end{restatable}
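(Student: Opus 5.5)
The plan is an averaging argument that starts from the cost identity of Lemma~\ref{lem:costCluster} and the assumed upper bound $3|E|-(1-\delta^5)dk$ on the total cost. Summing Lemma~\ref{lem:costCluster} over $i\in[k]$ and using $\sum_i|C_i|=|E|$ gives
\[
3|E| - k + \sum_i (\gamma_i-\kappa_i)(|C_i|-1) - \sum_i \frac{1}{|E_i|}\sum_{v\in V_i} d_{i,v}^2 \le 3|E| - (1-\delta^5)dk .
\]
Write $Q_i := \frac{1}{|E_i|}\sum_{v} d_{i,v}^2$. This yields
\[
\sum_i \gamma_i(|C_i|-1) + \sum_i \bigl(d - Q_i\bigr) \le \delta^5 dk + k + \sum_i \kappa_i(|C_i|-1).
\]
The $\kappa$ term is negligible. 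By Lemma~\ref{lem:kappa_i}, $\kappa_i(|C_i|-1)\le 36\log k$, so the sum is $O(k\log k)=o(kd)$ once $d=(\log n)^L$ with $L>1$. The $+k$ term is also $o(kd)$. So the right-hand side is at most $2\delta^5 dk$.

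The key structural step is to upper bound $Q_i$ for every cluster, so that each summand $d-Q_i$ is essentially nonnegative and the averaging can proceed. Since $\sum_v d_{i,v}=2|E_i|$ and $d_{i,v}\le d$, we get $Q_i\le 2d$ trivially. That is too weak, and here expansion must enter. For a vertex set $T$ of high-degree vertices in $G_i$, the edges of $G_i$ incident on $T$ go to distinct neighbors, by the small-set vertex expansion of Definition~\ref{def:smallSetExpander} with $\alpha\le\delta^{10}$. So the "double counting" that gives $Q_i\approx 2d$ (e.g.\ $K_{d,d}$-like pieces) is impossible. I would aim to show $Q_i\le (1+O(\alpha))d + \frac{d^2}{|E_i|}\cdot(\text{small})$.

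Concretely, let $H_i$ be the vertices with $d_{i,v}\ge \sqrt{\alpha}\,d$. Then $|H_i|\le 2|E_i|/(\sqrt\alpha d)$. Using $|E_i|$ around $nd/k$ (clusters larger than $n/\polylog$ edges contribute little via cost, handled separately), $H_i$ has size at most $n/\polylog n$, so expansion applies. The edges among $H_i$ number at most $O(\alpha d|H_i|)$. Hence $\sum_{v\in H_i}d_{i,v}\le |E_i|(1+O(\sqrt\alpha))$, and the low-degree vertices contribute $\sum d_{i,v}^2\le \sqrt\alpha d\cdot 2|E_i|$. Altogether $Q_i\le (1+O(\sqrt\alpha))d=(1+O(\delta^5))d$.

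With this bound, each term $\gamma_i(|C_i|-1)+(d-Q_i)+O(\delta^5 d)$ is nonnegative, and their sum is $O(\delta^5 dk)$. By Markov, all but $O(\delta^2)k$ clusters have $d-Q_i\le \delta^3 d$, which gives \eqref{eqgood}. The same Markov step gives $\gamma_i(|C_i|-1)\le \delta^3 d$, hence $\gamma_i\le \eta/3$ provided $|C_i|\ge \xi d$ and $\eta$ is chosen as a suitable power of $\delta$ dominating $\delta^3/\xi$. For this I would rather use the size weighting: clusters much smaller than average carry few edges. The bound $\kappa_i\le\eta^3$ follows from Lemma~\ref{lem:kappa_i} once $|C_i|\gg \log k$.

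Item~\ref{item:lotEdges} is an edge count over the excluded clusters. They are few in number, but could they be large? A cluster with $|E_i|\gg nd/k$ has $\gamma_i$ close to $1$, because it cannot be mostly one color when color classes have about $nd/k$ edges by Proposition~\ref{prop:balanced}. Then $\gamma_i|C_i|$ is huge and the cost inequality is violated. Hence every excluded cluster has $O(nd/k)$ edges, and $O(\delta^2)k$ of them hold at most $38\delta^2|E|$ edges.

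I expect the main obstacle to be the expansion-based bound $Q_i\le(1+o(1))d$ uniformly, especially for clusters too large for small-set expansion. There I would first try splitting such a cluster by dominant color.
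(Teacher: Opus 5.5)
Your averaging needs every summand to satisfy $d-Q_i\ge -O(\delta^5 d)$. You get this only from small-set expansion applied to $H_i$, which requires an a priori bound on $|E_i|$ so that $|H_i|\le n/\polylog n$. You leave this open, and your suggested route cannot work. For a cluster too large for small-set expansion, $Q_i$ is genuinely not bounded by $(1+o(1))d$: a cluster containing all of $E$ has $Q_i=2d$. Moreover, a cluster with many more than $nd/k$ edges has no dominant color to split by, since each color class meets at most $1.1nd/k$ edges. The missing step is to rule such clusters out from the cost assumption before invoking expansion; this is the paper's Lemma~\ref{lem:giant_cluster_strict}. Plug the trivial bound $Q_j\le 2d$ for all $j$ into your summed inequality, so that $\sum_j(d-Q_j)\ge -kd$. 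If some $|E_i|\ge 16kd$, then $|E_i|\ge 16nd/k$ (as $k^2>n$). By Proposition~\ref{prop:balanced} every color frequency in $C_i$ is then below $0.07$, so $1-\gamma_i\le\sum_c f_c^2\le 0.14$ and $\gamma_i(|E_i|-1)\ge 13kd$, contradicting the right-hand side $O(\delta^5kd)$. With $|E_i|<16kd$ for every $i$ you get $|H_i|\le 32k/\sqrt{\alpha}$, which is below $n/\polylog n$ because $k=o(n/d^{\omega(1)})$. Your expansion computation then yields $Q_i\le(1+5\sqrt{\alpha})d$ uniformly, as in the paper.

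Your final edge count also rests on a false step. A cluster with $|E_i|\gg nd/k$ does have $\gamma_i$ near $1$, but it need not violate the cost inequality. Indeed $k>\sqrt{n}d$ gives $nd/k<k/d$, far below the budget $O(\delta^5kd)$. So a single cluster with $\gamma_i\approx 1$ and, say, $\delta^6kd$ edges is consistent with the cost bound, and it is among the clusters your Markov step excludes. Hence ``every excluded cluster has $O(nd/k)$ edges'' is false. You need to split as the paper does. Clusters with $\gamma_i>\eta/3$ are controlled only in aggregate, through $\sum_i\gamma_i(|E_i|-1)\le 7\delta^5kd$; with $\eta=\delta^5$ this bounds their total by $22kd=o(|E|)$, using $k=o(n)$. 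Only the clusters violating \eqref{eqgood} while having $\gamma_i\le\eta/3$ get the per-cluster bound $|E_i|\le 3nd/k$. There are at most $6.1\delta^2k$ of them, giving $36.6\delta^2|E|$ edges. This size-weighted handling of $\gamma_i$ also replaces your first route via $|C_i|\ge\xi d$, which fails because with $\xi=2^{-1/\delta}$ no $\eta<1$ dominates $\delta^3/\xi$. Keep the constants tight as well: with slack $2\delta^5dk$ instead of $(1+o(1))\delta^5dk$, Markov gives about $7\delta^2k$ clusters and $42\delta^2|E|$ edges, exceeding the stated $38\delta^2|E|$.
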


Finally, we show that a cost of clustering $3|E|-(1-\delta^5)dk$ implies that we can construct a set of vertices $S\subseteq [n]$ of size  $n/2$ such that we cover most of the edges. 

\begin{restatable}{lemma}{smallVC}\label{lem:smallVC}
For sufficiently large $n$, if the clustering cost is at most $3|E| - (1-\delta^5)kd$, 
then by setting $\eta=\delta^5$, there exists a subset of exactly $n/2$ vertices that covers at least $1-10\delta^{1.5}$ fraction of $E$. 
\end{restatable}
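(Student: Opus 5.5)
We plan to build the vertex set $S$ from the dominant structure of each good cluster, using Lemma~\ref{lem:avgStruct} with $\eta=\delta^5$. Let $I$ be the index set it provides. For every $i\in I$, Lemma~\ref{lem:monochrome} gives a color $c_i$ shared by a $(1-\eta)$ fraction of $E_i$. Call these edges $E_i^{c}$, and note that each has exactly one endpoint of color $c_i$, since bad edges were removed. Let $A_i$ be the set of endpoints of color $c_i$ incident to $E_i^{c}$. Then $A_i$ covers all of $E_i^c$, and hence $\bigcup_{i\in I}A_i$ covers at least $(1-\eta)(1-38\delta^2)|E|$ edges. The whole difficulty is showing that $\sum_i |A_i|$ is not much larger than $n/2$. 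Once that is done, we trim to exactly $n/2$ vertices by deleting the vertices of smallest cluster-degree, or pad arbitrarily if the union is smaller.

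\textbf{Bounding $\sum_i|A_i|$.} We combine the global cost bound with Lemma~\ref{lem:costCluster} and expansion.

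First, sum Lemma~\ref{lem:costCluster} over all clusters. Since $\kappa_i$ is negligible by Lemma~\ref{lem:kappa_i} and the clusters have size at least $\xi d$, the hypothesis $\cost\le 3|E|-(1-\delta^5)kd$ gives
\[ \sum_i \frac{1}{|E_i|}\sum_{v}d_{i,v}^2 \;\ge\; (1-\delta^5)kd+\sum_i\gamma_i(|E_i|-1)-o(kd). \]
Each term is at most $\max_v d_{i,v}\le d$, because $\sum_v d_{i,v}^2\le d\sum_v d_{i,v}=2d|E_i|$. Together with~\eqref{eqgood}, this forces almost every cluster to have its degree mass concentrated on vertices with $d_{i,v}\ge(1-\sqrt{\delta})d$, namely the near-full stars around color-$c_i$ vertices. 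Summing these estimates shows that all but an $O(\delta^{2.5})$ fraction of the edges of $E_i^c$ lie in stars of degree at least $(1-\delta^{1.5})d$ centered at vertices of $A_i$. Replace $A_i$ by $A_i'$, the set of such heavy centers. Each vertex of $A_i'$ accounts for nearly $d$ edges, and the sets $E_i$ are disjoint, so $\sum_i|A_i'|\le |E|/((1-\delta^{1.5})d)\approx n/2\,(1+\delta^{1.5})$.

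Next, handle the edges of $E_i^c$ that are not in heavy stars. They are incident to light color-$c_i$ vertices. If there were many of them, the sum $\sum_v d_{i,v}^2/|E_i|$ would drop noticeably below $d$, contradicting the averaged cost bound. This is the role of the gap between $\delta^5$ and $\delta^3$. They therefore amount to an $O(\delta^{1.5})$ fraction of $E$ and are simply left uncovered.

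\textbf{Where expansion enters.} The $(\polylog n,\alpha)$ small-set expansion is needed to rule out a cluster that gains large $\sum d_{i,v}^2$ from a dense bipartite piece, the $K_{d-3,d-3}$ obstruction, rather than from disjoint stars. Every cluster has $|A_i|\lesssim |E_i|/d\le n/\polylog n$. Applying expansion to $A_i$ then shows that the stars are essentially disjoint in their leaves, so edges counted twice cannot inflate the degrees. This step is what makes the inequality ``heavy centers $\approx |E_i|/d$'' hold.

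\textbf{Conclusion and main obstacle.} With $\eta=\delta^5$, the uncovered fraction is at most $38\delta^2+\delta^5+O(\delta^{1.5})$, which is below $10\delta^{1.5}$ for $\delta\le10^{-3}$. The set of heavy centers has size at most $(1+\delta^{1.5})n/2$. Removing the $\delta^{1.5}n/2$ of them that cover the fewest edges loses at most about a $2\delta^{1.5}$ fraction of $E$. The main obstacle is the degree-concentration step: turning the averaged squared-degree inequality into a per-vertex statement that most edges sit in near-full stars. I would attempt it with a Markov argument on $d-d_{i,v}$, weighted by $d_{i,v}$, summed over $i\in I$.
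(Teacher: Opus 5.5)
There is a genuine gap, located exactly at the step you flag as the main obstacle, and the reasoning you offer toward it does not go through. First, $\sum_v d_{i,v}^2\le d\sum_v d_{i,v}$ gives $\frac{1}{|E_i|}\sum_v d_{i,v}^2\le 2d$, not $d$; this is tight for $K_{d,d}$. More importantly, even the correct upper bound of about $d$ (which the paper obtains from expansion in \eqref{eq:inI}) combined with the lower bound $(1-\delta^3)d$ of Lemma~\ref{lem:avgStruct} forces nothing. A $d/2$-regular bipartite graph has $\frac{1}{|E_i|}\sum_v d_{i,v}^2=d$ and no vertex of degree near $d$, because $\sum_v d_{i,v}^2=\sum_{uv\in E_i}(d_{i,u}+d_{i,v})$ can reach $d|E_i|$ with both endpoints of every edge at degree $d/2$. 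So a Markov argument over $V_i$ cannot produce near-full stars, nor show that the heavy vertices are the color-$c_i$ ones.

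What you need is a set $X$ with two properties: $\sum_{v\in X}d_{i,v}^2\ge(1-O(\delta^3))d|E_i|$, and $\sum_{v\in X}d_{i,v}\le(1+O(\delta^3))|E_i|$. Then the lower bound becomes $\sum_{v\in X}d_{i,v}\bigl(d_{i,v}-(1-\theta)d\bigr)\ge 0$ with $\theta\approx\delta^3$, and your weighted Markov argument applies. For $X=A_i$ the second property is free, because no edge of $E_i$ has both endpoints of color $c_i$. The first amounts to showing that the other endpoints contribute only $O(1+\eta d+\alpha d/\tau)|E_i|$ to $\sum_v d_{i,v}^2$. The paper proves this by passing to the star forest on the unique neighbors of $\tilde V_i$ (Claim~\ref{claim:S_i}) and comparing squared degrees (Claim~\ref{claim:simple_focs}).

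Your expansion step, which is meant to supply this, also needs repair. The bound $|A_i|\lesssim|E_i|/d$ is not available. It is essentially what you are trying to prove, and it can fail even at the end, since $A_i$ may contain many light centers. The usable bound is $|A_i|\le|U_{c_i}|\le 1.1n/k$ from color balance. With it, expansion only shows that at most $2\alpha d|A_i|\le 2.2\alpha dn/k$ edges go to non-unique neighbors. This is a small fraction of $|E_i|$ only when $|E_i|\ge\tau nd/k$, so you must first discard the clusters below that size, costing $2\tau|E|$ edges, as the paper does.

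Alternatively, you can avoid expansion here. The proof of Lemma~\ref{lem:kappa_i} shows that, with high probability, no vertex has more than $4$ neighbors of a single color. Hence every vertex of $V_i\setminus A_i$ has at most $4$ edges of $E_i^c$, which gives the required bound with no pruning.

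Finally, your quantitative claims need fixing. At threshold $(1-\delta^{1.5})d$, the weighted Markov bound only guarantees that all but about a $\theta/\delta^{1.5}\approx\delta^{1.5}$ fraction of the edges lie in heavy stars, not all but $O(\delta^{2.5})$. Since the target is the explicit $10\delta^{1.5}$, the constants in all the $O(\delta^{1.5})$ terms must be tracked. The paper does this with its split of $\tilde V_i$ into $\tilde V_i^+,L_i,M_i,R_i$.
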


 For small enough $\delta$, this implies our constructed $n/2$-sized subset misses strictly fewer than $\delta$ fraction of the edges contradicting the soundness assumption.

\subsection{Proof of Lemma~\ref{lem:costCluster}: Connecting Cluster Cost to Properties of Graph}\label{sec:costCluster}
\costCluster*
\begin{proof}
 First note that from Lemma~\ref{lem:folklore}:
 \begin{align}
  \text{Cost}(C_i) = \frac{1}{2|C_i|} \sum_{e \in E_i} \sum_{e' \in E_i} \|p_e - p_{e'}\|^2 = \sum_{e \in E_i} \|p_e\|^2 - \frac{1}{|C_i|} \left\| \sum_{e \in E_i} p_e \right\|^2.\label{eqcostmain}   
 \end{align}
  
 Since $\|p_e\|^2 = 4$ for all $e$, the first term is $4|C_i|$.
 
 Now consider the vector sum $S = \sum_{e \in E_i} p_e$. Recall $p_e = \e_u + \e_v + \e_{\rho(u)+n} + \e_{\rho(v)+n}$.
 We can write $S$ in terms of the basis vectors. The coordinate $\e_v$ appears $d_{i,v}$ times (once for each incident edge in $E_i$). The coordinate $\e_{c+n}$ appears $\kappa_{i,c}$ times, where $\kappa_{i,c}$ is the number of edges in $E_i$ that have an endpoint of color $c$.
 Thus:
 \[
  \|S\|^2 = \sum_{v \in V_i} d_{i,v}^2 + \sum_{c \in [k]} \kappa_{i,c}^2.
 \]
 The term $\sum_{c} \kappa_{i,c}^2$ counts the number of pairs $(e,e')$ that share a specific color $c$, summed over all colors.
 \[
  \sum_{c} \kappa_{i,c}^2 = \sum_{c} \left( \sum_{e \in E_i} \mathbbm{1}_{c \in \rho(e)} \right)^2 = \sum_{e,e' \in E_i} \sum_{c} \mathbbm{1}_{c \in \rho(e) \cap \rho(e')} = \sum_{e,e'} |\rho(e) \cap \rho(e')|.
 \]
 We split the sum into $e=e'$ and $e \neq e'$.
 For $e=e'$, we have $|\rho(e) \cap \rho(e')| = 2$. There are $|C_i|$ such terms and thus the total sum contribution is $2|C_i|$.
 For $e \neq e'$, let $\gamma_i$ be the fraction with 0 colors common, and $\kappa_i$ be the fraction with 2 colors common. The remaining fraction $(1-\gamma_i-\kappa_i)$ have exactly 1 color common.
 The sum is:
 \[
  \sum_{e \neq e'} |\rho(e) \cap \rho(e')| = |C_i|(|C_i|-1) \left[ 0 \cdot \gamma_i + 1 \cdot (1-\gamma_i-\kappa_i) + 2 \cdot \kappa_i \right] = |C_i|(|C_i|-1) (1-\gamma_i+\kappa_i).
 \]
 Thus, $\|S\|^2 = \sum d_{i,v}^2 + 2|C_i| + |C_i|(|C_i|-1)(1-\gamma_i+\kappa_i)$.
 Substituting back into \eqref{eqcostmain}, we have:\allowdisplaybreaks
 \begin{align*}
  \text{Cost}(C_i) &= 4|C_i| - \frac{1}{|C_i|} \left( \sum_{v \in V_i} d_{i,v}^2 + 2|C_i| + |C_i|(|C_i|-1)(1-\gamma_i+\kappa_i) \right) \\
  &= 4|C_i| - \frac{1}{|C_i|}\sum d_{i,v}^2 - 2 - (|C_i|-1)(1-\gamma_i+\kappa_i) \\
  &= 4|C_i| - 2 - \left[ (|C_i|-1) - (|C_i|-1)\gamma_i + (|C_i|-1)\kappa_i \right] - \frac{1}{|C_i|}\sum d_{i,v}^2 \\
  &= 4|C_i| - 2 - |C_i| + 1 + (|C_i|-1)(\gamma_i-\kappa_i) - \frac{1}{|C_i|}\sum d_{i,v}^2 \\
  &= 3|C_i| - 1 + (|C_i|-1)(\gamma_i-\kappa_i) - \frac{1}{|C_i|}\sum d_{i,v}^2. \qedhere
 \end{align*}
\end{proof}

\subsection{Proof of Lemma~\ref{lem:monochrome}: Typical Clusters are Monochromatic}\label{sec:monochrome}

\monochrome*

\begin{proof}
Assume towards contradiction that there is no color intersecting more than a $(1-\eta)$ fraction of the edges.

Let $f_c$ be the fraction of edges intersecting color $c$. We first provide an upper-bound on the probability that two distinct edges $e,e'$ drawn uniformly at random overlap in at least one color.
Let $\tilde f_c = \frac{f_c |C_i| - 1}{|C_i|-1} \leq f_c$: for a pair of distinct edges $e, e'$, $\Pr[c \in \rho(e') \mid c \in \rho(e)] = \tilde f_c \leq f_c$.
Therefore, the probability of overlap is at most:
\begin{align*}
 \Pr_{e \neq e'}[\exists c \in \rho(e) \cap \rho(e')] &\leq \sum_{c} \Pr_{e \neq e'}[c \in \rho(e) \text{ and } c \in \rho(e')] \leq \sum_{c} f_c^2.
\end{align*}
Since the probability of overlap is $1-\gamma_i$, we get
\begin{equation}
\label{eq:fraction}
 1-\eta/3 \leq 1-\gamma_i \leq \sum_{c} f_c^2.
\end{equation}

Since each edge gets exactly $2$ colors, $\sum f_c = 2$. Order the colors by decreasing frequency, $f_1 \geq f_2 \geq \dots$. Let $f_{j, \ell}$ be the fraction of edges with colors $j$ and $\ell$.

Note that $\kappa_i$ is exactly the fraction of distinct pairs of edges sharing two colors. Let $X_{j,\ell} = f_{j,\ell}|C_i|$ be the number of edges with colors $j$ and $\ell$. We have $\sum_{j<\ell} \frac{X_{j,\ell}(X_{j,\ell}-1)}{|C_i|(|C_i|-1)} = \kappa_i$.
Since $\frac{X(X-1)}{N(N-1)} \ge \frac{X^2 - X}{N^2} = f^2 - \frac{f}{N}$, we obtain $f_{j,\ell}^2 \le \kappa_i + \frac{f_{j,\ell}}{|C_i|} \le \kappa_i + \frac{1}{|C_i|}$.
Since we pruned clusters smaller than $\xi d$, and $d = \omega(1)$, the term $1/|C_i|$ vanishes for large $n$. Thus, we can define $\tilde{\kappa}_i := \kappa_i + 1/|C_i| \le 2\eta^3$, yielding $f_{j,\ell} \le \sqrt{\tilde{\kappa}_i}$.

By the inclusion-exclusion principle, for any $t$ colors, the indicator variables satisfy $\sum_{j \in [t]} f_j - \sum_{j < \ell \in [t]} f_{j, \ell} \leq 1$, which implies:
\begin{equation} \label{eq:incl-excl}
 \sum_{j=1}^t f_j \leq 1 + \binom{t}{2} \sqrt{\tilde{\kappa}_i}.
\end{equation}

We seek to maximize $\sum f_c^2$ under these constraints. The sum is maximized when the mass is concentrated on as few colors as possible.
Define $L = \{c \ge 2 : f_c = f_2\}$. As long as $f_1 \neq 1-\eta$ and $|L| \le 3$, continuously transfer an arbitrarily small mass $\varepsilon>0$ from each color in $L$ to $f_1$. Update $L$ by adding colors whose frequency drops to equal the next largest frequency.
This strictly increases $\sum f_c^2$ (by convexity), preserves the non-increasing order, and maintains the partial sum $\sum_{j=1}^{1+|L|} f_j$,   satisfying \cref{eq:incl-excl} for $t = 1+|L|$.

\paragraph{Case 1: $f_1 = 1-\eta$ and $|L| \le 3$.}
Applying \cref{eq:incl-excl} for $t=1+|L|$ ensures $f_1 + |L|f_2 \le 1 + \binom{|L|+1}{2}\sqrt{\tilde{\kappa}_i}$. Substituting $f_1 = 1-\eta$:
\[ f_2 \le \frac{\eta}{|L|} + \frac{|L|+1}{2}\sqrt{\tilde{\kappa}_i}. \]
Since $1 \le |L| \le 3$, the maximum occurs at $|L|=1$, giving $f_2 \le \eta + 2\sqrt{\tilde{\kappa}_i}$.
The sum of squares is  upper bounded by concentrating the remaining mass on $f_2$:
\begin{align*}
 \sum f_i^2 &\le f_1^2 + f_2(2-f_1) \le (1-\eta)^2 + (\eta + 2\sqrt{\tilde{\kappa}_i})(1+\eta).
\end{align*}
Since $\tilde{\kappa}_i \le 2\eta^3$, we have $\sqrt{\tilde{\kappa}_i} \le 1.5\eta^{1.5}$. The sum expands to:
\[
1 - 2\eta + \eta^2 + \eta + \eta^2 + 3\eta^{1.5}(1+\eta) \le 1 - \eta + 2\eta^2 + 6\eta^{1.5}.
\]
For $\eta \le 1/1000$, the quantity $2\eta^2 + 6\eta^{1.5}$ is  less than  $\frac{2}{3}\eta$. Thus, the sum is strictly less than $1 - \eta/3$, contradicting \cref{eq:fraction}.

\paragraph{Case 2: $|L| = 4$.}
If the procedure stops because $|L|$ reaches 4, then $f_2 = f_3 = f_4 = f_5$. Evaluating \cref{eq:incl-excl} for $t=5$ implies $f_1 + 4f_2 \leq 1+ \binom{5}{2}\sqrt{\tilde{\kappa}_i} = 1 + 10\sqrt{\tilde{\kappa}_i}$.
Hence, $f_2 \leq \frac{1+10\sqrt{\tilde{\kappa}_i} - f_1}{4}$.
The sum of squares is bounded by:
\begin{align*}
 \sum f_i^2 &\leq f_1^2 + 2 f_2 \leq f_1^2 + \frac{1-f_1}{2} + 5\sqrt{\tilde{\kappa}_i}.
\end{align*}
The quadratic function which maps $x$ to $x^2 - x/2$ is maximized on $[1/4, 1-\eta]$ at the boundary $x=1-\eta$. The value is bounded by:
\[ (1-\eta)^2 + \frac{\eta}{2} + 5(1.5\eta^{1.5}) = 1 - 1.5\eta + \eta^2 + 7.5\eta^{1.5}. \]
We require this to be less than $1 - \eta/3$, which implies $1.16\eta > \eta^2 + 7.5\eta^{1.5}$.
Dividing by $\eta$, we need $1.16 > \eta + 7.5\sqrt{\eta}$. For $\eta \le 1/1000$, $\sqrt{\eta} \approx 0.0316$, making the right side roughly $0.238 < 1.16$.
This provides the final contradiction.
\end{proof}

\subsection{Proof of Lemma~\ref{lem:avgStruct}: Identifying large monochromatic clusters}\label{sec:avgStruct} 
\avgStruct*

For the sake of presentation, we use  $\tau := \delta^5$ and $\omega:=\delta^3$.  Moreover, recall that $\alpha = \delta^{10}$, and we set $\eta = \delta^5$.

We first apply a pre-processing step to discard small clusters. We discard all clusters satisfying $|E_i| < \frac{10}{\tau^2} d$. The total number of edges discarded is   bounded by $k \cdot \frac{10}{\tau^2} d = \frac{10}{\tau^2} kd$. Because $k = o(n)$ and $|E| = nd/2$, this quantity is   $o(|E|)$. Thus, removing them does not significantly affect the total edge mass. From now on, we assume $|E_i| \ge \frac{10}{\tau^2} d = 10\delta^{-10} d$ for all remaining clusters. 

Because $d = (\log n)^{L}$, for sufficiently large $n$, this size guarantees that $\frac{36 \log k}{|E_i|-1} \le \eta^3$. Thus, from Lemma~\ref{lem:kappa_i}, $\kappa_i \le \eta^3$ is  satisfied for all surviving clusters.

For readability, we break the proof  of Lemma~\ref{lem:avgStruct} in the following subsections.

\subsubsection{Large Clusters are Bad}
\begin{lemma}\label{lem:giant_cluster_strict}
    Let $\mathcal{C}=\{C_1, \dots, C_k\}$ be any clustering of the point set $P$. Suppose there exists a cluster $C \in \mathcal{C}$ such that $|C| \ge 16kd$. 
    Assuming $n$ is sufficiently large, the cost of the clustering is greater than $3|E| - (1 - \delta^5)kd$.
\end{lemma}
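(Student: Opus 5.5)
We plan to work cluster by cluster with the exact formula of Lemma~\ref{lem:costCluster}. Write $\mathrm{Cost}(C_j)=3|C_j|-1+(\gamma_j-\kappa_j)(|C_j|-1)-\frac{1}{|C_j|}\sum_{v}d_{j,v}^2$. For every cluster we use the crude lower bound $\gamma_j\ge 0$, together with $\kappa_j(|C_j|-1)\le 36\log k$ from Lemma~\ref{lem:kappa_i}, and $\sum_v d_{j,v}^2\le d\sum_v d_{j,v}=2d|C_j|$ since every vertex has degree at most $d$. This gives $\mathrm{Cost}(C_j)\ge 3|C_j|-1-36\log k-2d$ for each $j$. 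Summed over the $k-1$ other clusters, they can save at most $(k-1)(2d+1+36\log k)$ relative to $3|C_j|$. That is at most about $2kd$ savings in total, which is more than the allowed $(1-\delta^5)kd$. So the crude bound alone is not enough, and we must gain substantially from the giant cluster $C$.

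The main step is to show that the giant cluster $C$ pays a large penalty through $\gamma$. The key point is that, in $C$, most pairs of edges share no color. For a fixed color $c$, the number of edges with an endpoint of color $c$ is at most $d|U_c|\le 2nd/k$ by Proposition~\ref{prop:balanced}. Hence $f_c\le \frac{2nd}{k|C|}$, while $\sum_c f_c=2$. The bound from the proof of Lemma~\ref{lem:monochrome} then gives
\[
1-\gamma\le\sum_c f_c^2\le 2\max_c f_c\le \frac{4nd}{k|C|}.
\]
This is useful only when $|C|\gg nd/k$. Since $k>\sqrt n d$, we have $nd/k<\sqrt n$, whereas $16kd\ge 16\sqrt n d^2$, so $\max_c f_c\le \frac{2nd}{16k^2d}=o(1)$. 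Therefore $\gamma\ge 1-o(1)$, and $\mathrm{Cost}(C)\ge 3|C|-1+(1-o(1))(|C|-1)-2d\ge 4|C|-o(|C|)-2d$. In other words, $C$ incurs an extra $(1-o(1))|C|\ge (16-o(1))kd$ above the baseline $3|C|$.

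Combining the two bounds, the total cost is at least $3|E|+(16-o(1))kd-k(2d+1+36\log k)\ge 3|E|+13kd$ for large $n$. This is far above $3|E|-(1-\delta^5)kd$.

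The subtleties are in the bookkeeping. First, we must use $|E'|$ versus $|E|$, i.e.\ the bad edges are excluded; the loss from this is $o(kd)$ by Proposition~\ref{proposition:badEdges} and $k>\sqrt n d$. Second, we must make sure the bound on $\kappa_j$ holds for all clusters, which Lemma~\ref{lem:kappa_i} provides uniformly over sets $C$. Third, we need $\log k=o(d)$, which holds because $k\le n$ and $d=(\log n)^L$ with $L>1$. The main obstacle I expect is the bound $f_c\le 2nd/(k|C|)$, which must hold for every color simultaneously. Proposition~\ref{prop:balanced} supplies exactly this with probability $0.99$, and the hypothesis $k>\sqrt n d$ is what makes the resulting fraction $o(1)$.
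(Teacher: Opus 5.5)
Your proposal is correct and follows essentially the same route as the paper: the per-cluster bound from Lemma~\ref{lem:costCluster} using $\kappa_j(|C_j|-1)\le 36\log k$ and $\sum_v d_{j,v}^2\le 2d|C_j|$, plus a $\gamma$-penalty on the giant cluster obtained from $1-\gamma\le\sum_c f_c^2\le 2\max_c f_c$ with $\max_c f_c\le d|U_c|/|C|$. The differences are cosmetic. You use $k>\sqrt{n}\,d$ to get $\gamma\ge 1-o(1)$, where the paper only needs $\gamma-\kappa\ge 1/2$, and you track the $|E'|$ versus $|E|$ discrepancy explicitly rather than absorbing it.
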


\begin{proof}
    For any cluster $A \in \mathcal{C}$, define its relative benefit as $\Delta(A) = 3|A| - \text{Cost}(A)$. Using Lemma~\ref{lem:costCluster}:
    \[
        \Delta(A) = 1 - (\gamma_A - \kappa_A)(|A|-1) + \frac{1}{|A|}\sum_{v \in V_A} d_{A,v}^2.
    \]
    Since the maximum degree in the induced subgraph is $d$, $\frac{1}{|A|}\sum d_{A,v}^2 \le d \cdot \frac{1}{|A|}\sum d_{A,v} = \frac{d(2|A|)}{|A|} = 2d$.
    From Lemma~\ref{lem:kappa_i}, $\kappa_A(|A|-1) \le 36 \log k$. Because $\gamma_A \ge 0$, we have $-(\gamma_A - \kappa_A)(|A|-1) \le 36 \log k$.
    Thus,   $\Delta(A) \le 1 + 36 \log k + 2d$.
    Summing over all $k$ clusters, the maximum possible benefit is $\mathcal{B}_{all} \le k(2d + 36 \log k + 1) \le 3kd$.

    Now, consider the giant cluster $C$ with $|C| \ge 16kd$. Since $k > \sqrt{n}$, we have $k^2 > n \implies kd > nd/k$. Thus, $|C| \ge 16\frac{nd}{k}$.
    Let $f_c$ be the fraction of edges in $C$ incident to color $c$. From Proposition~\ref{prop:balanced}, the absolute maximum number of edges incident to color $c$ is at most $d|U_c| \le \frac{1.1nd}{k}$.
    Thus, the maximum frequency is $\mu = \max_c f_c \le \frac{1.1nd/k}{|C|} \le \frac{1.1nd/k}{16nd/k} = \frac{1.1}{16} < 0.07$.
    The fraction of pairs sharing at least one color is bounded by $1-\gamma_C \le \sum f_c^2 \le \mu \sum f_c = 2\mu \le 0.14$.
    Thus, $\gamma_C \ge 0.86$. Similarly, $\kappa_C \le 1 - \gamma_C \le 0.14$.
    Therefore, the penalty coefficient is firmly bounded: $\gamma_C - \kappa_C \ge 0.86 - 0.14 = 0.72 \ge \frac{1}{2}$.
    
    The benefit of the giant cluster is   bounded by:
    \[
        \Delta(C) \le 1 - \frac{1}{2}(|C|-1) + 2d \le 2d + 1.5 - \frac{|C|}{2}.
    \]
    We refine our global bound on the total benefit $\mathcal{B}$ by using this penalty for $C$ alongside the   upper bound for the remaining $k-1$ clusters:
    \[
        \mathcal{B} \le \Delta(C) + \sum_{A \neq C} \Delta(A) \le 2d + 1.5 - \frac{|C|}{2} + 3kd.
    \]
    Since $|C| \ge 16kd$, we have $-\frac{|C|}{2} \le -8kd$.
    Thus, $\mathcal{B} \le 3kd + 2d + 1.5 - 8kd < -4kd < 0$.
    Consequently, the total cost is at least $ 3|E| - \mathcal{B} > 3|E| > 3|E| - (1-\delta^5)kd$.
\end{proof}

\subsubsection{Universal Expansion Bound on Squared Degrees}
Before evaluating the global cost, we establish a   upper bound on $S_i := \frac{1}{|E_i|}\sum_{v \in V_i} d_{i,v}^2$ for {every} surviving cluster $i$. 
We use the approach from \cite{Cohen-AddadS19}. Consider the set of low-degree vertices $W_i = \{v \in V_i \mid d_{i,v} \leq \sqrt{\alpha} d\}$. The contribution of these vertices to the sum of squares is bounded by their maximum degree: $\sum_{v \in W_i} d_{i,v}^2 \leq \sqrt{\alpha} d \sum_{v \in W_i} d_{i,v} \le 2 \sqrt{\alpha} d |E_i|$.

Now, consider the set $W'_i = V_i \setminus W_i$ of vertices with $d_{i,v} > \sqrt{\alpha} d$. By the contrapositive of Lemma~\ref{lem:giant_cluster_strict}, our global cost assumption $\Phi(\mathcal{C}) \le 3|E| - (1-\tau)kd$ guarantees that no cluster exceeds size $16kd$. 
Consequently, the number of high-degree vertices in the induced subgraph is   bounded by $|W'_i| \le \frac{2|E_i|}{\sqrt{\alpha}d} \le \frac{32k}{\sqrt{\alpha}}$. 
Because $k = o(|V|/d^{\omega(1)})$, and $d^{\omega(1)}$ grows strictly faster than any polylogarithmic function, this guarantees that $|W'_i| \ll n / \polylog n$, which is  small enough to   apply the $(\polylog n, \alpha)$-small set vertex expansion property of Hypothesis~\ref{hypo:xxh}.

By expansion, the number of internal edges evaluated in the base graph $G$ is bounded by $2e_G(W'_i) \le \alpha d |W'_i| + |W'_i|$. 
The sum of degrees inside $G_i$ for these vertices is exactly twice the internal edges plus the edges crossing to $W_i$ (which is  bounded by $|E_i|$). Thus:
\[ \sum_{v \in W'_i} d_{i,v} \leq \alpha d |W'_i| + |W'_i| + |E_i| \leq \alpha d \left( \frac{2|E_i|}{\sqrt{\alpha}d} \right) + \frac{2|E_i|}{\sqrt{\alpha}d} + |E_i| \le (1+3\sqrt{\alpha})|E_i|, \]
where the final inequality holds for large $d$.
Their contribution to the squares is $\sum_{v \in W'_i} d_{i,v}^2 \le d \sum_{v \in W'_i} d_{i,v} \le d(1+3\sqrt{\alpha})|E_i|$.
Combining both sets, we conclude that for {every} cluster:
\begin{equation}
    S_i = \frac{1}{|E_i|}\sum_{v \in V_i} d_{i,v}^2 \leq 2\sqrt{\alpha}d + d(1 + 3\sqrt{\alpha}) \le d(1+5\sqrt{\alpha}).
    \label{eq:inI}
\end{equation}

\subsubsection{Restricting to clusters with small $\gamma_i$}
\begin{claim}\label{claim:drop_gamma}
    Let $B_\gamma = \{ i \in [k] \mid \gamma_i > \eta/3 \}$ be the set of clusters with high color disagreement. The total number of edges in $B_\gamma$ is  bounded by $\sum_{B_\gamma} |E_i| \le 22 kd = o(|E|)$.
\end{claim}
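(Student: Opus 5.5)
The plan is a benefit-accounting argument, the same one used in Lemma~\ref{lem:giant_cluster_strict}. For each surviving cluster, write $\Delta(C_i) := 3|C_i| - \mathrm{Cost}(C_i)$. By Lemma~\ref{lem:costCluster},
\[
\Delta(C_i) = 1 - (\gamma_i - \kappa_i)(|C_i|-1) + S_i ,
\qquad
S_i := \frac{1}{|E_i|}\sum_{v\in V_i} d_{i,v}^2 .
\]
The cost assumption $\Phi(\mathcal{C}) \le 3|E| - (1-\tau)kd$ forces $\sum_i \Delta(C_i) \ge (1-\tau)kd - o(kd)$. The $o(kd)$ term absorbs the discarded small clusters, the removed bad edges and the removed pairs sharing one vertex and two colors. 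So I will upper bound $\sum_i \Delta(C_i)$ and show that a large $\sum_{i\in B_\gamma}|E_i|$ would make the total benefit too small.

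First, bound the benefit of every cluster. Lemma~\ref{lem:kappa_i} gives $\kappa_i(|C_i|-1) \le 36\log k$. Combined with \eqref{eq:inI}, for every surviving cluster
\[
\Delta(C_i) \le 1 + 36\log k + d(1+5\sqrt{\alpha}) - \gamma_i(|C_i|-1).
\]
For $i \in B_\gamma$ the last term is at most $-\tfrac{\eta}{3}(|E_i|-1)$. Summing over all $k$ clusters gives
\[
(1-\tau)kd - o(kd) \;\le\; kd(1+5\sqrt{\alpha}) + O(k\log k) - \frac{\eta}{3}\sum_{i\in B_\gamma}(|E_i|-1).
\]
Rearranging yields $\sum_{i\in B_\gamma}|E_i| \le \tfrac{3}{\eta}\,(\tau + 5\sqrt{\alpha} + o(1))\,kd + k$.

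With $\eta = \tau = \delta^5$ and $\sqrt{\alpha} \le \delta^5$, the factor $\tfrac{3}{\eta}(\tau + 5\sqrt{\alpha}) \le 3\cdot 6 = 18$. Adding slack for the $o(1)$ terms and for $k = o(kd)$, the total stays below $22kd$, as claimed. Finally, $kd = o(|E|)$ because $k = o(n)$ and $|E| = nd/2$.

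The main obstacle is the per-cluster upper bound on $S_i$ being only $d(1+5\sqrt{\alpha})$. The inequality closes only because $5\sqrt{\alpha}$ is of the same order as $\eta$, so the constant $18$ (hence $22$) depends on $\alpha \le \delta^{10}$. I would double-check two points. First, that \eqref{eq:inI} indeed applies to every surviving cluster, which uses Lemma~\ref{lem:giant_cluster_strict} to rule out clusters of size $\ge 16kd$. Second, that the $O(k\log k)$ term from $\kappa$ is $o(kd)$, which holds since $d = (\log n)^L$ with $L > 1$.
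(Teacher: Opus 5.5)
Your proposal is correct and is essentially the paper's proof. The paper likewise sums the identity of \cref{lem:costCluster} over all $k$ clusters, bounds $\sum_i \kappa_i(|E_i|-1)\le 36k\log k$ via \cref{lem:kappa_i} and $\sum_i S_i\le kd(1+5\sqrt{\alpha})$ via \eqref{eq:inI}, drops the nonnegative $\gamma$-terms outside $B_\gamma$, and divides by $\eta/3$; your constant $18+o(1)$ is just tighter bookkeeping than the paper's $21+1$.

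One correction: the discarded small clusters cannot be absorbed into the $o(kd)$ error, since their total edge mass, and hence potentially their total benefit, can be $\Theta(kd)$. This does not hurt you, because \cref{lem:kappa_i} and \eqref{eq:inI} hold for every cluster, so you can simply keep all $k$ clusters in the sum, as your displayed inequality in effect already does.
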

\begin{proof}
    Summing the cost formula from Lemma~\ref{lem:costCluster} over all clusters gives:
    \[
        3|E| - (1-\tau)kd \ge 3|E| - k + \sum_{i=1}^k \gamma_i(|E_i|-1) - \sum_{i=1}^k \kappa_i(|E_i|-1) - \sum_{i=1}^k S_i.
    \]
    Rearranging to isolate the $\gamma_i$ terms yields:
    \[
        \sum_{i=1}^k \gamma_i(|E_i|-1) \le \tau kd - kd + k + \sum_{i=1}^k S_i + \sum_{i=1}^k \kappa_i(|E_i|-1).
    \]
    Substituting our   bound $\sum S_i \le kd(1+5\sqrt{\alpha})$ and $\sum \kappa_i(|E_i|-1) \le 36k\log k$:
    \[
        \sum_{i=1}^k \gamma_i(|E_i|-1) \le \tau kd + 5\sqrt{\alpha}kd + k + 36k\log k.
    \]
   
   By our choice of $\tau = \delta^5$ aligning with $\sqrt{\alpha} = \delta^5$, the right-hand side is   bounded by $\tau kd + 5\tau kd + \tau kd = 7\tau kd$.
    Restricting the sum to $B_\gamma$ (where $\gamma_i > \eta/3$) yields $\frac{\eta}{3} \sum_{i \in B_\gamma} (|E_i|-1) \le 7\tau kd$. 
    Because we defined $\eta = \tau$, this simplifies to:
    \[ \sum_{i \in B_\gamma} |E_i| \le 21 kd + |B_\gamma| \le 22 kd = o(|E|). \qedhere \]
\end{proof}

\subsubsection{Identifying clusters with optimal degree sums}
We return to the clustering cost equation to bound the clusters with poor degree concentration. 
Let $I_{bad} = \{i \in [k] \mid S_i < d(1-\omega)\}$. Since Equation \eqref{eq:inI} establishes $S_i \le d(1+5\sqrt{\alpha})$   for all $k$ clusters, we cleanly partition the global sum into $I_{bad}$ and $[k] \setminus I_{bad}$:
\[ \sum_{i=1}^k S_i = \sum_{I_{bad}} S_i + \sum_{[k] \setminus I_{bad}} S_i \le |I_{bad}| d(1-\omega) + (k - |I_{bad}|) d(1+5\sqrt{\alpha}). \]
Isolating $\sum S_i$ from the initial cost equation gave $\sum_{i=1}^k S_i \ge kd(1-1.1\tau)$. 
Substituting the lower bound yields:
\[ kd(1-1.1\tau) \le |I_{bad}| d(1-\omega) + (k - |I_{bad}|) d(1+5\sqrt{\alpha}). \]
Dividing by $kd$ and letting $x = |I_{bad}|/k$ be the fraction of such clusters, we have:
\[ 1 - 1.1\tau \le x(1-\omega) + (1-x)(1+5\sqrt{\alpha}) = 1 + 5\sqrt{\alpha} - x(\omega + 5\sqrt{\alpha}). \]
Subtracting 1 and isolating the $x$ terms to the left:
\[ x(\omega + 5\sqrt{\alpha}) \le 1.1\tau + 5\sqrt{\alpha}. \]
Since $\omega = \delta^3$, $\tau = \delta^5$, and $\sqrt{\alpha} = \delta^5$, we have:
\[ x(\delta^3 + 5\delta^5) \le 1.1\delta^5 + 5\delta^5 = 6.1\delta^5. \]
Thus, $x \le \frac{6.1\delta^5}{\delta^3} = 6.1\delta^2$.
This firmly bounds the total number of sub-optimal clusters as $|I_{bad}| \le 6.1\delta^2 k$.

\subsubsection{Bounding the edges of the good clusters}
We define our final good set of clusters as $I = [k] \setminus (P_{small} \cup B_\gamma \cup I_{bad})$ (where $P_{small}$ is the set of pruned clusters). By definition, every $i \in I$ inherently satisfies $\gamma_i \le \eta/3$, $\kappa_i \le \eta^3$, and $S_i \ge d(1-\omega)$.

We show Item \ref{item:lotEdges}, namely that most edges are safely retained in $I$. 
The omitted edges belong   to $P_{small} \cup B_\gamma \cup I_{bad}$. 
We established that $\sum_{P_{small}} |E_i| = o(|E|)$, and $\sum_{B_\gamma} |E_i| \le 22kd = o(|E|)$.
For the clusters in $I_{bad} \setminus B_\gamma$, we leverage the fact that they   satisfy $\gamma_i \le \eta/3$. 

Fix any such cluster $i$. 
By Proposition~\ref{prop:balanced} with high probability, no color is assigned to more than $1.1 n/k$ vertices. The maximum number of edges incident to any single color $c$ is therefore at most $1.1 nd/k$.
Let $e \in E_i$ with colors $\rho(e) = \{c_1, c_2\}$. The number of edges in $E_i$ sharing at least one color with $e$ is at most $2.2nd/k$.
Thus, $e$ shares absolutely no colors with the remaining subset of edges, which has size at least $|E_i| - \frac{2.2nd}{k}$.
Summing this disjointness over all $e \in E_i$ evaluates each completely disjoint pair exactly twice. The total number of disjoint pairs is therefore at least $\frac{1}{2} |E_i| \left( |E_i| - \frac{2.2nd}{k} \right)$.
Dividing by the total number of valid pairs $\binom{|E_i|}{2} \le \frac{1}{2}|E_i|^2$, we get:
\[ \gamma_i \geq \frac{\frac{1}{2}|E_i|(|E_i| - 2.2nd/k)}{\frac{1}{2}|E_i|^2} = 1 - \frac{2.2nd}{k|E_i|}. \]
If $|E_i| > \frac{3nd}{k}$, then $\gamma_i \ge 1 - \frac{2.2}{3} \ge 0.26$. However, we know $\gamma_i \leq \eta/3 < 0.01$, leading to a contradiction. 

Thus, every cluster in $I_{bad} \setminus B_\gamma$ must deterministically satisfy $|E_i| \le \frac{3nd}{k}$. 
Since there are at most $|I_{bad}| \le 6.1\delta^2 k$ such clusters, their absolute maximum edge mass is  bounded by:
\[ 6.1\delta^2 k \cdot \frac{3nd}{k} = 18.3\delta^2 nd = 36.6\delta^2 |E|. \]
Adding the limits together, the total number of edges omitted from $I$ is  bounded by $36.6\delta^2|E| + o(|E|) \le 38\delta^2|E|$ for sufficiently large $n$. Thus, $\sum_{i \in I} |E_i| \ge (1-38\delta^2)|E|$.

\subsection{A cheap clustering admits a small vertex cover}\label{sec:smallVC}
\smallVC*
\begin{proof}
In any   cluster $C_i \in I$, Lemma~\ref{lem:monochrome} guarantees the existence of a dominant color $c_i$ shared by at least a $(1-\eta)$ fraction of the edges in $E_i$. Let $\tilde{E}_i \subseteq E_i$ be this subset of edges. 

Let $\tilde{V}_i \subseteq V_i$ be the set of endpoints of edges in $\tilde{E}_i$ that have color $c_i$. Since bad edges (monochromatic endpoints) were removed from $P$ in Section~\ref{sec:construction_k}, no edge connects two vertices of the same color. Thus, $\tilde{V}_i$ is   an independent set in $G$, and each edge in $\tilde{E}_i$ has exactly one endpoint in $\tilde{V}_i$. 

Before proceeding, we apply the  secondary pruning step: let $I' \subseteq I$ be the subset of good clusters satisfying $|E_i| \ge \tau \frac{nd}{k}$. Discarding the clusters with $|E_i| < \tau \frac{nd}{k}$ removes at most $k \cdot \tau \frac{nd}{k} = \tau n d = 2\tau|E|$ edges in total. Thus, the pruned set $I'$ safely retains at least $(1-38\delta^2 - 2\tau)|E|$ edges. Because $\tau = \delta^5 \ll \delta^2$, we cleanly bound this retained mass by $(1-39\delta^2)|E|$.

Fix $i\in I'$. Let $S_i$ be the neighborhood of $\tilde{V}_i$ in $G_i$. Let $\tilde S_i\subseteq S_i$ be the set of unique neighbors of $\tilde{V}_i$ in $G_i$, meaning each vertex in $\tilde S_i$ has exactly one neighbor in $\tilde V_i$ within $G_i$.

\begin{claim}\label{claim:S_i}
    $|\tilde{S}_i|\ge |\tilde E_i|-2\alpha d |\tilde{V}_i|$.
\end{claim}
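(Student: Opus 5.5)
We plan a counting argument on the bipartite structure between $\tilde V_i$ and its neighborhood, using the small-set vertex expansion of $G$ (Definition~\ref{def:smallSetExpander}) applied to $\tilde V_i$.

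First we check that expansion applies. Every vertex of $\tilde V_i$ has color $c_i$, so $|\tilde V_i| \le |U_{c_i}| \le 1.1\, n/k$ by Proposition~\ref{prop:balanced}. Since $k$ is far larger than $\polylog n$, this is below $n/\polylog n$. Hence $N_G(\tilde V_i)$, the neighborhood of $\tilde V_i$ in the whole graph $G$, has size at least $(1-\alpha) d |\tilde V_i|$. The number of edges of $G$ leaving $\tilde V_i$ is exactly $d|\tilde V_i|$, because $\tilde V_i$ is independent and $G$ is $d$-regular.

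Next we count unique neighbors in $G$. Write $t_u$ for the number of neighbors in $\tilde V_i$ of a vertex $u \in N_G(\tilde V_i)$. Then $\sum_u t_u = d|\tilde V_i|$, and the number of vertices with $t_u=1$ is at least
\[
|N_G(\tilde V_i)| - \sum_u (t_u-1) = 2|N_G(\tilde V_i)| - d|\tilde V_i| \ge (1-2\alpha) d |\tilde V_i|.
\]
Equivalently, the set of $G$-edges from $\tilde V_i$ that land on a non-unique neighbor has size $\sum_{u: t_u\ge 2} t_u \le 2\sum_u (t_u-1) \le 2\alpha d|\tilde V_i|$.

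Then we transfer this to $G_i$. Each edge $e\in\tilde E_i$ has exactly one endpoint $v_e\in\tilde V_i$, and its other endpoint $u_e$ lies in $S_i$. If $u_e$ is a unique neighbor of $\tilde V_i$ in $G$, it is a fortiori a unique neighbor within $G_i$, since $G_i$ is a subgraph. Moreover, distinct edges of $\tilde E_i$ with such $u_e$ give distinct vertices $u_e$, because a unique neighbor has only one edge into $\tilde V_i$. The edges of $\tilde E_i$ whose $u_e$ is not $G$-unique form a subset of the bad $G$-edges counted above, so there are at most $2\alpha d|\tilde V_i|$ of them. The injectivity therefore gives $|\tilde S_i| \ge |\tilde E_i| - 2\alpha d|\tilde V_i|$.

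The only delicate point is the size condition for expansion. This needs $k = \omega(\polylog n)$, which holds since $k > \sqrt n d$. The rest is routine bookkeeping.
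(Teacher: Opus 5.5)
Your proof is correct and is essentially the paper's argument: the paper's set $R$ (vertices with at least two neighbors in $\tilde V_i$) and its edge count $m_R$ are exactly your vertices with $t_u\ge 2$ and $\sum_{u:t_u\ge 2}t_u$, bounded by $m_R\le 2(m_R-|R|)\le 2\alpha d|\tilde V_i|$ via expansion, followed by the same transfer to $G_i$. Your identity $\sum_u t_u = d|\tilde V_i|$ is a degree sum and holds even if $\tilde V_i$ is not independent in $G$ (bad edges remain in $G$), so the independence remark is not needed.
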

\begin{proof}
Let $R\subseteq V$ be the set of vertices in the graph $G$ which have more than one neighbor in $\tilde V_i$. Let $m_R$ be the number of edges between $\tilde V_i$ and $R$ in $G$. Since every vertex in $R$ has at least two neighbors in $\tilde V_i$, $m_R \ge 2|R|$. 
Because $\tilde V_i \subseteq U_{c_i}$, Proposition~\ref{prop:balanced} guarantees $|\tilde V_i| \le \frac{1.1 n}{k}$ with high probability. Thus, by the $(\polylog n, \alpha)$-small set vertex expansion property (Definition~\ref{def:smallSetExpander}), the neighborhood size in $G$ satisfies $N_G(\tilde V_i) \ge d|\tilde V_i|(1-\alpha)$. 

The number of unique neighbors of $\tilde V_i$ in $G$ is therefore at least $d|\tilde V_i|(1-\alpha) - |R|$. Since the total number of outgoing edges from $\tilde V_i$ in $G$ is exactly $d|\tilde V_i|$, we have:
$$m_R + (d|\tilde V_i|(1-\alpha) - |R|) \le d|\tilde V_i| \implies m_R - |R| \le \alpha d |\tilde V_i|.$$
Using $m_R \ge 2|R|$, this bounds $m_R/2 \le \alpha d |\tilde V_i|$, and consequently $m_R \le 2\alpha d |\tilde V_i|$.
In $G_i$, the set of edges incident to $\tilde V_i$ is precisely $\tilde E_i$. The number of those edges connecting to $R$ is at most $m_R$. Thus, the number of unique neighbors in $G_i$ is   $|\tilde{S}_i| \ge |\tilde E_i| - m_R \ge |\tilde E_i| - 2\alpha d |\tilde{V}_i|$.
\end{proof}

Let $\tG_i$ be the bipartite subgraph of $G_i$ induced by the edges between $\tilde{V}_i$ and $\tilde{S}_i$. Let $\td_{i,v}$ be the degree of vertex $v$ in $\tG_i$. For all $v \in \tG_i$, $\td_{i,v} \le d_{i,v} \le d$. 
Since every vertex in $\tilde S_i$ has exactly one neighbor in $\tilde V_i$, $\tG_i$ is a collection of disjoint stars centered at $\tilde V_i$. Thus, the total number of edges in $\tG_i$ is exactly $|\tilde A_i| = |\tilde S_i|$. 

We establish the following relationship comparing the degree squares of $G_i$ and $\tilde{G}_i$:
\begin{claim}\label{claim:simple_focs}
    For $i \in I'$, $\frac{\sum_{v \in V_i} d_{i,v}^2}{|E_i|} \le \frac{\sum_{v \in \tilde{V}_i} \tilde{d}_{i,v}^2}{|\tilde{A}_i|} + 1 + 4d\left(\eta + \frac{3\alpha}{\tau}\right)$.
\end{claim}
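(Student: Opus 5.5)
The plan is a direct vertex-by-vertex comparison of $\sum_{v\in V_i} d_{i,v}^2$ with $\sum_{v\in\tilde V_i}\tilde d_{i,v}^2$. The error is charged to a small set of ``defective'' edges of $G_i$. Call an edge of $E_i$ defective if it lies in $E_i\setminus\tilde E_i$, or if it lies in $\tilde E_i$ but its endpoint outside $\tilde V_i$ belongs to the set $R$ from the proof of Claim~\ref{claim:S_i}. Every non-defective edge is an edge of $\tG_i$, so $E_i$ minus the defective edges is exactly the edge set $\tilde A_i$ of $\tG_i$. The first step is to bound the number $D_i$ of defective edges. Lemma~\ref{lem:monochrome} gives $|E_i\setminus\tilde E_i|\le\eta|E_i|$, and the proof of Claim~\ref{claim:S_i} gives at most $m_R\le 2\alpha d|\tilde V_i|$ edges of $\tilde E_i$ into $R$. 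Since $\tilde V_i\subseteq U_{c_i}$, Proposition~\ref{prop:balanced} gives $|\tilde V_i|\le 1.1n/k$. Since $i\in I'$, we have $|E_i|\ge\tau nd/k$, hence $\alpha d|\tilde V_i|\le 1.1\,\frac{\alpha}{\tau}|E_i|$. Altogether
\[
D_i\le \eta|E_i| + 2.2\,\tfrac{\alpha}{\tau}|E_i|.
\]

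Next I will split $\sum_{v\in V_i}d_{i,v}^2$ into the terms with $v\in\tilde V_i$ and the remaining terms. For $v\in\tilde V_i$, the edges at $v$ in $G_i$ that are not in $\tG_i$ are all defective, so $d_{i,v}-\tilde d_{i,v}$ is at most the number of defective edges at $v$. Using $d_{i,v}^2-\tilde d_{i,v}^2\le 2d\,(d_{i,v}-\tilde d_{i,v})$, this part is at most $\sum_{v\in\tilde V_i}\tilde d_{i,v}^2 + 2d\,D_i$. For $u\notin\tilde V_i$, each edge of $G_i$ at $u$ is either a star edge of $\tG_i$ (at most one, and only if $u\in\tilde S_i$) or defective; there are no other edges to $\tilde V_i$ by construction. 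If $u\in\tilde S_i$ has no defective edges, then $d_{i,u}=1$, and summing these contributions gives at most $|\tilde S_i|\le|E_i|$; this is the source of the ``$+1$''. In every other case $u$ has at least one defective edge, so $d_{i,u}\le 2\cdot(\#\text{defective edges at }u)$, and hence $d_{i,u}^2\le d\,d_{i,u}\le 2d\cdot(\#\text{defective edges at }u)$. Each defective edge has at most one endpoint outside $\tilde V_i$ counted here with that edge, up to a factor 2, so this part is $O(d\,D_i)$.

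Combining the two parts gives
\[
\sum_{v\in V_i}d_{i,v}^2\le \sum_{v\in\tilde V_i}\tilde d_{i,v}^2 + |E_i| + c\,d\,D_i
\]
for a small absolute constant $c$. Dividing by $|E_i|$ and using $|\tilde A_i|=|\tilde S_i|\le|E_i|$, so that $\frac{1}{|E_i|}\le\frac{1}{|\tilde A_i|}$ in the first term, yields the claimed form with error $1+O(d)(\eta+\alpha/\tau)$. The main obstacle is the constant bookkeeping needed to land exactly on $4d(\eta+3\alpha/\tau)$. I will tighten the double counting: each defective edge has exactly one endpoint in $\tilde V_i$ if it is in $\tilde E_i$, so it is charged once on each side with weight $\le d$, which gives $2d$ per defective edge from the two sides together. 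This yields a total of roughly $2d\,D_i\cdot 2\le 4d(\eta+2.2\alpha/\tau)$, within the stated bound. If the tightening falls short, I will absorb the slack using $\eta=\tau$ and $\alpha\le\tau^2$.
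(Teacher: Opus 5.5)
Your proposal is correct and is essentially the paper's proof: it uses the same pointwise estimate $d_{i,v}^2-\tilde d_{i,v}^2\le 2d\,(d_{i,v}-\tilde d_{i,v})$, gets the $+1$ from the same source (the degree-one leaves $\tilde S_i$ of the star forest $\tilde G_i$), and bounds the lost edges the same way, by $\eta|E_i|+2\alpha d|\tilde V_i|\le(\eta+2.2\alpha/\tau)|E_i|$ via Lemma~\ref{lem:monochrome}, Claim~\ref{claim:S_i}, $|\tilde V_i|\le 1.1n/k$ and $|E_i|\ge\tau nd/k$. The paper applies the pointwise estimate at every vertex of $V_i$, leaves included, and uses $\sum_{v\in V_i}(d_{i,v}-\tilde d_{i,v})=2(|E_i|-|\tilde A_i|)$, which gives the factor $4d$ at once without your defective-edge case split; your careful count also gives exactly $4d\,D_i$ (charge $2d$ at each endpoint of each defective edge), so no fallback is needed, and the non-defective edges are only contained in $\tilde A_i$, not equal to it, which is all you actually use.
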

\begin{proof}
For any $v \in V_i$, we have $0 \le \tilde{d}_{i,v} \le d_{i,v} \le d$. Expanding the square gives:
\[ \tilde{d}_{i,v}^2 = (d_{i,v} - (d_{i,v} - \tilde{d}_{i,v}))^2 \ge d_{i,v}^2 - 2d_{i,v}(d_{i,v} - \tilde{d}_{i,v}) \ge d_{i,v}^2 - 2d(d_{i,v} - \tilde{d}_{i,v}). \]
Summing over all $v \in V_i$, the sum of degrees in $G_i$ is $2|E_i|$ and in $\tilde{G}_i$ is $2|\tilde{A}_i|$. Thus $\sum (d_{i,v} - \tilde{d}_{i,v}) = 2(|E_i| - |\tilde{A}_i|)$.
Substituting this, we obtain $\sum \tilde{d}_{i,v}^2 \ge \sum d_{i,v}^2 - 4d(|E_i| - |\tilde{A}_i|)$.
Dividing by $|E_i|$ yields:
\[ \frac{\sum d_{i,v}^2}{|E_i|} \le \frac{\sum \tilde{d}_{i,v}^2}{|E_i|} + 4d\frac{|E_i| - |\tilde{A}_i|}{|E_i|}. \]
Because $|\tilde{A}_i| \le |E_i|$, replacing the denominator safely upper-bounds the fraction, yielding:
\[ \frac{\sum d_{i,v}^2}{|E_i|} \le \frac{\sum \tilde{d}_{i,v}^2}{|\tilde{A}_i|} + 4d\frac{|E_i| - |\tilde{A}_i|}{|E_i|}. \]
In the bipartite graph $\tilde{G}_i$, all edges are exclusively between $\tilde{V}_i$ and $\tilde{S}_i$. Since every vertex in $\tilde{S}_i$ has degree exactly $1$, we have $\sum_{v \in \tilde{S}_i} \tilde{d}_{i,v}^2 = |\tilde{S}_i| = |\tilde{A}_i|$. 
Thus, $\frac{\sum \tilde{d}_{i,v}^2}{|\tilde{A}_i|} = \frac{\sum_{\tilde{V}_i} \tilde{d}_{i,v}^2 + |\tilde{A}_i|}{|\tilde{A}_i|} = \frac{\sum_{\tilde{V}_i} \tilde{d}_{i,v}^2}{|\tilde{A}_i|} + 1$.

For the error term, Claim~\ref{claim:S_i} implies $|\tilde{A}_i| = |\tilde{S}_i| \ge |\tilde{E}_i| - 2\alpha d|\tilde{V}_i|$. Since $|\tilde{E}_i| \ge (1-\eta)|E_i|$, we have $|E_i| - |\tilde{A}_i| \le \eta |E_i| + 2\alpha d|\tilde{V}_i|$.
Because we explicitly restricted our evaluation to clusters $i \in I'$ which   satisfy $|E_i| \ge \tau \frac{nd}{k}$, and we bounded $|\tilde{V}_i| \le \frac{1.1n}{k}$, we cleanly bound the fraction:
\[ \frac{|E_i| - |\tilde{A}_i|}{|E_i|} \le \eta + \frac{2\alpha d(1.1n/k)}{\tau nd/k} = \eta + \frac{2.2\alpha}{\tau} \le \eta + \frac{3\alpha}{\tau}. \]
Substituting this completes the proof.
\end{proof}

From Lemma~\ref{lem:avgStruct}, since $I' \subseteq I$, for all $i \in I'$ we have $\frac{\sum_{V_i} d_{i,v}^2}{|E_i|} \ge d(1-\omega)$. Substituting our reduction from Claim~\ref{claim:simple_focs}, and noting that $|\tilde{A}_i| = \sum_{v \in \tilde{V}_i} \tilde{d}_{i,v}$, we isolate the ratio:
\[ \frac{\sum_{\tilde{V}_i} \tilde{d}_{i,v}^2}{\sum_{\tilde{V}_i} \tilde{d}_{i,v}} \ge d(1-\omega) - 1 - 4d\left(\eta + \frac{3\alpha}{\tau}\right) \ge d\left(1 - \omega - 4\eta - \frac{12\alpha}{\tau} - \frac{1}{d}\right). \]
Let $\theta := \omega + 4\eta + \frac{12\alpha}{\tau} + \frac{1}{d}$. We   lower bound this by $d(1-\theta)$. This directly implies:
\begin{equation}
    \sum_{v\in \tilde{V}_i}\td_{i,v} (\td_{i,v}-d(1-{\theta})) \ge 0. \label{eqregood}
\end{equation}

We  partition $\tV_i := \tV_i^+ \dot\cup L_i \dot\cup M_i \dot\cup R_i$ based on induced degree in $\tG_i$:
\begin{itemize}
    \item $\tV_i^+$: $\td_v \ge d(1-\theta)$.
    \item $L_i$: $d(1-2\sqrt{\theta}) \le \td_v < d(1-\theta)$.
    \item $M_i$: $2\sqrt{\theta}d < \td_v < d(1-2\sqrt{\theta})$.
    \item $R_i$: $\td_v \le 2\sqrt{\theta}d$.
\end{itemize}
Equation \eqref{eqregood} requires the positive contributions to outweigh the negative ones:
\begin{align}
\sum_{\tV_i^+}\td_v (\td_v-d(1-{\theta})) \ge \sum_{L_i \cup M_i \cup R_i}\td_v (d(1-{\theta})-\td_v).
\label{eqreregood}
\end{align}

Since $\td_v \le d$, the maximum positive contribution per vertex in $\tV_i^+$ is $d(d - d(1-\theta)) = \theta d^2$. Thus, the LHS is bounded by $|\tV_i^+| \theta d^2$.

For $v \in M_i$, the quadratic $\td_v(d(1-\theta) - \td_v)$ is a downward-facing parabola minimized at the endpoints of the interval $[2\sqrt{\theta}d, d(1-2\sqrt{\theta})]$. At either endpoint, the value is   at least $\frac{1}{2}\sqrt{\theta}d^2$. Thus, the penalty from $M_i$ is at least $|M_i| \frac{1}{2}\sqrt{\theta}d^2$. 
Comparing this to the LHS bound yields $|M_i| \frac{1}{2}\sqrt{\theta}d^2 \le |\tV_i^+| \theta d^2 \implies |M_i| \le 2\sqrt{\theta} |\tV_i^+|$.

We define $|\tilde A(S)| = \sum_{v \in S} \td_v$ as the number of edges in $\tG_i$ incident to a subset $S$. 
To bound $|\tilde A(M_i)|$, we choose $S_i \subseteq \tV_i^+$ to be the $|M_i|$ vertices in $\tV_i^+$ with the \emph{smallest} degrees. Because $S_i \subseteq \tV_i^+$, every $u \in S_i$ satisfies $\td_u \ge d(1-\theta) > \td_v$ for all $v \in M_i$. Bijecting $M_i$ to $S_i$, and noting that the average degree of the smallest elements is strictly less than the overall average, we obtain:
\[ |\tilde A(M_i)| = \sum_{v \in M_i} \td_v < \sum_{u \in S_i} \td_u \le |S_i| \frac{\sum_{\tV_i^+} \td_v}{|\tV_i^+|} = \frac{|M_i|}{|\tV_i^+|} |\tilde A(\tV_i^+)| \le 2\sqrt{\theta} |\tilde A(\tV_i^+)|. \]

For $R_i$, the penalty is $\td_v(d(1-\theta) - \td_v) \ge \td_v d(1-\theta - 2\sqrt{\theta}) \ge d(1-3\sqrt{\theta}) \td_v$. Summing over $R_i$ gives $d(1-3\sqrt{\theta}) |\tilde A(R_i)|$. 
The LHS of \eqref{eqreregood} is upper bounded by $d\theta \sum_{\tV_i^+} \td_v = \theta d |\tilde A(\tV_i^+)|$. Thus:
\[ d(1-3\sqrt{\theta}) |\tilde A(R_i)| \le \theta d |\tilde A(\tV_i^+)| \implies |\tilde A(R_i)| \le \frac{\theta}{1-3\sqrt{\theta}} |\tilde A(\tV_i^+)| \le 2\theta |\tilde A(\tV_i^+)|. \]

Summing the components, the total edges in $\tG_i$ is bounded by:
\begin{align*}
|\tilde A_i| &= |\tilde A(\tV_i^+)| + |\tilde A(L_i)| + |\tilde A(M_i)| + |\tilde A(R_i)| \\
&\le |\tilde A(\tV_i^+)| + |\tilde A(L_i)| + 2\sqrt{\theta} |\tilde A(\tV_i^+)| + 2\theta |\tilde A(\tV_i^+)| \\
&\le (1+4\sqrt{\theta}) \left( |\tilde A(\tV_i^+)| + |\tilde A(L_i)| \right).
\end{align*}

Define $U_i = \tV_i^+ \cup L_i$, and let $\tilde U = \bigcup_{i \in I'} U_i$. The number of edges in $\tG_i$ incident to $U_i$ is exactly $|\tilde A(U_i)| = |\tilde A(\tV_i^+)| + |\tilde A(L_i)|$. Thus, $|\tilde A_i| \le (1+4\sqrt{\theta}) |\tilde A(U_i)|$.

We relate the edge mass of $\tilde U$ directly to the global graph. Because $\tG_i$ is a disjoint star forest centered at $\tilde V_i$, and the edges $\tilde E_i$ represent disjoint subsets of the global edges $E$ (due to the distinct colors), there is absolutely no double-counting of edges across the sets $U_i$. Thus $\tilde U$ perfectly covers at least $\sum_{i \in I'} |\tilde A(U_i)|$ distinct edges in $G$. 

Summing $|\tilde A_i|$   over $I'$ and applying the  bound established in Claim~\ref{claim:simple_focs} (where $\frac{|E_i| - |\tilde{A}_i|}{|E_i|} \le \eta + \frac{3\alpha}{\tau} \implies |\tilde{A}_i| \ge |E_i| \left( 1 - \eta - \frac{3\alpha}{\tau} \right)$), we obtain:
\[ (1+4\sqrt{\theta}) |\tilde A(\tilde U)| \ge \sum_{i \in I'} |\tilde A_i| \ge \sum_{i \in I'} |E_i| \left(1 - \eta - \frac{3\alpha}{\tau}\right). \]
Using our safely retained edge mass bound $\sum_{i \in I'} |E_i| \ge (1-39\delta^2)|E|$ established at the beginning of this proof, we seamlessly bound the sum:
\[ \sum_{i \in I'} |\tilde A_i| \ge \left(1 - \eta - \frac{3\alpha}{\tau}\right) (1-39\delta^2)|E| \ge \left(1 - \eta - \frac{3\alpha}{\tau} - 39\delta^2\right)|E|. \]

Dividing by $(1+4\sqrt{\theta})$ and utilizing the algebraic identity $\frac{1-X}{1+Y} \ge 1 - X - Y$ (valid for $X,Y \ge 0$), we cleanly lower bound the covered edges without  inflating the error terms:
\[ |\tilde A(\tilde U)| \ge \frac{1 - \eta - 3\alpha/\tau - 39\delta^2}{1+4\sqrt{\theta}}|E| \ge \left(1 - \eta - \frac{3\alpha}{\tau} - 39\delta^2 - 4\sqrt{\theta}\right)|E|. \]

Finally, we upper bound the size of $\tilde U$. For every $i \in I'$ and $v \in U_i$, its induced degree in $\tG_i$ satisfies $\tilde d_{i,v} \ge d(1-2\sqrt{\theta})$. Since $\theta \approx \delta^3 \ll 1$, we safely have $1-2\sqrt{\theta} > 1/2$. Because the total degree of any vertex in the base graph $G$ is exactly $d$, it is impossible for a vertex to have strictly more than $d/2$ incident edges assigned to two different clusters. Thus, the sets $U_i$ are explicitly pairwise disjoint. 
This guarantees that $|\tilde U| = \sum_{i \in I'} |U_i|$ and that the total number of edges covered by $\tilde U$ across all $\tG_i$ evaluates exactly to $\sum_{v \in \tilde U} \tilde d_{i,v} = |\tilde A(\tilde U)|$.
Since the maximum number of edges in $G$ is $|E| = nd/2$, we have:
\[ d(1-2\sqrt{\theta}) |\tilde U| \le \sum_{v \in \tilde U} \tilde d_{i,v} = |\tilde A(\tilde U)| \le |E| = \frac{nd}{2}. \]
Rearranging this yields:
\[ |\tilde U| \le \frac{n}{2(1-2\sqrt{\theta})} \le \frac{n}{2}(1+4\sqrt{\theta}) = \frac{n}{2} + 2\sqrt{\theta}n. \]

Thus, we have found a subset $\tilde{U}$ of size at most $\frac{n}{2} + 2\sqrt{\theta}n$ that covers at least $(1 - \eta - 3\alpha/\tau - 39\delta^2 - 4\sqrt{\theta})|E|$ edges. 
To extract an exactly $n/2$-sized subset, we arbitrarily discard at most $2\sqrt{\theta}n$ vertices from $\tilde U$. Since the maximum degree in the base graph $G$ is $d$, discarding them sacrifices at most $2\sqrt{\theta}nd = 4\sqrt{\theta}|E|$ edges. If $|\tilde U| < n/2$, we pad it arbitrarily without losing edges.
The final remaining exactly $n/2$-sized subset   covers at least $(1 - \eta - 3\alpha/\tau - 39\delta^2 - 8\sqrt{\theta})|E|$.

As established earlier, $\tau = \delta^5$, $\eta = \delta^5$, $\alpha = \delta^{10}$, and $\omega = \delta^3$. Thus, $\alpha/\tau = \delta^5$, and $\theta = \delta^3 + 4\delta^5 + 12\delta^5 + 1/d$.
Thus, $\theta \le \delta^3 + 17\delta^5$. We bound the square root by $\sqrt{\theta} \le \delta^{1.5}\sqrt{1+17\delta^2} \le \delta^{1.5}(1+8.5\delta^2) = \delta^{1.5} + 8.5\delta^{3.5}$.
Then $8\sqrt{\theta} \le 8\delta^{1.5} + 68\delta^{3.5}$.

The total missed fraction is   bounded by:
\[ \text{Missed} \le \delta^5 + 3\delta^5 + 39\delta^2 + 8\delta^{1.5} + 68\delta^{3.5} \le 10\delta^{1.5}, \]
where the final inequality holds because Theorem~\ref{thm:lb_k_n} enforces $\delta \le 10^{-3}$.
\end{proof}

\section{Lower Bounds for Exact Algorithms: Proof of Corollary~\ref{cor:exact}}
 \label{sec:exact}
 In this section, we prove Corollary \ref{cor:exact}. Suppose that there is an algorithm solving exact $k$-means in $\R^d$ in times $T_{exact}(n,k,d)$.
The reduction proceeds as follows. Given a point set $P$, we compute a $\alpha$-coreset $\Omega$  consisting of $\text{poly}(k/\alpha)$ points. The definition of a $\alpha$-coreset ensures that, for any possible solution $S$, $\cost(P, S) = (1\pm \alpha)\cost(\Omega, S)$. In particular,  an optimal clustering on $\Omega$ is a $(1+\alpha)$-approximate clustering for $P$. The running time for that procedure is $T_{core}(n,d,k,\alpha^{-1}) = O(nkd)$ \cite{Cohen-AddadSS21}.

Next, we apply a Johnson-Lindenstrauss transformation on $\Omega$. Specifically, we use
Theorem 1.3 from \cite{MakarychevMR19}, which states that any set of $n$ points may be embedded into $O(\alpha^{-2}\log k/\alpha)$ dimensions such that for any partition into $k$ clusters, 
the $k$-means cost of the clustering is preserved up to a $(1\pm\alpha)$ factor. The running time for that dimension-reduction is $T_{JL}(n,d,k,\alpha^{-1}) = O(nd \log(k/\alpha))$.

Thus, by applying a coreset algorithm in time $T_{core}(n,d,k,\alpha^{-1})$, then a Johnson-Lindenstrauss transform in time $T_{JL}(|\Omega|,d,k,\alpha^{-1})$, and solving the resulting $k$-means instance exactly in time $T_{exact}(|\Omega|,k,\alpha^{-2}\log (k/\alpha))$, the overall running time for the algorithm is $T_{exact}(|\Omega|,k,\alpha^{-2}\log (k/\alpha))+T_{JL}(|\Omega|,d,k)+T_{core}(n,d,k)$.
We have $T_{core}(n,d,k,\alpha^{-1})+T_{JL}(|\Omega|,d,k,\alpha^{-1}) = O(nkd)$. 
Thus, from \cref{thm:lb_k}, it must hold $T_{exact}(|\Omega|,k,\alpha^{-2}\log k/\alpha)=2^{(k/\alpha)^{1-o(1)}}$.

Assume there exists an algorithm running in time $n^{(k\sqrt{d})^{1-\beta}}$. Running this algorithm on $\Omega$ would result in a running time of
\begin{align*}
 |\Omega|^{(k\sqrt{d})^{1-\beta}} &= 2^{(k\sqrt{d})^{1-\beta}\cdot \log |\Omega|} \\
 &= 2^{(k/\alpha)^{1-\beta}\cdot \polylog(k/\alpha)} \\
 &\le 2^{(k/\alpha)^{1-\beta/2}}.
\end{align*}

This contradicts \cref{thm:lb_k}, and therefore the running time of the exact algorithm must be $n^{(k\sqrt{d})^{1-o(1)}}$. This proves the first part of \Cref{cor:exact}.

To prove the second part, we use a different dimension-reduction: in time $\poly(n, d)$, one can reduce the dimension to $O(k/\alpha)$ \cite{cohen2015dimensionality}  while preserving the cost of any clustering. Therefore, the same argument as above enforces that $T_{exact}(|\Omega|, k, k/\alpha) \geq 2^{(\frac{k}{\alpha})^{1-o(1)}}$.

Suppose that $T_{exact}(n, k, d) = n^{d^{1-\beta}}$. Then,  
\begin{align*}
 T_{exact}(|\Omega|, k, k/\alpha) &= |\Omega|^{(k/\alpha)^{1-\beta}}\\
 &= 2^{(k/\alpha)^{1-\beta}\cdot \log |\Omega|}\\
 &= 2^{(k/\alpha)^{1-\beta}\cdot \polylog(k/\alpha))}\le 2^{(k/\alpha)^{1-\beta/2}},
\end{align*}
 which contradicts again \cref{thm:lb_k}. This concludes the proof of \Cref{cor:exact}.

\subsubsection*{Acknowledgements}
We thank Dor Minzer, Euiwoong Lee, and Pasin Manurangsi for several discussions that helped conceptualize the proof approach in this paper.

\bibliographystyle{alpha}
\bibliography{biblio}{}

\end{document}